\pgfplotsset{width=7cm,compat=1.9}
\newtheorem{theorem}{Theorem}
\newtheorem{lemma}{Lemma}
\newtheorem{corollary}{Corollary}
\newtheorem{observation}{Observation}
\newtheorem{definition}{Definition}
\newtheorem{proposition}{Proposition}
\newcommand{\renyi}{R$\mathrm{\acute{e}}$nyi }
\newcommand{\renyia}{R$\mathrm{\acute{e}}$nyi-$\alpha$ }
\newcommand{\renyiaa}{S$\textrm{R}_{\alpha}$E}
\newcommand{\renyih}{S$\textrm{R}_{\frac{1}{2}}$E}
\newcommand{\renyit}{S$\textrm{R}_{2}$E}
\newcommand{\kb}[1]{\ket{#1}\bra{#1}}
\newcommand{\id}{\mathbb{I}}
\newcommand{\comments}[1]{}
\newcommand{\mc}{\mathcal}
\newcommand{\mb}{\mathbf}
\newcommand{\mbb}{\mathbb}
\newcommand{\mr}{\mathrm}
\begin{document}

\title{Magic of quantum hypergraph states}


\author{Junjie Chen}
\affiliation{Center for Quantum Information, Institute for Interdisciplinary Information Sciences, Tsinghua University, Beijing 100084, China}
\author{Yuxuan Yan}
\affiliation{Center for Quantum Information, Institute for Interdisciplinary Information Sciences, Tsinghua University, Beijing 100084, China}
\author{You Zhou~\orcidlink{0000-0003-0886-077X}}
\email{you\_zhou@fudan.edu.cn}
\affiliation{Key Laboratory for Information Science of Electromagnetic Waves (Ministry of Education), Fudan University, Shanghai 200433, China}
\affiliation{Hefei National Laboratory, Hefei 230088, China}

\begin{abstract}
Magic, or nonstabilizerness, characterizes the deviation of a quantum state from the set of stabilizer states, playing a fundamental role in quantum state complexity and universal fault-tolerant quantum computing. However, analytical and numerical characterizations of magic are very challenging, especially for multi-qubit systems, even with a moderate qubit number. Here, we systemically and analytically investigate the magic resource of archetypal multipartite quantum states---quantum hypergraph states, which can be generated by multi-qubit controlled-phase gates encoded by hypergraphs. We first derive the magic formula in terms of the stabilizer R$\mathrm{\acute{e}}$nyi-$\alpha$ entropies for general quantum hypergraph states. If the average degree of the corresponding hypergraph is constant, we show that magic cannot reach the maximal value, i.e., the number of qubits $n$. Then, we investigate the statistical behaviors of random hypergraph states' magic and prove a concentration result, indicating that random hypergraph states typically reach the maximum magic. This also suggests an efficient way to generate maximal magic states with random diagonal circuits. Finally, we study hypergraph states with permutation symmetry, such as $3$-complete hypergraph states, where any three vertices are connected by a hyperedge. Counterintuitively, such states can only possess constant or even exponentially small magic for $\alpha\geq 2$. Our study advances the understanding of multipartite quantum magic and could lead to applications in quantum computing and quantum many-body physics.
\end{abstract}

\maketitle

\section{Introduction}

Stabilizer formalism and Clifford group are fundamental building blocks of quantum information science, especially for quantum error-correcting codes \cite{Terhal2015QECC}. Quantum states beyond stabilizer formalism own nonstabilizerness or so-called ``magic'' \cite{Bravyi2005Magic,veitch2014resource}, a resource that enables universal fault-tolerant quantum computing \cite{campbell2017roads} via the magic-state-injection approach \cite{Bravyi2005Magic,gottesman1999demonstrating}. Meanwhile, magic also characterizes complexity beyond entanglement \cite{Horodecki2007Entanglement}. For example, Clifford circuits, creating volume-law entangled states, are classically simulatable via Gottesman-Knill theorem \cite{knill2005quantum}. With the additional magic content, the classical simulation cost scales exponentially \cite{Aaronson2004ImprovedGKT}, and faster simulation algorithms were proposed by quantifying magic more exquisitely \cite{Bravyi2016Trading,Bravyi2016Improved,Bravyi2019simulation,Pashayan2015neg,Howard2017Resource,Seddon2021Speedups}. Along this line, there is an increasing interest in the role of magic in quantum many-body physics, as a significant supplement for entanglement \cite{Amico2008Entanglement}, with topics ranging from phases of quantum matter, e.g., topological order \cite{Sarkar2020NJP,ellison2021symmetry,Liu2022manybodymagic} to non-equilibrium dynamics \cite{Shiyu2020Entanglemagic, haferkamp2020quantum,true2022transitions,Sewell2022Manathermal}, e.g., quantum chaos \cite{Leone2021Chaos,goto2022probing,garcia2023resource} and black-hole physics \cite{White2021CFTmagic,leone2022learning, leone2022retrieving}. As such, investigating quantum magic can fertilize quantum computing and deepen our understanding of quantum complexity.

Quantum magic has been extensively studied in the resource framework with various measures proposed 
\cite{veitch2012negative,Seddon2019channel,Beverland2020NJP,wang2019quantifying,wang2020efficiently,Haug2022Bellmagic,saxena2022quantifying,bu2023quantum}. Most of these measures involve optimization in the stabilizer polytope, which is extremely challenging for multi-qubit systems analytically and numerically \cite{Bravyi2016Trading,Howard2017Resource,Heinrich2019robustnessofmagic}. To address this issue, Stabilizer \renyi Entropy (SRE) was recently proposed as a faithful indicator of magic \cite{Leone2022SRE,leone2024stabilizer}, which explicitly quantifies magic by the weight distribution of the state projected to all Pauli strings. The simplicity of SRE triggers a series of interesting studies, whose topics include magic measurement protocol \cite{oliviero2022measuring, Haug2022Bellmagic,haug2023efficient}, the complexity of wave functions \cite{oliviero2022magic,odavic2022complexity}, quantum information dynamics \cite{rattacaso2023stabilizer,piemontese2023entanglement}, and learning theory \cite{leone2023learningTdope,leone2023nonstabilizerness}. 
Even though SREs enable an explicit computation and experimental realization, the evaluation cost scales exponentially with the number of qubits, in general, \cite{oliviero2022measuring}, thereby hindering its application towards large-scale systems. There are some positive progresses using matrix-product-state \cite{Haug2022MPSmagic,lami2023quantum,haug2023stabilizer,tarabunga2023many}. However, the target states are limited to low-entanglement ones, and the methods are mainly for numerical purposes. Till now, SRE and magic have been unexplored for complex many-qubit states with extensive entanglement. 

In this work, we significantly extend the scope of magic quantification to large-scale and highly entangled states. In particular, we systemically and analytically investigate the magic of quantum hypergraph states \cite{Rossi2013Hyper,Qu2013hypergraph}, which are generalized from graph states \cite{Raussendorf2001onewayQC,Raussendorf2003MBQC}. Unlike graph states, which are generated by Clifford gates and lack quantum magic, hypergraph states are capable of magic and play an essential role in quantum advantage protocols \cite{Bremner2016IQP}, measurement-based quantum computing (with Pauli measurements) \cite{Miller2016Hierarchy,Takeuchi2019PauliMBQC} and topological order \cite{Levin2012Braiding,Yoshida2016Topological,Miller2018Latent}. According to the indices of all Pauli strings, we relate the magic in terms of SRE to a family of induced hypergraphs from the original one. This pictorial expression enables a series of analytical findings as follows. We first show a general upper bound of magic for any hypergraph state with a bounded average degree, for instance, ones whose hypergraphs are defined on lattices like Union-Jack one \cite{Miller2018Latent}. We further develop general theories that transform the statistical properties of magic into a series of counting problems in the binary domain. Our theories lead to the concentration result that the magic of hypergraph states is typically large and very near the maximal value, showing similar behavior to the unphysical Haar random states \cite{Leone2022SRE,Liu2022manybodymagic,white2020mana}. In addition, we analyze the magic of quantum hypergraph states with permutation symmetry. Based on the symmetry simplification and pictorial derivation, we obtain exact analytical results of the stabilizer \renyia entropy (\renyiaa) for different $\alpha$'s, and in particular, find that S$\textrm{R}_{2}$E and S$\textrm{R}_{\frac{1}{2}}$E can be exponentially different for these states. Specifically, S$\textrm{R}_{\frac{1}{2}}$E serves as a lower bound for the robustness of magic, which operationally quantifies the overhead in classical simulation when using ancillary magic states \cite{Howard2017Resource, Leone2022SRE}. Our findings and the developed techniques can advance further investigations of multipartite quantum magic with applications from quantum computing to quantum many-body physics, where especially hypergraph states can serve as an archetypal class of complex states and tractable toy models of other complex quantum systems.

\section{Preliminaries}

\subsection{Quantum hypergraph states}

Graph states are widely recognized for their well-defined structure and robust entanglement capabilities, making them popular in various research fields such as quantum entanglement \cite{Hein2004Multiparty}, quantum computing \cite{Raussendorf2001onewayQC,Raussendorf2003MBQC}, and quantum error correction \cite{bell2014experimental}. Despite their numerous applications, graph states lack quantum magic and are unsuitable for demonstrating quantum advantage in certain scenarios. To overcome this limitation, the extension from graph states to hypergraph states has been proposed \cite{Rossi2013Hyper}. This generalization endows them with the necessary quantum magic while retaining their clear structural advantage, expanding the applicability in quantum computing \cite{Bremner2016IQP,Miller2016Hierarchy}.

Here we give a brief introduction to quantum hypergraph states \cite{Kruszynska2009Hyper,Rossi2013Hyper,Qu2013hypergraph}, which is a generalization of graph states. A hypergraph $G=(V,E)$ consists of a vertex set $V=\{v_i|i\in[n]\}$ and a hyperedge set $E=\{e_j|j\in[m],e_j\subseteq V,e_j\neq\emptyset\}$. Here $[n]=\{1,2,\cdots,n\}$ and $[m]=\{1,2,\cdots,m\}$ are the index sets of vertices and hyperedges, respectively. A hyperedge $e$ can connect $c\geq 1$ vertices, denoted as $|e|=c$, and we call it a $c$-edge. There are totally $\sum_{c=1}^n \binom{n}{c}=2^n-1$ possible hyperedges. If $\forall e\in E$, $|e|=c$, the hypergraph is called $c$-uniform hypergraph. And a $c$-complete hypergraph contains all such $c$-edges. See Fig.~\ref{fig:inducedG} (a) for an instance of a hypergraph with six vertices and four hyperedges.

\begin{figure}[tbhp!]
\centering
\resizebox{8.5cm}{!}{\includegraphics[scale=0.8]{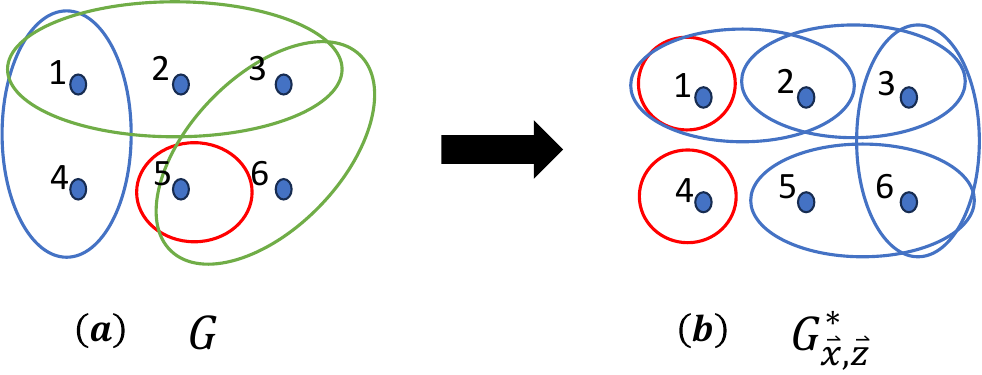}}
\caption{(a) A hypergraph with six vertices. We use big circles to label hyperedges and small points to label vertices. There are four hyperedges, $\{1,2,3\},\{3,5,6\},\{1,4\},\{5\}$, and the first two of them are $3$-edge.  (b) shows the induced hypergraph $G^*_{\vec{x},\vec{z}}$ according to Eq.~\eqref{eq:Gin} and \eqref{eq:GinSim}. Here two 6-bit strings are chosen as $\vec{x}=\{1,0,1,0,1,0\}$ and $\vec{z}=\{1,0,0,1,0,0\}$. For example, the 2-edge $\{1,2\}$ is induced from the $3$-edge $\{1,2,3\}$ of $G$ in (a) with $x_3=1$, according to the edge set $E^{(2)}_{\vec{x}}$ in Eq.~\eqref{eq:Gin}; the 1-edge $\{4\}$ is directly by $z_4=1$ according to $E^{(1)}_{\vec{z}}$ in Eq.~\eqref{eq:GinSim}.}  \label{fig:inducedG} 
\end{figure}

We use the label $e\setminus e'$ to denote the minus operation of two edges with $e'\subset e$, and specifically $e \setminus \{v\}$ is a new edge without the vertex $v$. 
The neighbor set of a vertex $v_i$ contains all the vertices $v_j$ connected to $v_i$ by some hyperedge, and the degree $\Delta(v_i)$ is the cardinality of this set. The average degree of a hypergraph $G$ is defined as $\bar{\Delta}(G)=n^{-1}\sum_{i=1}^n\Delta(v_i)$.

Associate each vertex $v_i$ with a qubit, i.e., a local Hilbert space $\mathcal H_i\simeq\mathbb{C}^2$, and the total Hilbert space is $\mathcal H=\otimes_{i=1}^n \mathcal {H}_i$ with the total dimension $d=2^n$. Apply quantum gates on the qubits associated with hyperedges, and one can define the quantum hypergraph state as follows. 
\begin{definition}
    Given a hypergraph $G=(V,E)$ with $n$ vertices, the corresponding quantum hypergraph state of $n$ qubits reads
    \begin{equation}\label{eq:def:hyper}
	\ket{G}:=U(G)\ket{+}^{\otimes n}=\prod_{e\in E}CZ_e\ket{+}^{\otimes n},
    \end{equation}
    where $\ket{+}=\frac1{\sqrt{2}}(\ket{0}+\ket{1})$, the phase unitary $U(G)$ is completely determined by the hypergraph $G$, and $CZ_e=\bigotimes_{v_i\in e}\mathbb{I}_i-2\bigotimes_{v_i\in e}\ket{1}_i\bra{1}$ the generalized Controlled-$Z$ gate acting non-trivially on the support of edge $e$.
\end{definition}

Quantum hypergraph states are generally not traditional stabilizer states since $CZ_e$ are not Clifford gates as $|e|>2$. Fortunately, one can still apply the generalized stabilizer formalism as follows. 
\begin{definition} 
    For a hypergraph state $\ket{G}$ defined in Eq.~\eqref{eq:def:hyper}, it is uniquely determined by the following $n$ independent (generalized) stabilizer generators
    \begin{equation}\label{eq:stabG}
	S_i=X_i\prod_{e\in E, e\ni v_i} CZ_{e\setminus \{v_i\}},\quad i\in[n]
    \end{equation}
    such that $S_i\ket{G}=\ket{G}$.
\end{definition}
Note that here $S_i$ is not necessarily in the tensor-product of single-qubit Pauli operator $\{X_i,Y_i,Z_i\}$ and the identity $\id_i$, but the eigenvalues of $S_i$ are still $\pm 1$. Similar to the graph states and stabilizer states \cite{toth2005detecting}, hypergraph states can be written as the successive projection to the $+1$ space of each generator $S_i$ as 
\begin{equation}\label{eq:stabDecom}
    \kb{G}=\prod_{i=1}^n \frac{\id+S_i}{2}=2^{-n}\sum_{\vec{s}}\prod_{i}S_i^{s_i},
\end{equation}
where $\vec{s}$ is a binary vector of dimension $n$ and the summation is over all possible $\vec{s}$. It is direct to see that any multiplication $\prod_{i}S_i^{s_i}$ is also a stabilizer of $\ket{G}$. 

We show the following observation of a very compact form for any stabilizer of $\ket{G}$, i.e., the multiplication of some Pauli X gates and some phase gates determined by the hypergraph structure. For consistency of the following discussion, we let $CZ_{\emptyset}=-1$.
\begin{observation}\label{ob:beauty}
For an $n$-qubit quantum hypergraph state $\ket{G}$ with the stabilizer generator $S_i$ defined in Eq.~\eqref{eq:stabG}, any stabilizer $\mathrm{St}(G,\vec{s})$ labeled by the vector $\vec{s}$ shows
\begin{equation}\label{eq:beauty}
\begin{aligned}
    \mathrm{St}(G,\vec{s})&:=\prod_{i}S_i^{s_i}
    =\prod_{i}X_i^{s_i}\cdot\prod_{e\in E}\prod_{q\neq \emptyset,q\subseteq v(\vec{s})\cap e}CZ_{e\setminus q},
\end{aligned}
\end{equation}
where $v(\vec{s})$ is the set of vertices $v_i$ with the corresponding $s_i=1$. 
\end{observation}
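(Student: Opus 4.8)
The plan is to prove Eq.~\eqref{eq:beauty} by induction on the number of active vertices $k=|v(\vec s)|$, removing at each step the active vertex of largest index. The only algebraic input is the conjugation rule
\[
X_j\,CZ_e\,X_j=\begin{cases} CZ_e\cdot CZ_{e\setminus\{v_j\}}, & v_j\in e,\\[2pt] CZ_e, & v_j\notin e,\end{cases}
\]
which follows from $CZ_e=\id-2\bigotimes_{v_i\in e}\ket{1}_i\bra{1}_i$ together with $X_j\ket{1}_j\bra{1}_jX_j=\id_j-\ket{1}_j\bra{1}_j$; in the degenerate case $e=\{v_j\}$ it reproduces $X_jZ_jX_j=-Z_j$ under the convention $CZ_\emptyset=-1$. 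Since every $CZ_e$ is diagonal in the computational basis, all the phase factors appearing below commute, so the whole argument is bookkeeping about \emph{which} phase gates occur.

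For the base cases, $k=0$ gives $\mathrm{St}(G,\vec 0)=\id$, matching the empty products on the right-hand side, while $k=1$ is exactly the definition of the generator $S_i$ in Eq.~\eqref{eq:stabG}: for the lone active vertex $v_i$ the inner product runs only over $q=\{v_i\}$, yielding $\prod_{e\ni v_i}CZ_{e\setminus\{v_i\}}$. For the inductive step, let $v_j$ be the active vertex of largest index in $v(\vec s')$ and write $\mathrm{St}(G,\vec s')=\mathrm{St}(G,\vec s)\cdot S_j$, where $\vec s$ is $\vec s'$ with the $j$-th bit reset, so $v(\vec s)=v(\vec s')\setminus\{v_j\}$ (this uses only that $S_j$ is the last nontrivial factor in $\prod_{i}S_i^{s'_i}$). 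Applying the induction hypothesis to $\mathrm{St}(G,\vec s)$ and then commuting the single factor $X_j$ of $S_j$ leftward through the diagonal part via the rule above: since $q\subseteq v(\vec s)\cap e$ and $v_j\notin v(\vec s)$, every $q$ avoids $v_j$, so each factor $CZ_{e\setminus q}$ with $v_j\in e$ turns into $CZ_{e\setminus q}\cdot CZ_{e\setminus(q\cup\{v_j\})}$ and the remaining factors are untouched; the $X$-part becomes $\prod_iX_i^{s_i}\cdot X_j=\prod_iX_i^{s'_i}$ as desired.

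It remains to match the diagonal part. Collecting factors, for each edge $e$ with $v_j\in e$ the resulting phase product is the old $\prod_{\emptyset\neq q\subseteq v(\vec s)\cap e}CZ_{e\setminus q}$, times $CZ_{e\setminus\{v_j\}}$ contributed by $S_j$ itself, times $\prod_{\emptyset\neq q\subseteq v(\vec s)\cap e}CZ_{e\setminus(q\cup\{v_j\})}$ produced by the conjugation; edges not containing $v_j$ keep their old product. One checks this is precisely $\prod_{e\in E}\prod_{\emptyset\neq q'\subseteq v(\vec s')\cap e}CZ_{e\setminus q'}$ by splitting, for each $e\ni v_j$, the nonempty subsets $q'$ of $v(\vec s')\cap e=(v(\vec s)\cap e)\cup\{v_j\}$ into those with $v_j\notin q'$ (which reproduce the old product) and those with $v_j\in q'$, i.e.\ $q'=q\cup\{v_j\}$ with $q\subseteq v(\vec s)\cap e$ possibly empty---the $q=\emptyset$ term giving the extra $CZ_{e\setminus\{v_j\}}$, the nonempty $q$ giving the conjugation terms. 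This establishes the claim for $\vec s'$ and closes the induction.

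The step I expect to demand the most care is this last combinatorial accounting of the subsets $q$: one must verify that the term $q'=\{v_j\}$ is produced exactly once (precisely by $S_j$), and that the $q'=e$ contributions---which are overall $\pm1$ phases---are tracked consistently with $CZ_\emptyset=-1$. Everything else is routine because the phase gates commute and the products telescope cleanly.
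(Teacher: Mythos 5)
Your proof is correct, but it is organized differently from the paper's. The paper first proves, by induction on $|e'|$, the conjugation lemma $\bigl(\prod_{v_i\in e'}X_i\bigr)CZ_e\bigl(\prod_{v_i\in e'}X_i\bigr)=\prod_{q\subseteq e'}CZ_{e\setminus q}$ for a \emph{single} edge $e$, and then exploits the identity $S_i=U(G)X_iU^\dagger(G)$ so that $\prod_i S_i^{s_i}=U(G)X^{\vec s}U^\dagger(G)$ becomes one global conjugation; the $q=\emptyset$ terms produced by the lemma are cancelled against the trailing $\prod_{e}CZ_e$ coming from $U^\dagger(G)$. You instead induct on the number of active generators, peeling off the last factor $S_j$ and pushing its single $X_j$ through the accumulated diagonal part, then matching subsets $q'\ni v_j$ versus $q'\not\ni v_j$. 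Both arguments rest on the same elementary rule $X_jCZ_eX_j=CZ_e\cdot CZ_{e\setminus\{v_j\}}$ and the same convention $CZ_\emptyset=-1$; the paper's route is slightly slicker because the generator identity collapses the product of stabilizers into one conjugation and the $q=\emptyset$ cancellation is automatic, whereas your route avoids invoking $S_i=U(G)X_iU^\dagger(G)$ altogether and keeps the bookkeeping entirely at the level of the generators' own phase tails, at the cost of the somewhat more delicate subset-splitting step at the end (which you carry out correctly, including the degenerate $q'=e$ phases).
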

Observation \ref{ob:beauty} can be proved by recursively using the commuting relation that $X_iCZ_eX_i=CZ_e\cdot CZ_{e\setminus\{v_i\}}$ when $v_i\in e$ and $X_iCZ_eX_i=CZ_e$ when $v_i\notin e$ \cite{guhne2014hypergraph}, which is left in Appendix \ref{App:beauty}. This way, one can finally move the Pauli X gates and phase gates apart. Observation \ref{ob:beauty} could be of independent interest and applied to other studies of hypergraph states. 
Eq.~\eqref{eq:beauty} is very helpful for the following discussions, and for the simplicity of the presentation, hereafter, we always take $q\neq \emptyset$ and $e\in E$ in the product by default. 

\subsection{Quantum magic and stabilizer R\'enyi entropy}
Magic \cite{Bravyi2005Magic,veitch2014resource} quantifies the derivation of a quantum state from the stabilizer states, which is an essential resource for quantum computing complexity and its fault-tolerant realization \cite{Aaronson2004ImprovedGKT,campbell2017roads}. 
Stabilizer R\'enyi entropy (SRE) \cite{Leone2022SRE,oliviero2022measuring} was a recently introduced faithful indicator of magic for (pure) multipartite states defined via the probability distribution from the projection onto the Pauli operators as follows.

A quantum state $\rho$ can be decomposed onto the complete Pauli operator basis, i.e., Pauli-Liouville representation,
\begin{equation}
    \rho=\sum_{P\in\mathcal{P}_n}2^{-n}\Tr{P\rho}P.
\end{equation}
Here we consider $n$-qubit system and $\mathcal{P}_n$ is the Pauli group $\{\id_i,X_i,Y_i,Z_i\}^{\otimes n}$ ignoring the phase, which can be denoted as 
\begin{equation}\label{eq:pauliP}
\begin{aligned}
    P_{\vec{x},\vec{z}}
    =\omega(\vec{x},\vec{z}) X^{\vec{x}} Z^{\vec{z}}.
    \end{aligned}
\end{equation}
Here $\vec{x}$ and $\vec{z}$ are binary vectors of dimension $n$, $X^{\vec{x}}=\bigotimes_{i=1}^n X_i^{x_i}$ and similar for $Z^{\vec{z}}$, and $\omega(\vec{x},\vec{z})=\sqrt{-1}^{\sum_{i=1}^n x_iz_i}$ is some unessential phase.

For a pure state $\rho=\kb{\Psi}$, one can utilize the orthogonality of the Pauli operator basis to express the purity as
\begin{equation}
\begin{aligned}
    \Tr{\rho^2}&=\sum_{P_i,P_j\in\mathcal{P}_n}2^{-2n}\Tr{P_i\rho}\Tr{P_j\rho}\Tr{P_iP_j}\\
    &=\sum_{P_i\in\mathcal{P}_n}2^{-n}\Tr{P_i\rho}^2=1.
\end{aligned}
\end{equation}

In this way, the non-negative terms in the summation of the second line can be regarded as a probability distribution, and S$\textrm{R}_{\alpha}$E of $\ket{\Psi}$ is defined via the \renyia entropy of this distribution.
\begin{equation}
    \mathbf{M}_{\alpha}(\ket{\Psi})=\frac{1}{1-\alpha}\log{\sum_{P\in\mathcal{P}_n}\left(2^{-n}\Tr{P\ket{\Psi}\bra{\Psi}}^2\right)^{\alpha}}-n,
\end{equation}
where the offset $-n$ keeps the magic of stabilizer states to be zero. Hereafter, all the $\log$ functions are base two otherwise specified.
\comments{
Specifically, for $\alpha=2$, S$R_2$E shows
\begin{equation}
\begin{aligned}
    \mathbf{M}_{2}(\ket{\Psi})=-\log{\sum_{P\in\mathcal{P}_n}2^{-n}\left(\Tr{P\ket{\Psi}\bra{\Psi}}\right)^{4}}.
\end{aligned}
\end{equation}
}

As an entropy function defined on the domain $\mathcal{P}_n$ with $d^2$ elements, a direct upper bound shows $\mathbf{M}_{\alpha}(\ket{\Psi})\leq n$. Note also that $\mathbf{M}_{\alpha}$ monotonically decreases with $\alpha$ by the property of \renyi entropy, and thus $\mathbf{M}_{2}$ serves as a lower bound for $\mathbf{M}_{\alpha\leq 2}$. It is shown that its average on Haar random states, $\langle \mathbf{M}_{2}(\ket{\Psi})\rangle_{\Psi\in \mathrm{Haar}} \geq \log_2(2^n+3)-2>n-2$ \cite{Leone2022SRE}, very close to the maximum possible value. We also remark that \renyih\ severs as a lower bound of another important measure, robustness of magic, i.e., $\log(\mathbf{R}(\ket{\Psi}))\geq 1/2\mathbf{M}_{\frac1{2}}(\ket{\Psi})$ \cite{Howard2017Resource}, which could quantify the cost of classical simulation.

For ease of the following discussion, we define the closely related quantity $\alpha$-order \emph{Pauli-Liouville(PL) moment}
as  
\begin{equation}\label{eq:malpha}
    \mathbf{m}_{\alpha}(\ket{\Psi})=2^{-n}\sum_{P\in\mathcal{P}_n}\left(\Tr{P\ket{\Psi}\bra{\Psi}}\right)^{2\alpha},
\end{equation}
and the corresponding \renyiaa\ directly reads 
\begin{equation}\label{eq:SREalpha}
   \mathbf{M}_{\alpha}(\ket{\Psi})=(1-\alpha)^{-1}\log {\mb{m}_{\alpha}(\ket{\Psi})}.
\end{equation}

\section{Stabilizer R\'enyi Entropy of hypergraph states}\label{sec:SREG}
In this section, we show a general formula of the magic for any hypergraph state by relating the PL-moment and, thus, SRE to a family of induced hypergraphs. This pictorial result enables us to find a general upper bound of magic based on the structure of the corresponding hypergraph, which constrains the magic, especially for the hypergraph states on the lattice.

First, let us define a family of hypergraphs $G_{\vec{x},\vec{z}}=(V,E_{\vec{x},\vec{z}})$, which are induced from the original hypergraph $G$. The vertex set $V$ remains the same, and the updated edge set $E_{\vec{x},\vec{z}}$ is determined by two $n$-bit vectors $\vec{x}$ and $\vec{z}$ shown as follows. Hereafter, all the additions are module 2 on the binary domain otherwise specified.
\begin{equation}\label{eq:Gin}
\begin{aligned}
    E_{\vec{x},\vec{z}}=&E^{(1)}_{\vec{x},\vec{z}}\cup E^{(2)}_{\vec{x}},\\
    E^{(1)}_{\vec{x},\vec{z}}=&\left\{e_1=\{v_j\} \Bigg| z_j+\sum_{e\ni v_j}\prod_{i:v_i\in e\setminus\{v_j\}}x_i=1 \right\},\\
    E^{(2)}_{\vec{x}}=&\left\{e_2\subseteq V \Bigg| |e_2|\geq 2,\sum_{e\supset e_2}\prod_{i:v_i\in e\setminus e_2}x_i=1 \right\}.
\end{aligned}
\end{equation}
Here $E^{(1)}_{\vec{x},\vec{z}}\cap E^{(2)}_{\vec{x}}=\emptyset$, and $E^{(1)}_{\vec{x},\vec{z}}$ denotes the $1$-edge set, while $E^{(2)}_{\vec{x}}$ is for the set with $2$ or more cardinality edges. Note that $E^{(2)}_{\vec{x}}$ only depends on $\vec{x}$ , and is irrelevant to $\vec{z}$. Following the definition in Eq.~\eqref{eq:def:hyper}, we denote the phase unitary encoded by this hypergraph $G_{\vec{x},\vec{z}}$ as $U(G_{\vec{x},\vec{z}})$.

Additionally, we define another induced hypergraph by simplifying the 1-edge set of $G_{\vec{x},\vec{z}}$.
\begin{equation}\label{eq:GinSim}
\begin{aligned}
    &G^*_{\vec{x},\vec{z}}=(V,E^{*}_{\vec{x},\vec{z}}),\\
    &E^{*}_{\vec{x},\vec{z}}=E^{(1)}_{\vec{z}}\cup E^{(2)}_{\vec{x}},\ E^{(1)}_{\vec{z}}=\{e_1=\{v_j\} | z_j=1 \},
\end{aligned}
\end{equation}
where $E^{(1)}_{\vec{z}}$ is the simplified 1-edge set, only determined by the index $\vec{z}$ compared to $E^{(1)}_{\vec{x},\vec{z}}$ in Eq.~\eqref{eq:Gin}. See Fig.~\ref{fig:inducedG} (b) for an illustration of an induced hypergraph. The corresponding phase unitary is denoted as $U(G^*_{\vec{x},\vec{z}})$. 

The following result relates the PL-component of any hypergraph state to the induced phase unitary. 
\begin{proposition}\label{prop:PLcomp}
Given a hypergraph state $\ket{G}$, the square of the PL-component respective to the Pauli operator $P_{\vec{x},\vec{z}}$ shows
\begin{equation}\label{eq:PLcomp}
\begin{aligned}
    \Tr{P_{\vec{x},\vec{z}}\kb{G}}^2=2^{-2n}\Tr{U(G_{\vec{x},\vec{z}})}^2,
\end{aligned}
\end{equation}
where $U(G_{\vec{x},\vec{z}})$ is the phase unitary determined by the hypergraph $G_{\vec{x},\vec{z}}$ defined in Eq.~\eqref{eq:Gin}, which is induced from $G$ by the index vectors of $P_{\vec{x},\vec{z}}$.
\end{proposition}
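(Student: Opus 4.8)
The plan is to reduce $\Tr{P_{\vec x,\vec z}\kb{G}}=\bra{+}^{\otimes n}U(G)^\dagger P_{\vec x,\vec z}\,U(G)\ket{+}^{\otimes n}$ to the trace of a diagonal phase unitary, by conjugating the Pauli operator through the diagonal involution $U(G)=\prod_{e\in E}CZ_e$. Writing $P_{\vec x,\vec z}=\omega(\vec x,\vec z)X^{\vec x}Z^{\vec z}$, the factor $Z^{\vec z}$ commutes with $U(G)$, so the only work is to conjugate $X^{\vec x}$. Iterating the commutation rule $X_iCZ_eX_i=CZ_e\,CZ_{e\setminus\{v_i\}}$ (valid for $v_i\in e$, with the convention $CZ_\emptyset=-1$) exactly as in the derivation of Observation~\ref{ob:beauty} gives $X^{\vec x}U(G)X^{\vec x}=U(G)\,W$, where $W:=\prod_{e\in E}\prod_{\emptyset\neq q\subseteq v(\vec x)\cap e}CZ_{e\setminus q}$ is diagonal (the $q=\emptyset$ term of each edge contributes $CZ_e$, so these collect into a copy of $U(G)$). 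Using $U(G)^\dagger=U(G)$ and $U(G)^2=\id$, this yields $U(G)^\dagger X^{\vec x}U(G)=W\,X^{\vec x}$.

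Substituting back, $\Tr{P_{\vec x,\vec z}\kb{G}}=\omega(\vec x,\vec z)\bra{+}^{\otimes n}W X^{\vec x}Z^{\vec z}\ket{+}^{\otimes n}$. Commuting $X^{\vec x}$ past $Z^{\vec z}$ (cost $(-1)^{\vec x\cdot\vec z}$), using $X^{\vec x}\ket{+}^{\otimes n}=\ket{+}^{\otimes n}$, and $\bra{+}^{\otimes n}D\ket{+}^{\otimes n}=2^{-n}\Tr{D}$ for any diagonal operator $D$, this becomes $\Tr{P_{\vec x,\vec z}\kb{G}}=\omega(\vec x,\vec z)(-1)^{\vec x\cdot\vec z}\,2^{-n}\Tr{WZ^{\vec z}}$. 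Squaring and using $\omega(\vec x,\vec z)^2=(-1)^{\vec x\cdot\vec z}$ gives $\Tr{P_{\vec x,\vec z}\kb{G}}^2=(-1)^{\vec x\cdot\vec z}\,2^{-2n}\Tr{WZ^{\vec z}}^2$. The spurious sign is harmless: $\Tr{P_{\vec x,\vec z}\kb{G}}$ is real (since $P_{\vec x,\vec z}$ is Hermitian) and $\Tr{WZ^{\vec z}}$ is real (since $WZ^{\vec z}$ is a real, $\pm1$-valued diagonal matrix), so whenever $\Tr{WZ^{\vec z}}\neq0$ one must have $(-1)^{\vec x\cdot\vec z}=1$, while if $\Tr{WZ^{\vec z}}=0$ both sides vanish. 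Hence $\Tr{P_{\vec x,\vec z}\kb{G}}^2=2^{-2n}\Tr{WZ^{\vec z}}^2$ in all cases.

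It remains to recognize $WZ^{\vec z}=\pm\,U(G_{\vec x,\vec z})$. Writing $Z^{\vec z}=\prod_{j:\,z_j=1}CZ_{\{v_j\}}$, the operator $WZ^{\vec z}$ is a product of pairwise commuting gates $CZ_{e'}$, with $e'$ running (with multiplicity) over the subsets $e\setminus q$ appearing in $W$ together with the singletons $\{v_j\}$, $z_j=1$. Each $CZ_{e'}$ is an involution, so the product collapses to $(\pm1)\prod_{e'\in E_{\vec x,\vec z}}CZ_{e'}=\pm\,U(G_{\vec x,\vec z})$, where $E_{\vec x,\vec z}$ is the collection of nonempty subsets occurring an odd number of times and the sign records the parity of empty occurrences ($CZ_\emptyset=-1$). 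A direct count of the odd-multiplicity subsets — those with $|e'|\geq2$ arising only as $e\setminus q$ with $q=e\setminus e'$ a nonempty subset of $v(\vec x)\cap e$, those with $|e'|=1$ arising either that way or from the $Z^{\vec z}$ factor — reproduces exactly the edge sets $E^{(2)}_{\vec x}$ and $E^{(1)}_{\vec x,\vec z}$ of Eq.~\eqref{eq:Gin}. Therefore $\Tr{WZ^{\vec z}}^2=\Tr{U(G_{\vec x,\vec z})}^2$ and Eq.~\eqref{eq:PLcomp} follows.

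The step I expect to be the main obstacle is this last combinatorial identification: one must check, cardinality class by cardinality class, that ``nonempty subsets of odd multiplicity in $WZ^{\vec z}$'' is term-for-term the edge set defined in Eq.~\eqref{eq:Gin}, and while doing so keep careful track of the degenerate contributions — the $q=e$ terms producing $CZ_\emptyset=-1$ and any $1$-edges already present in $G$ — which only shift overall signs and are washed out by the final squaring. The conjugation and the reality argument disposing of the residual phase are routine by comparison.
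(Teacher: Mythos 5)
Your proof is correct, and it reaches Eq.~\eqref{eq:PLcomp} by a route that differs from the paper's in its framing though not in its combinatorial core. The paper expands $\kb{G}=2^{-n}\sum_{\vec{s}}\mathrm{St}(G,\vec{s})$ via Observation~\ref{ob:beauty}, traces each stabilizer against $P_{\vec{x},\vec{z}}$, and invokes a trace-orthogonality lemma ($\Tr{X^{\vec{x}_2}U_2^Z X^{\vec{x}_1}U_1^Z}\propto\delta_{\vec{x}_1,\vec{x}_2}$) to kill every term except $\vec{s}=\vec{x}$; you instead conjugate $P_{\vec{x},\vec{z}}$ through $U(G)$ in the Heisenberg picture, use that $\ket{+}^{\otimes n}$ is a $+1$ eigenstate of $X^{\vec{x}}$, and reduce to $\bra{+}^{\otimes n}D\ket{+}^{\otimes n}=2^{-n}\Tr{D}$ for diagonal $D$. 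These are dual mechanisms for the same selection rule, and both rest on the identity $X^{\vec{x}}U(G)X^{\vec{x}}=U(G)\,W$ from the proof of Observation~\ref{ob:beauty}; your version is slightly more economical in that it needs neither the stabilizer decomposition nor the auxiliary lemma. The endgame — collapsing the commuting involutions $CZ_{e'}$ by multiplicity parity to recover $E^{(1)}_{\vec{x},\vec{z}}$ and $E^{(2)}_{\vec{x}}$ of Eq.~\eqref{eq:Gin}, and discarding the residual $(-1)^{\vec{x}\cdot\vec{z}}$ and $CZ_\emptyset$ signs by a reality/squaring argument — is the same bookkeeping the paper performs with its $(-1)^{\nu(\vec{x})}$ factor, and your handling of it (including the observation that a negative sign would force both sides to vanish) is sound.
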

The proof is left in Appendix \ref{App:PLcomp}.
By applying the above result of Eq.~\eqref{eq:PLcomp} to the definition of Eq.~\eqref{eq:malpha} and \eqref{eq:SREalpha} and some further simplification, one has the following general formula for the magic of quantum hypergraph states.
\begin{theorem}\label{th:PLmom}
The $\alpha$-order PL-moment of a hypergraph state $\ket{G}$ shows
\begin{equation}\label{eq:PLmoment}
\begin{aligned}
    \mathbf{m}_{\alpha}(\ket{G})=2^{-n(1+2\alpha)}\sum_{\vec{x},\vec{z}}\Tr{U(G^{*}_{\vec{x},\vec{z}})}^{2\alpha},
\end{aligned}
\end{equation} 
where $U(G^{*}_{\vec{x},\vec{z}})$ is the phase unitary determined by the hypergraph $G^{*}_{\vec{x},\vec{z}}$ defined in Eq.~\eqref{eq:GinSim}. The corresponding 
\renyiaa\ reads 
\begin{equation}\label{eq:SREfrist}
   \mathbf{M}_{\alpha}(\ket{G})=\frac{1+2\alpha}{\alpha-1}n+\frac1{1-\alpha}\log \sum_{\vec{x},\vec{z}}\Tr{U(G^{*}_{\vec{x},\vec{z}})}^{2\alpha}.
\end{equation}	
\end{theorem}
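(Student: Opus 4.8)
The plan is to start from Proposition \ref{prop:PLcomp}, which already gives $\Tr{P_{\vec{x},\vec{z}}\kb{G}}^2 = 2^{-2n}\Tr{U(G_{\vec{x},\vec{z}})}^2$, and substitute this into the definition of the PL-moment, Eq.~\eqref{eq:malpha}. Writing the sum over $P\in\mathcal{P}_n$ as a sum over the pair of binary vectors $(\vec{x},\vec{z})$, this immediately yields
\begin{equation}
\mathbf{m}_{\alpha}(\ket{G})=2^{-n}\sum_{\vec{x},\vec{z}}\left(2^{-2n}\Tr{U(G_{\vec{x},\vec{z}})}^2\right)^{\alpha}=2^{-n(1+2\alpha)}\sum_{\vec{x},\vec{z}}\Tr{U(G_{\vec{x},\vec{z}})}^{2\alpha},
\end{equation}
which is the claimed formula \emph{except} that $G_{\vec{x},\vec{z}}$ appears where the statement has the simplified hypergraph $G^{*}_{\vec{x},\vec{z}}$. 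So the entire content of the theorem, beyond a routine substitution, is the claim that $\Tr{U(G_{\vec{x},\vec{z}})}^{2}=\Tr{U(G^{*}_{\vec{x},\vec{z}})}^{2}$ for every $(\vec{x},\vec{z})$, i.e.\ that replacing the $1$-edge set $E^{(1)}_{\vec{x},\vec{z}}$ by the simplified $E^{(1)}_{\vec{z}}$ does not change the squared trace of the diagonal phase unitary.

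To establish this, I would compute $\Tr{U(G_{\vec{x},\vec{z}})}$ explicitly. Since $U(G_{\vec{x},\vec{z}})=\prod_{e\in E_{\vec{x},\vec{z}}}CZ_e$ is diagonal in the computational basis, its trace is $\sum_{\vec{y}\in\{0,1\}^n}(-1)^{f(\vec{y})}$, where $f(\vec{y})=\sum_{e\in E_{\vec{x},\vec{z}}}\prod_{v_i\in e}y_i$ collects the sign contributions of all hyperedges. Splitting $E_{\vec{x},\vec{z}}=E^{(1)}_{\vec{x},\vec{z}}\cup E^{(2)}_{\vec{x}}$, the $1$-edges contribute a \emph{linear} term $\ell(\vec{y})=\sum_{j:\{v_j\}\in E^{(1)}_{\vec{x},\vec{z}}}y_j$ while the higher edges contribute a term depending only on $\vec{x}$. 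The key algebraic observation is that, by the definition in Eq.~\eqref{eq:Gin}, the coefficient of $y_j$ in $\ell(\vec{y})$ is $z_j+\sum_{e\ni v_j}\prod_{i:v_i\in e\setminus\{v_j\}}x_i$; this differs from the corresponding coefficient $z_j$ in the simplified hypergraph $G^{*}_{\vec{x},\vec{z}}$ precisely by the linear form $\sum_{e\ni v_j}\prod_{i:v_i\in e\setminus\{v_j\}}x_i$, which is (mod $2$) the derivative of the multilinear polynomial $\sum_{e\in E,|e|\geq 2}\prod_{v_i\in e}(x_i+y_i\,[\![v_i\in e_2\cap\cdots]\!])$ coming from the $E^{(2)}_{\vec{x}}$ part evaluated against $\vec{x}$. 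The cleaner way to see it: the full phase polynomial $f(\vec{y})$ built from $G_{\vec{x},\vec{z}}$ equals the phase polynomial built from $G^{*}_{\vec{x},\vec{z}}$ plus a term of the form $\sum_j y_j\big(\sum_{e\ni v_j}\prod_{v_i\in e\setminus\{v_j\}}x_i\big)$, and this extra term is exactly what one gets by shifting the argument of the $E^{(2)}_{\vec{x}}$ phases. Hence $U(G_{\vec{x},\vec{z}})$ and $U(G^{*}_{\vec{x},\vec{z}})$ are related by conjugation with a Pauli-$X$ string (the shift $\vec{y}\mapsto\vec{y}+\vec{x}$, restricted appropriately) together with a possible global sign, and conjugation leaves the trace invariant up to that sign — which disappears upon squaring. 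Either route gives $\Tr{U(G_{\vec{x},\vec{z}})}=\pm\Tr{U(G^{*}_{\vec{x},\vec{z}})}$, hence equality of the squares.

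Given this identity, I would then substitute back to obtain Eq.~\eqref{eq:PLmoment}, and finally apply Eq.~\eqref{eq:SREalpha}, $\mathbf{M}_{\alpha}=(1-\alpha)^{-1}\log\mathbf{m}_{\alpha}$, using $\log(2^{-n(1+2\alpha)}\cdot(\,\cdot\,))=-n(1+2\alpha)+\log(\,\cdot\,)$ and dividing by $1-\alpha$, to land on Eq.~\eqref{eq:SREfrist} after rewriting $-n(1+2\alpha)/(1-\alpha)=n(1+2\alpha)/(\alpha-1)$. The main obstacle is the middle step: carefully tracking the multilinear phase polynomials mod $2$ and verifying that the difference between the $G_{\vec{x},\vec{z}}$ and $G^{*}_{\vec{x},\vec{z}}$ phase functions is precisely an affine shift of variables (so that the traces agree up to sign). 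I expect this is exactly the kind of bookkeeping that Observation \ref{ob:beauty} and the commutation relation $X_iCZ_eX_i=CZ_e\cdot CZ_{e\setminus\{v_i\}}$ were set up to handle, so I would lean on those — likely the cleanest presentation is to note that $U(G_{\vec{x},\vec{z}})=X^{\vec{x}}\,U(G^{*}_{\vec{x},\vec{z}})\,X^{\vec{x}}$ up to a sign and invoke cyclicity of the trace, deferring the full verification to an appendix as the paper already does for Proposition \ref{prop:PLcomp}.
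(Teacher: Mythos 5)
Your first and last steps are fine: substituting Proposition~\ref{prop:PLcomp} into Eq.~\eqref{eq:malpha} and then taking the logarithm via Eq.~\eqref{eq:SREalpha} is exactly the routine part. The gap is in the middle step. The termwise identity $\Tr{U(G_{\vec{x},\vec{z}})}^{2}=\Tr{U(G^{*}_{\vec{x},\vec{z}})}^{2}$ that you reduce the theorem to is false in general, and the mechanism you propose for it (conjugation by $X^{\vec{x}}$) does not exist. Take $n=2$ with the single edge $E=\{\{v_1,v_2\}\}$, $\vec{x}=(1,0)$, $\vec{z}=(0,0)$. Then $E^{(2)}_{\vec{x}}=\emptyset$ while $z_2+x_1=1$ puts the $1$-edge $\{v_2\}$ into $E^{(1)}_{\vec{x},\vec{z}}$, so $U(G_{\vec{x},\vec{z}})=Z_2$ with trace $0$; but $E^{(1)}_{\vec{z}}=\emptyset$, so $U(G^{*}_{\vec{x},\vec{z}})=\id$ with trace $4$. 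The two unitaries differ by \emph{multiplication} with the $Z$-string $\prod_j Z_j^{c_j(\vec{x})}$, $c_j(\vec{x})=\sum_{e\ni v_j}\prod_{i:v_i\in e\setminus\{v_j\}}x_i$, not by conjugation, and multiplying a diagonal phase unitary by a nontrivial $Z$-string generically changes (indeed can annihilate) its trace. Your "affine shift of $\vec{y}$" heuristic fails for the same reason: shifting the argument of the $E^{(2)}_{\vec{x}}$ phase polynomial cannot produce an arbitrary prescribed linear term (in the example above that polynomial is identically zero).

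What actually makes the theorem true — and this is the one idea your proposal is missing — is that the replacement of $E^{(1)}_{\vec{x},\vec{z}}$ by $E^{(1)}_{\vec{z}}$ is legitimate only \emph{after} summing over $\vec{z}$. Since $G_{\vec{x},\vec{z}}$ has a $1$-edge at $v_j$ iff $z_j+c_j(\vec{x})=1$, one has $G_{\vec{x},\vec{z}}=G^{*}_{\vec{x},\,\vec{z}+\vec{c}(\vec{x})}$, and for each fixed $\vec{x}$ the map $\vec{z}\mapsto\vec{z}+\vec{c}(\vec{x})$ is a bijection of $\{0,1\}^n$. Hence
\begin{equation}
\sum_{\vec{z}}\Tr{U(G_{\vec{x},\vec{z}})}^{2\alpha}=\sum_{\vec{z}}\Tr{U(G^{*}_{\vec{x},\vec{z}})}^{2\alpha}
\end{equation}
for every $\vec{x}$, even though individual terms disagree. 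This is precisely the remark the paper makes after the theorem ("the summation over all possible $\vec{z}$ simplifies the summation over all possible $E^{(1)}_{\vec{x},\vec{z}}$ to the one over $E^{(1)}_{\vec{z}}$"). With this relabeling argument in place of your termwise claim, the rest of your derivation goes through unchanged.
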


Compared to Eq.~\eqref{eq:PLcomp} with the phase unitary $U(G_{\vec{x},\vec{z}})$, Eq.~\eqref{eq:PLmoment} and \eqref{eq:SREfrist} are only related to simplified one $U(G^{*}_{\vec{x},\vec{z}})$. This is based on the fact that the summation over all possible $\vec{z}$ simplifies the summation over all possible $E^{(1)}_{\vec{x},\vec{z}}$ to the one over $E^{(1)}_{\vec{z}}$. We remark that Ref.~\cite{lami2023quantum} also studied the magic of quantum hypergraph states mainly using the measure named min-relative entropy $\mathbf{D}_{\mathrm{min}}(\ket{G})$, which is related to the maximal fidelity to stabilizer states. In particular, they relate an upper bound of $\mathbf{D}_{\mathrm{min}}(\ket{G})$ to a minimization procedure of the nonquadraticity of the resulting Boolean function. However, the minimization result can only be obtained for a few example states. On the other hand, $\mathbf{M}_{\alpha}(\ket{G})$ shown here can be used as an lower bound of $\mathbf{D}_{\mathrm{min}}(\ket{G})$ \cite{Leone2022SRE,haug2023stabilizer}.

Theorem \ref{th:PLmom} transforms the PL-moment and also SRE into the calculation of trace of a family of phase unitary. There are totally $4^n$ such kind of induced $U(G^{*}_{\vec{x},\vec{z}})$, which makes the calculation of the magic for a given hypergraph state still challenging.

Specifically, for $3$-uniform hypergraph states $\ket{G_{n,3}}$, the hyperedges in $E^{(2)}_{\vec{x}}$ are all $2$-edges so the corresponding gates are $CZ$ gates. Then PL-moment in Eq.~\eqref{eq:PLmoment} becomes a summation of Boolean functions, for $\alpha=2$, it reads explicitly as
\begin{equation}\label{eq:3magic}
    \mathbf{m}_{2}(\ket{G_{n,3}})=2^{-5n}\sum_{\vec{x},\vec{z}}\left(\sum_{\vec{a}}(-1)^{\sum_i z_i a_i+\sum_{j,k}b_{j,k}(\vec{x}) a_j a_k}\right)^4
\end{equation}
where $b_{j,k}(\vec{x})=\sum_{\{v_i,v_j,v_k\}\in E}x_i$ represents the summation of all the possible $CZ$ gates for a fix qubit-pair $\{j,k\}$.
Naively, there are totally $8^n$ terms to sum the indices of $\vec{x},\vec{z}$ and also $\vec{a}$, which looks very sophisticated. 

Nevertheless, based on the pictural expression in Theorem \ref{th:PLmom}, we can give an upper bound of SRE of general hypergraph states with respect to its average degree. 
\begin{theorem}\label{th:bdegree}
For any $n$-qubit hypergraph state $\ket{G}$ whose corresponding graph $G$ has average degree $\bar{\Delta}(G)$, its S$R_\alpha$E with $\alpha \geq 2$ is upper bounded by
\begin{equation}\label{eqth:bdegree}
    \mathbf{M}_{\alpha}(\ket{G})\leq \frac{1}{\alpha-1}\left[1-\log\left(1+\frac{1}{2^{(2\alpha-1)\bar{\Delta}(G)}}\right)\right]n.
\end{equation}
\end{theorem}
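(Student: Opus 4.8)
The plan is to reformulate the claimed upper bound on $\mathbf{M}_{\alpha}$ as the lower bound $\mathbf{m}_{\alpha}(\ket{G})\geq 2^{-n}\bigl(1+2^{-(2\alpha-1)\bar{\Delta}(G)}\bigr)^{n}$ on the PL-moment — the two are equivalent for $\alpha\geq 2$ since $\mathbf{M}_{\alpha}=(1-\alpha)^{-1}\log\mathbf{m}_{\alpha}$ with $1-\alpha<0$ by Eq.~\eqref{eq:SREalpha}. Starting from Theorem~\ref{th:PLmom}, and for each fixed $\vec{x}$, I would carry out the sum over $\vec{z}$ exactly. Writing $T_{\vec{x}}=\bigcup_{e_{2}\in E^{(2)}_{\vec{x}}}e_{2}$ and $t_{\vec{x}}=|T_{\vec{x}}|$, we have $U(G^{*}_{\vec{x},\vec{z}})=Z^{\vec{z}}\prod_{e_{2}\in E^{(2)}_{\vec{x}}}CZ_{e_{2}}$, whose second factor acts trivially on the $n-t_{\vec{x}}$ qubits outside $T_{\vec{x}}$. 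Tracing those qubits out forces the $\vec{z}$-component there to vanish and yields a factor $2^{n-t_{\vec{x}}}$ per trace, so $\sum_{\vec{z}}\Tr{U(G^{*}_{\vec{x},\vec{z}})}^{2\alpha}$ reduces to $2^{2\alpha(n-t_{\vec{x}})}$ times a $2\alpha$-th power sum of a Walsh--Hadamard transform on $t_{\vec{x}}$ bits, whose $2^{t_{\vec{x}}}$ values have sum of squares exactly $4^{t_{\vec{x}}}$ by Parseval's identity. Since these squares are non-negative, the power-mean inequality ($\alpha\geq 1$) gives $\sum_{\vec{z}}\Tr{U(G^{*}_{\vec{x},\vec{z}})}^{2\alpha}\geq 2^{2\alpha(n-t_{\vec{x}})}\cdot 2^{t_{\vec{x}}}\cdot\bigl(4^{t_{\vec{x}}}/2^{t_{\vec{x}}}\bigr)^{\alpha}=2^{2\alpha n-(\alpha-1)t_{\vec{x}}}$. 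Multiplying by the prefactor $2^{-n(1+2\alpha)}$ of Theorem~\ref{th:PLmom} and summing over $\vec{x}$ collapses everything to the engine inequality $\mathbf{m}_{\alpha}(\ket{G})\geq 2^{-n}\sum_{\vec{x}\in\{0,1\}^{n}}2^{-(\alpha-1)t_{\vec{x}}}$.

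Next I would bound $t_{\vec{x}}$ combinatorially, reindexing the sum by the support $S=\{i:x_{i}=1\}$. Unwinding the definition of $E^{(2)}_{\vec{x}}$ in Eq.~\eqref{eq:Gin}, with its strict-superset condition, shows that every edge of $E^{(2)}_{\mathds{1}_{S}}$ is a proper sub-edge of some original hyperedge $e$ with $|e|\geq 3$ that intersects $S$; hence $T_{\mathds{1}_{S}}\subseteq\bigcup_{e\,:\,|e|\geq 3,\,e\cap S\neq\emptyset}e$ and $t_{\mathds{1}_{S}}\leq\sum_{s\in S}\delta_{s}$, where $\delta_{s}$ is the number of vertices contained in the size-$\geq 3$ hyperedges through $v_{s}$ (so $\delta_{s}=0$ if there is no such edge, and otherwise $\delta_{s}\leq 1+\Delta(v_{s})$ with $\Delta(v_{s})\geq 2$, since a vertex of a size-$\geq 3$ edge has at least two neighbours). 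Substituting and using $\sum_{S\subseteq[n]}\prod_{s\in S}y_{s}=\prod_{i=1}^{n}(1+y_{i})$ turns the sum into a product: $\mathbf{m}_{\alpha}(\ket{G})\geq 2^{-n}\prod_{i=1}^{n}\bigl(1+2^{-(\alpha-1)\delta_{i}}\bigr)$.

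To close, I would compare this with the target factor by factor and then by convexity. If $v_{i}$ lies in no size-$\geq 3$ edge then $\delta_{i}=0$, so the $i$-th factor equals $2\geq 1+2^{-(2\alpha-1)\Delta(v_{i})}$; otherwise $\delta_{i}\leq 1+\Delta(v_{i})$ with $\Delta(v_{i})\geq 2$, and the elementary inequality $(\alpha-1)(1+\Delta(v_{i}))\leq(2\alpha-1)\Delta(v_{i})$ (which reduces to $\Delta(v_{i})\geq\tfrac{\alpha-1}{\alpha}$, true for $\alpha\geq 2$) again gives $1+2^{-(\alpha-1)\delta_{i}}\geq 1+2^{-(2\alpha-1)\Delta(v_{i})}$. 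Thus $\mathbf{m}_{\alpha}(\ket{G})\geq 2^{-n}\prod_{i=1}^{n}\bigl(1+2^{-(2\alpha-1)\Delta(v_{i})}\bigr)$, and since $x\mapsto\log\bigl(1+2^{-(2\alpha-1)x}\bigr)$ is convex on $[0,\infty)$, Jensen's inequality over the average degree yields $\prod_{i=1}^{n}\bigl(1+2^{-(2\alpha-1)\Delta(v_{i})}\bigr)\geq\bigl(1+2^{-(2\alpha-1)\bar{\Delta}(G)}\bigr)^{n}$. Taking $\log$ and dividing by $1-\alpha<0$ then gives Eq.~\eqref{eqth:bdegree}.

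The main obstacle is the combinatorial estimate of the second paragraph: one must extract enough structure of the induced $(\geq 2)$-edge set $E^{(2)}_{\mathds{1}_{S}}$ — essentially that induced big edges are always sub-edges of original big edges meeting $S$, which hinges on the strict-superset condition in Eq.~\eqref{eq:Gin} — so that the degree bound carries the right numerical constant to produce the exponent $2\alpha-1$. The analytic ingredients (Parseval, the power-mean inequality, and the convexity of $\log(1+2^{-cx})$) are routine, and one should note that vertices lying in no size-$\geq 3$ edge, including isolated vertices, are handled automatically since they only contribute a trivial factor.
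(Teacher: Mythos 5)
Your proposal is correct, and it follows the same overall architecture as the paper's proof (reformulate as a lower bound on $\mathbf{m}_{\alpha}$, fix $\vec{x}$, exploit that the induced phase unitary is supported only on the vertices $T_{\vec{x}}$ touched by $E^{(2)}_{\vec{x}}$, apply a power-mean inequality, bound $|T_{\vec{x}}|$ by degrees, and finish with convexity). The one genuinely different ingredient is the conservation law you feed into the power mean: the paper proves the exact identity $\sum_{\vec{z}}\Tr{U(\tilde{G}^{*}_{\vec{x},\vec{z}})}=2^{m(\vec{x})}$ (an $\ell_{1}$-type constraint on the signed values, obtained by resumming the $Z$-projectors onto $\ket{\vec{0}}$), which yields the per-$\vec{x}$ bound $2^{-(2\alpha-1)m(\vec{x})}$; you instead use Parseval (an $\ell_{2}$ constraint), which yields the strictly stronger $2^{-(\alpha-1)t_{\vec{x}}}$ with $t_{\vec{x}}=m(\vec{x})$. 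That extra slack is what lets you get away with the looser combinatorial estimate $t_{\vec{x}}\leq\sum_{s\in S}\delta_{s}$ with $\delta_{s}\leq 1+\Delta(v_{s})$, whereas the paper needs the sharper $m(\vec{x})\leq\sum_{s\in S}\Delta(v_{s})$ (which does hold: any $v\in e_{2}\subsetneq e$ is distinct from every $u\in e\setminus e_{2}\subseteq S$, hence is a neighbour of such a $u$). Your closing step (Jensen applied to the convex function $x\mapsto\log(1+2^{-(2\alpha-1)x})$) replaces the paper's two-step argument via convexity of $2^{-(2\alpha-1)z}$ over weight classes followed by the binomial formula, but lands on the identical product $(1+2^{-(2\alpha-1)\bar{\Delta}(G)})^{n}$. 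A side benefit of your route is that the intermediate inequality $\mathbf{m}_{\alpha}(\ket{G})\geq 2^{-n}\sum_{\vec{x}}2^{-(\alpha-1)t_{\vec{x}}}$ is stronger than the paper's and could be pushed to a sharper version of the theorem if one kept the tighter degree bound.
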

The proof is left in Appendix \ref{App:Thdegree}. For $\alpha=2$, and a large enough $\bar{\Delta}(G)$, the upper bound in Eq.~\eqref{eqth:bdegree} behaves like $[1-\log(e)^{-1}2^{-3\bar{\Delta}(G)}]n$. This indicates that for a quantum hypergraph state with $\bar{\Delta}(G)$ bounded by some constant, its magic cannot reach the maximum possible value $n-o(n)$. This bound especially constrains the magic of hypergraph states on the lattice. Moreover, a more direct upper bound for any hypergraph state shows $\mathbf{M}_{\alpha}(\ket{G}) \leq \frac{n}{\alpha-1}$. It is interesting to remark that $\mathbf{M}_{\alpha}(\ket{G})$ of $\alpha>2$ can not reach the value near $n$.

\section{Magic of random hypergraph states}\label{sec:random}
In this section, we study the statistical properties magic of random hypergraph states. First, 
we define some random hypergraph state ensembles $\mc{E}$ from the corresponding random hypergraph ensembles. Here, we mainly study random $c$-uniform hypergraphs, which only own $c$-edge. A random $c$-uniform hypergraph ensemble can be determined by the probability $p$ whether there is a $c$-edge or not among all choices of $c$ vertices. 
Denote the combination number $C^c_n:=\binom{n}{c}$, and the ensembles are defined formally as follows \cite{zhou2022hyper}.

\begin{definition}\label{Def:ensemble}
The ($c$-uniform) random hypergraph state ensemble $\mathcal{E}_c^p$ of $n$-qubit system is defined as
\begin{equation}\label{Eq:Def:ensemble}
\begin{aligned}
\mathcal{E}_c^p=\left\{\ket{\Psi}=U\ket{\Psi_0}\Big|U=U_{e_{C_{n}^c}}\cdots U_{e_2}U_{e_1}\right\}.
\end{aligned}
\end{equation}
Here each $e_i$ is a distinct $c$-edge of the $n$ vertices, with totally $C^c_n$ such edges,  $U_{e_i}$ acts on the Hilbert space $\mc{H}_{e_i}$ by taking $\{\id_{e_i},CZ_{e_i}\}$ from the probability distribution $\{1-p,p\}$ respectively, and the initial state $\ket{\Psi_0}=\ket{+}^{\otimes N}$.
\end{definition}
Note that the specific gate sequence in Eq.~\eqref{Eq:Def:ensemble} is not relevant, as $CZ_{e}$ gates commute with each other. In particular, as $p=1/2$, all the elements in $\mc{E}_k$ 
share an equal probability $1/|\mc{E}_c|$ with $|\mc{E}_c|=2^{C_{n}^c}$ \cite{zhou2022hyper}. Hereafter we omit the superscript $p$ of $\mc{E}_c^p$ as $p=1/2$ for simplicity of notation. See Fig.~\ref{fig:randomCons} (a) for an example of a $3$-uniform hypergraph state. 

The SRE defined in Eq.~\eqref{eq:SREalpha} is in the logarithmic function. As such, to analyze the average property of SRE for some state ensembles $\mc{E}$,  we instead focus on the calculation of the average PL-moment of Eq.~\eqref{eq:malpha} inside the logarithm, that is,
\begin{equation}\label{eq:malphaAv}
    \langle\mathbf{m}_{\alpha}\rangle_{\mc{E}}=\mbb{E}_{\Psi\in \mc{E}}[\mathbf{m}_{\alpha}(\ket{\Psi})].
\end{equation}
This directly gives a lower bound of the \renyiaa\ by the concavity of the logarithmic function. 
\begin{equation}\label{eq:SREalphaAV}
\langle\mathbf{M}_{\alpha}\rangle_{{\mc{E}}}\geq (1-\alpha)^{-1}\log \langle\mathbf{m}_{\alpha}\rangle_{\mc{E}}.
\end{equation}	
for $\alpha>1$.

In particular, for $p=1/2$, one has the average PL-moment of the hypergraph ensemble $\mc{E}_c$ defined in Eq.~\eqref{Eq:Def:ensemble} as
\begin{equation}
\langle\mathbf{m}_{\alpha}\rangle_{\mc{E}_c}=|\mc{E}_c|^{-1}\sum_{G_{n,c}\in\mathcal{G}_{n,c}}\mathbf{m}_\alpha(\ket{G_{n,c}}),
\end{equation}
where $\mathcal{G}_{n,c}=\{G\big||V|=n, \forall e\in E, |e|=c\}$ is the set of $c$-uniform hypergraphs. 

Hereafter, we focus on $\alpha \geq 2 \in \mathbb{Z}$ and the uniform ensemble $\mc{E}_c$ to calculate the average properties and the fluctuations of the magic, and then extend to non-uniform ensemble $\mc{E}_c^p$ later. The adjustment of the probability parameter $p$ can thus change the expected density of the applied gates and also the expected average degree of the corresponding hypergraph, which is discussed in Sec.~\ref{sec:ppp}. 
We remark that our work is the first to show statistical properties of magic for some realistic ensembles beyond Haar random states \cite{Leone2022SRE,Liu2022manybodymagic,white2020mana}.

\subsection{Average analysis of magic}\label{sec:AV}
In this section, our main focus is on the average properties of magic, especially the PL-moment, and then we show quite tight lower bounds of the average SRE. The following theorem transforms the average PL-moment into a counting problem of binary strings. We first show some related definitions of the norm and operations of an $n$-bit string $\vec{t}=\{t^i\}$. The $1$-norm $\norm{\vec{t}}_1=\sum_i t^i$, with the addition modulo $2$. The Hadamard or Schur product $\bigodot$ of some bit strings $\vec{t_k}$ is the element-wise product, i.e., $\vec{t'}=\bigodot_{k}\vec{t_k}$ with ${t'}^i=\prod_k t_k^i$. 
\begin{theorem}\label{theo:randomave}
For any integer $\alpha \geq 2 \in \mathbb{Z}$, the average $\alpha$-th PL-moment of $n$-qubit random hypergraph state ensembles $\mc{E}_c$ defined in Eq.~\eqref{Eq:Def:ensemble} shows
\begin{equation}
    \langle\mathbf{m}_{\alpha}\rangle_{\mc{E}_c}=\frac{\mathrm{N}(c,\alpha,n)}{2^{2\alpha n}}
\end{equation}
where $\mathrm{N}(c,\alpha,n)$ is the number of $2$-tuple 
$(\mathrm{T},\vec{x})$, 
such that the following two constraints are satisfied.
\begin{align}
&\norm{\vec{t_i}}_1=0,\quad\forall i, \label{eq:conhyper1}\\
&\sum_{q\subset e_c}\left(\prod_{v_i\in q}x_i\norm{\bigodot_{v_k\in e_c\setminus q}\vec{t_k}}_1\right)=0,\quad\forall |e_c|=c, \label{eq:conhyper2}
\end{align}
where $\mathrm{T}=(\vec{t_1},\cdots,\vec{t_i},\cdots,\vec{t_n})$ is a $2\alpha\times n$ binary matrix, $\vec{x}$ is an $n$-bit vector with elements $x_i$, and $e_c$ labels all possible $c$-edges.
\end{theorem}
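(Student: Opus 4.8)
The plan is to start from the formula in Theorem~\ref{th:PLmom}, namely $\mathbf{m}_{\alpha}(\ket{G})=2^{-n(1+2\alpha)}\sum_{\vec{x},\vec{z}}\Tr{U(G^{*}_{\vec{x},\vec{z}})}^{2\alpha}$, and average it over the ensemble $\mc{E}_c$. Since $U(G^{*}_{\vec{x},\vec{z}})$ is a diagonal phase unitary on the computational basis, $\Tr{U(G^{*}_{\vec{x},\vec{z}})}=\sum_{\vec{a}}(-1)^{f_{\vec{x},\vec{z}}(\vec{a})}$ for a Boolean function $f_{\vec{x},\vec{z}}$ that is linear in the $1$-edges (encoded by $\vec{z}$) and contains the higher-degree monomials coming from $E^{(2)}_{\vec{x}}$. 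Expanding the $2\alpha$-th power introduces $2\alpha$ independent summation variables $\vec{a_1},\dots,\vec{a_{2\alpha}}\in\{0,1\}^n$; collecting the $r$-th coordinate of all of them gives exactly the $2\alpha$-bit column that becomes the row $\vec{t_i}$ in the matrix $\mathrm{T}$ after transposition. The first step, then, is to write $\langle\mathbf{m}_{\alpha}\rangle_{\mc{E}_c}$ as a sum over $\vec{x},\vec{z}$ and over the tuple $(\vec{a_1},\dots,\vec{a_{2\alpha}})$ of an expectation over the random $c$-edges, and to reorganize the $(\vec{a_k})$ data into the matrix $\mathrm{T}=(\vec{t_1},\dots,\vec{t_n})$ where $\vec{t_i}\in\{0,1\}^{2\alpha}$ records qubit $i$'s bit in each of the $2\alpha$ copies.

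The second step is to carry out the average over the $\vec{z}$ variables and over the random hyperedges separately. The $\vec{z}$-dependence of $f$ is the linear term $\sum_i z_i(a_1^i+\cdots)$; summing $(-1)^{\cdots}$ over all $\vec{z}\in\{0,1\}^n$ produces a product of indicator factors forcing, for each qubit $i$, the parity $a_1^i+a_2^i+\cdots+a_{2\alpha}^i=0$, i.e.\ $\norm{\vec{t_i}}_1=0$ — this is constraint~\eqref{eq:conhyper1}, and it eats the $2^n$ from the $\vec{z}$-sum. For the random edges: each potential $c$-edge $e_c$ appears independently with probability $1/2$, and when present it contributes (via $E^{(2)}_{\vec{x}}$ and Observation~\ref{ob:beauty}) a phase to $f$ that, summed over the $2\alpha$ copies, is $(-1)$ raised to $\sum_{k}\big(\text{monomial in }\vec{x}\text{ and }\vec{a_k}\big)$; averaging $\tfrac12(1+(-1)^{(\cdots)})$ over presence/absence of that edge gives an indicator that the exponent vanishes. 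Rewriting that exponent in terms of the $\vec{t_k}$'s: a $c$-edge with a chosen subset $q$ surviving as a lower edge contributes $\prod_{v_i\in q}x_i$ times the parity $\sum_k \prod_{v_\ell\in e_c\setminus q}a_\ell^k = \norm{\bigodot_{v_\ell\in e_c\setminus q}\vec{t_\ell}}_1$, and summing over all nonempty $q\subset e_c$ gives precisely constraint~\eqref{eq:conhyper2}. So each surviving edge contributes a factor that is $1$ if \eqref{eq:conhyper2} holds for that edge and $0$ otherwise, after the edge-average; multiplying over all $C_n^c$ edges yields the global constraint, and the remaining sum over $(\mathrm{T},\vec{x})$ just counts the admissible $2$-tuples, giving $\langle\mathbf{m}_{\alpha}\rangle_{\mc{E}_c}=\mathrm{N}(c,\alpha,n)/2^{2\alpha n}$ once one checks the prefactor bookkeeping: $2^{-n(1+2\alpha)}$ from Theorem~\ref{th:PLmom}, times $2^n$ from the $\vec{z}$-sum, times the $\tfrac12$-per-edge averages that are absorbed into turning the edge contributions into $\{0,1\}$ indicators (the $(1/2)^{C_n^c}$ being exactly the ensemble normalization), leaving $2^{-2\alpha n}\cdot(\text{count})$.

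The main obstacle I expect is the careful translation of the $E^{(2)}_{\vec{x}}$ part of the induced hypergraph into the clean algebraic constraint~\eqref{eq:conhyper2}, keeping track of which subsets $q$ of a $c$-edge survive and with what $\vec{x}$-weight. Concretely, one must verify that summing a single present $c$-edge's phase contributions over the $2\alpha$ copies and over all nonempty $q\subsetneq e_c$ reproduces exactly $\sum_{q\subsetneq e_c}\prod_{v_i\in q}x_i\,\norm{\bigodot_{v_k\in e_c\setminus q}\vec{t_k}}_1$ modulo $2$, and that the per-edge Bernoulli average really factorizes across distinct edges (which it does because the edges are independent and the exponent for edge $e_c$ depends only on that edge's presence variable). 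A secondary, more bookkeeping-style obstacle is making sure the $q=\emptyset$ convention $CZ_\emptyset=-1$ and the $q=e_c$ term (the bare $CZ_{e_c}$ with $\norm{\bigodot_{v_k\in\emptyset}\vec{t_k}}_1=\norm{\vec 1}_1=2\alpha\equiv 0$) are handled consistently so that the constraint has precisely the stated range $q\subsetneq e_c$ (equivalently $q\subset e_c$ with the $q=e_c$ term contributing $0$ automatically). Once these identifications are pinned down, the rest is the routine prefactor accounting sketched above.
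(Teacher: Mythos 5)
Your proposal is correct: the reduction of $\langle\mathbf{m}_{\alpha}\rangle_{\mc{E}_c}$ to the stated counting problem goes through, the two constraints arise exactly as you describe (the $\vec{z}$-sum yields the per-qubit parity $\norm{\vec{t_i}}_1=0$ and absorbs a factor $2^n$, while the independent Bernoulli average over each $c$-edge yields the indicator of Eq.~\eqref{eq:conhyper2}), and the prefactor bookkeeping $2^{-n(1+2\alpha)}\cdot 2^n=2^{-2\alpha n}$ is right. The boundary cases you flag are indeed harmless: the $|q|=c-1$ terms vanish under Eq.~\eqref{eq:conhyper1} and the $q=e_c$/$CZ_{\emptyset}$ term contributes $2\alpha\equiv 0$, so the nominal range of $q$ in Eq.~\eqref{eq:conhyper2} is immaterial.

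Your route differs from the paper's in its entry point. You start from the scalar formula of Theorem~\ref{th:PLmom} and expand $\Tr{U(G^{*}_{\vec{x},\vec{z}})}^{2\alpha}$ as a product of Boolean exponential sums over $\vec{a_1},\dots,\vec{a_{2\alpha}}$, reorganized into the matrix $\mathrm{T}$. The paper instead reproves everything at the operator level on the $2\alpha$-fold replica space: it writes $\mathbf{m}_{\alpha}=\Tr{\rho^{(\alpha)}\bigotimes_i\Lambda_i^{(\alpha)}}$ with $\Lambda_i^{(\alpha)}=\frac12(\id^{\otimes2\alpha}+X_i^{\otimes2\alpha})(\id^{\otimes2\alpha}+Z_i^{\otimes2\alpha})$, inserts the stabilizer decomposition via Observation~\ref{ob:beauty}, uses Lemma~\ref{le:1} to collapse the $2\alpha$ independent stabilizer labels $\vec{x}^{(r)}$ to a single $\vec{x}$ (a step you bypass entirely because Theorem~\ref{th:PLmom} already carries one $\vec{x}$ per Pauli), and then reads the two constraints off the diagonal projectors $C_0$ and $C_1(\vec{x})$ evaluated on $\ket{\mathrm{T}}$. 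The two derivations land on the identical counting problem; yours is shorter given Theorem~\ref{th:PLmom}, while the paper's operator formulation is the one that extends mechanically to the $\tau$-th moment computation of Theorem~\ref{theo:randomvar} and to the biased ensemble $\mc{E}_c^p$ in Theorem~\ref{theo:p-ave} (where the $\tfrac12$-per-edge average is replaced by $(1-p)+p(\cdot)$, exactly as your Bernoulli-average step would also allow).
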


The full proof is left in Appendix \ref{App:Thmain-counting}, and here we give some intuition about the proof. We mainly utilize the replica trick to write the 
average PL-moment on $2\alpha$-copy of the original Hilbert space $\mc{H}_d^{\otimes 2\alpha}$ \cite{Leone2022SRE,Haug2022MPSmagic,zhou2022hyper}. Consequently, the computational basis of every qubit can be labeled by a $2\alpha$-bit string \cite{zhou2022hyper}, i.e., $\vec{t_i}$ of the matrix $\mathrm{T}$ for the $i$-th qubit. The two constraints Eq.~\eqref{eq:conhyper1} and Eq.~\eqref{eq:conhyper2}, which look a bit complicated at first glance, actually correspond to the induced hypergraph structure of Eq.~\eqref{eq:GinSim} previously introduced in Sec.~\ref{sec:SREG}.
In particular, the first constraint Eq.~\eqref{eq:conhyper1} accounts for the effect of the expectation from $2\alpha$-replica on the edge set $E^{(1)}_{\vec{z}}$, and the second one Eq.~\eqref{eq:conhyper2} for the edge set $E^{(2)}_{\vec{x}}$, that is, edges introducing multi-qubit controlled gates. See Fig.~\ref{fig:randomCons} (b) for an illustration of the constraints when $c=3$ and $\alpha=2$.

\begin{figure}[tbhp!]
\centering
\resizebox{10cm}{!}{\includegraphics[scale=0.8]{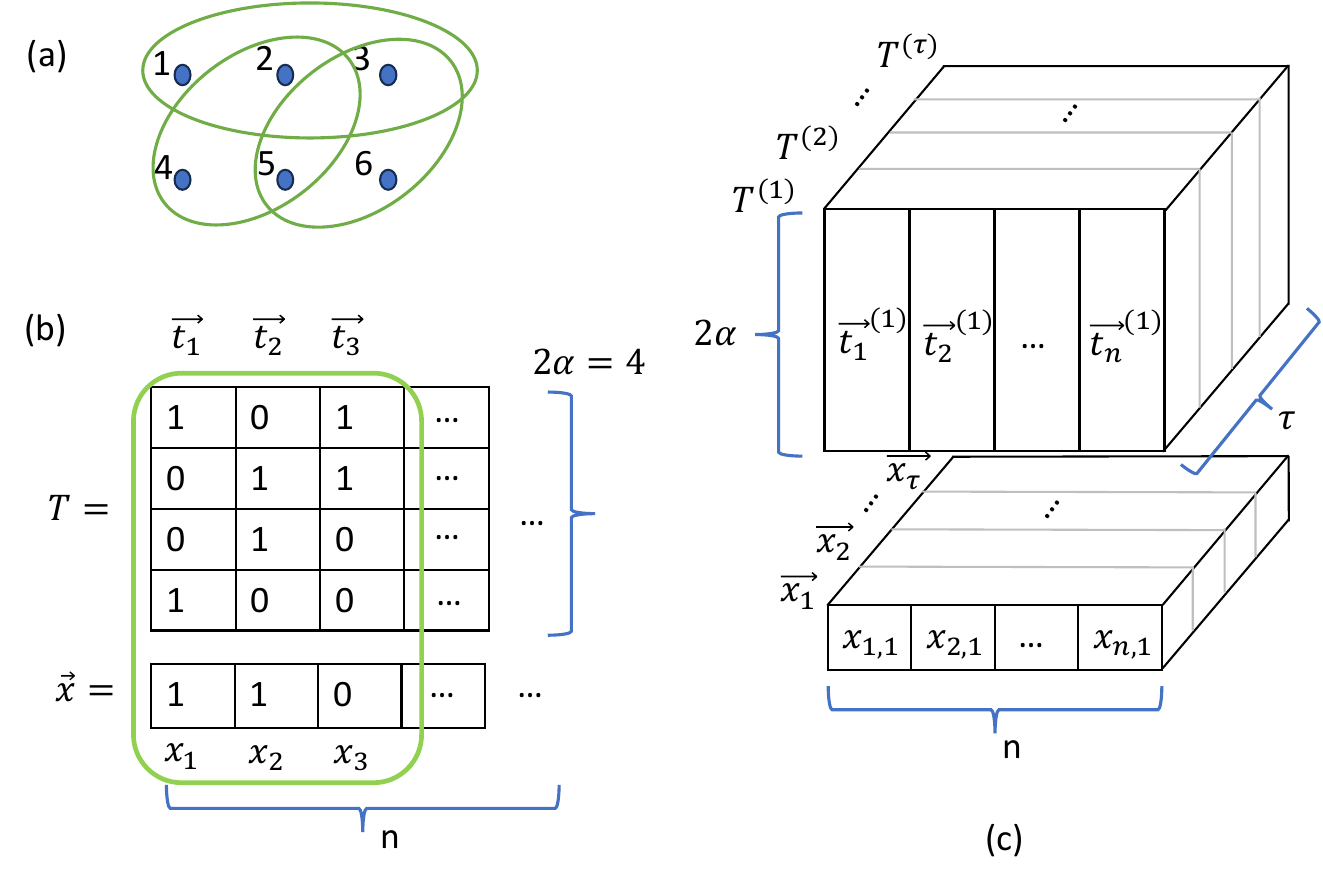}}
\caption{(a) A random $c=3$-unifrom hypergraph with 6 vertices. In this case, there are three $3$-edges selected, $\{1,2,3\},\{2,4,5\},\{3,5,6\}$.  The ensemble $\mathcal{E}_c^p$ in Def.~\ref{Def:ensemble} contains random hypergraph states, whose hypergraphs are generated by selecting all possible $c$-edges independently with probability $p$. (b) An illustration of the constraints in Thereom \ref{theo:randomave} and Corollary \ref{co:3special} with $c=3$ and $\alpha=2$. Here we explicitly give the first three 4-bit stings $\vec{t_1},\vec{t_2},\vec{t_3}$ with respective to the first three qubits, and clearly they satisfy the parity constraint in Eq.~\eqref{eq:conhyper1}. The second constraint in Eq.~\eqref{eq:conhyper2} is simplified to Eq.~\eqref{eq:conhyper3edge} as $c=3$. For a specific $3$-edge $e_3=\{1,2,3\}$, its induced constraint reads $x_1\norm{\vec{t_2}\bigodot\vec{t_3}}_1+x_2\norm{\vec{t_3}\bigodot\vec{t_1}}_1+x_3\norm{\vec{t_1}\bigodot\vec{t_2}}_1=0$, and it is satisfied if $\vec{x}=\{1,1,0,\cdots\}$. Note that the assignment of $\mathrm{T}=\{\vec{t_i}\}_{i=1}^n$ and $\vec{x}$ should satisfy all such constraint from every $3$-edge. (c) A graphic illustration of the 2-tuple $(\mathcal{T},\mathrm{X})$ in Thereom \ref{theo:randomvar}. $\mathcal{T}$ is a $2\alpha\times n\times \tau$ $3$-order tensor, and $\mathrm{X}$ is a $n\times \tau$ matrix. The first constraint in Eq.~\eqref{eq:convar1} is just the parity one for each $\vec{t_i}^{(s)}$ of $\mathcal{T}$. The second one in Eq.~\eqref{eq:convar2} is a generalization of Eq.~\eqref{eq:conhyper2}, by additionally summing the index $s\in [\tau]$. For every edge $e_c$, one first calculates Eq.~\eqref{eq:conhyper2} on each plane of this cube and then sums all of them in the longitudinal direction.}\label{fig:randomCons}
\end{figure}

Before showing refined results later, we first give some preliminary estimation of the counting problem in Theorem \ref{theo:randomave}. Note that Eq.~\eqref{eq:conhyper1} is indeed the parity constraint and independent of $\vec{x}$. 
Denote the set $\mc{D}=\left\{\vec{t}~\big|~\|\vec{t}\|_1=0\right\}$ of $2\alpha$-bit strings as the valid set, one directly has $|\mc{D}|=2^{2\alpha-1}$. Each $\vec{t_i}$ of $\mr{T}$ should be taken from $\mc{D}$, and consequently the entire matrix $\mr{T}$ comprises $\left(2^{2\alpha-1}\right)^n$ alternatives when Eq.~\eqref{eq:conhyper1} satisfied. Furthermore, as for Eq.~\eqref{eq:conhyper2}, one needs to find the number of $\vec{x}$ given a specific assignment of $\mathrm{T}$ with each $\vec{t_i} \in \mc{D}$.   Supposed that $\vec{x}=\vec{0}$, it is not hard to check that in this case, Eq.~\eqref{eq:conhyper2} induces no constraint on $\mr{T}$. If we ignore the constraint of Eq.~\eqref{eq:conhyper2}, it is clear that there are totally $2^n$ distinct $\vec{x}$. 
On account of these observations, we have a lower bound and a trivial upper bound of PL-moment as
\begin{equation}\label{eq:bounds}
    \frac{(2^{2\alpha-1})^n\cdot 1}{2^{2\alpha n}}=2^{-n}\leq\langle\mathbf{m}_{\alpha}\rangle_{\mc{E}_c}\leq \frac{(2^{2\alpha-1})^n\cdot 2^n}{2^{2\alpha n}}=1.
\end{equation}
The upper bound is just $1$ and is trivial by definition of moments. With a refined analysis of Eq.~\eqref{eq:conhyper2} and more exact counting of $\mathrm{N}(c,\alpha,n)$, a non-trivial upper bound of the PL-moment is shown as follows for general $\alpha$ and $c$.
\begin{proposition}\label{prop:general}
For any $c\geq 3 \in \mathbb{Z}$ and $\alpha\geq 2 \in \mathbb{Z}$, with the qubit number $n\gg\alpha$ and $n\gg c$, the average PL-moment
\begin{equation}\label{eq:genbound}
    \langle\mathbf{m}_{\alpha}\rangle_{\mc{E}_c}\leq \frac{2^{(c+2^{2\alpha-1})}}{2^n}.
\end{equation}
In particular, for $c=3$ and $\alpha=2$, it shows $\langle\mathbf{m}_{2}\rangle_{\mc{E}_3}\leq 2^{-(n-11)}$.
\end{proposition}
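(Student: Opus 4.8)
The plan is to obtain a good upper bound on the counting quantity $\mathrm{N}(c,\alpha,n)$ from Theorem~\ref{theo:randomave} by bounding, for each admissible matrix $\mathrm{T}$ (i.e.\ each $\mathrm{T}$ with all columns $\vec t_i$ in $\mc D$), the number of vectors $\vec x\in\{0,1\}^n$ satisfying the edge constraints Eq.~\eqref{eq:conhyper2}. Write $m_{jk}:=\|\vec t_j\bigodot\vec t_k\|_1\bmod 2$ (more generally $m_q:=\|\bigodot_{v_k\in e_c\setminus q}\vec t_k\|_1\bmod 2$ for proper subsets $q\subsetneq e_c$). For fixed $\mathrm{T}$ the constraint for each $c$-edge $e_c$ is a \emph{polynomial} equation over $\mathbb F_2$ in the variables $\{x_i\}_{i\in e_c}$ with coefficients $m_q$. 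The key structural observation to extract is: the number of distinct such constraints is bounded by a constant independent of $n$ once $\mathrm{T}$ is fixed up to the equivalence that matters, because the coefficient data $m_q$ takes values in a finite alphabet; and, more importantly, the only $c$-edges that impose a nontrivial constraint are those supported on the set of indices $i$ where $\vec t_i$ is ``generic'' in a suitable sense, whereas on the complementary large set the $\vec t_i$'s can be taken equal and force the constraint to be automatically satisfied.

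Concretely, I would argue as follows. First, split the columns of $\mathrm{T}$ into groups according to which element of $\mc D$ they equal; there are at most $|\mc D|=2^{2\alpha-1}$ groups. If some group $G_\star$ has large size, then for any $c$-edge $e_c$ containing two vertices from $G_\star$ with equal columns $\vec t_j=\vec t_k$, one has $\|\vec t_j\bigodot\vec t_k\|_1=\|\vec t_j\|_1=0\bmod 2$ (using Eq.~\eqref{eq:conhyper1}), which kills the corresponding monomials, and a short casework on the remaining monomials shows such an edge imposes no constraint. Hence only edges whose vertex set meets at least $c-1$ distinct ``singleton-ish'' groups can be constraining; since there are at most $2^{2\alpha-1}$ groups, the constrained edges are confined to a vertex set of size at most $\sim c\cdot 2^{2\alpha-1}$, and the number of \emph{independent} $\mathbb F_2$-linear consequences on $\vec x$ coming from all edges is at most the number of variables in that set, i.e.\ $O(c\,2^{2\alpha-1})$. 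Each independent linear constraint cuts the number of admissible $\vec x$ by a factor $2$ (nonlinear constraints only help, cutting by at least this much or being vacuous), so the number of good $\vec x$ is at most $2^{n}/2^{\,n - (c + 2^{2\alpha-1})} = 2^{\,c+2^{2\alpha-1}}$ after a careful bookkeeping — I would organize the bound so the exponent of the saved factor is exactly $n-(c+2^{2\alpha-1})$. Multiplying by the $(2^{2\alpha-1})^n$ choices for $\mathrm{T}$ and dividing by $2^{2\alpha n}$ gives exactly Eq.~\eqref{eq:genbound}: $(2^{2\alpha-1})^n\cdot 2^{c+2^{2\alpha-1}}/2^{2\alpha n}=2^{c+2^{2\alpha-1}}/2^{n}$.

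For the special case $c=3$, $\alpha=2$ I would run the same argument but track constants exactly. Here $|\mc D|=2^{3}=8$, the edge constraint specializes to the bilinear form $x_1 m_{23}+x_2 m_{13}+x_3 m_{12}=0$ displayed in Fig.~\ref{fig:randomCons}(b), and one can enumerate the finitely many group-patterns on three vertices to see precisely when a constraint is active; assembling these over all triples and counting independent linear relations should yield the number of good $\vec x$ bounded by $2^{11}$, hence $\langle\mathbf m_2\rangle_{\mc E_3}\le 2^{-(n-11)}$. The main obstacle I anticipate is the bookkeeping in the general step: one must verify that the total number of independent constraints on $\vec x$, aggregated over all $\binom{n}{c}$ edges, really is bounded by the clean constant $c+2^{2\alpha-1}$ rather than something that grows with $n$ — this requires showing that (a) the constraining edges live on a bounded vertex set, (b) nonlinear ($\deg\ge 2$) edge polynomials never \emph{increase} the solution count relative to the linear tally, and (c) the ``$+c$'' and ``$+2^{2\alpha-1}$'' in the exponent are genuinely achievable upper bounds and not merely heuristic, which is where the hypotheses $n\gg\alpha$ and $n\gg c$ are used to guarantee the large group $G_\star$ exists.
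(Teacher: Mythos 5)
Your proposal has a genuine gap: the central counting claim --- that for \emph{every} admissible $\mathrm{T}$ (all columns in $\mc D$) the number of $\vec x$ satisfying Eq.~\eqref{eq:conhyper2} is at most $2^{c+2^{2\alpha-1}}$ --- is false, and the reasoning offered for it is internally inconsistent. Take $\mathrm{T}=0$ (every column the zero string, which lies in $\mc D$): then every norm $\norm{\bigodot_{v_k\in e_c\setminus q}\vec t_k}_1$ vanishes, Eq.~\eqref{eq:conhyper2} holds for all $2^n$ vectors $\vec x$, and your constant bound fails badly. More structurally, your two key assertions pull in opposite directions: you argue that the constraining edges are confined to a vertex set of size $O(c\,2^{2\alpha-1})$, hence that there are only $O(c\,2^{2\alpha-1})$ independent linear constraints on $\vec x$; but a constant number of independent constraints cuts the count of $\vec x$ only by a constant factor, which would give $\langle\mathbf{m}_{\alpha}\rangle_{\mc E_c}=O(1)$ --- the trivial bound in Eq.~\eqref{eq:bounds} --- not $O(2^{-n})$. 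To save the factor $2^{-(n-c-2^{2\alpha-1})}$ you need $n-O(1)$ independent constraints, which cannot all be supported on a bounded vertex set. Your closing worry is therefore pointed the wrong way: the danger is not that the constraint count grows with $n$, but that your argument caps it at a constant.

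The error traces to the claim that an edge meeting a large group "imposes no constraint." For $c=3$ and $e_3=\{v_j,v_k,v_l\}$ with $\vec t_j=\vec t_k$, Eq.~\eqref{eq:conhyper3edge} reads $(x_j+x_k)\norm{\vec t_j\bigodot\vec t_l}_1=0$, which is the nontrivial constraint $x_j=x_k$ whenever $\norm{\vec t_j\bigodot\vec t_l}_1=1$. These $\sim n$ constraints are precisely what does the work: they force $x_i$ to be constant on each of the $\leq 2^{2\alpha-1}$ groups, leaving $O(2^{2^{2\alpha-1}})$ admissible $\vec x$. The paper's proof in Appendix~\ref{ApppropRecursion} is organized around exactly the dichotomy your argument collapses: either some witness $e_{\xi-1}^{(0)}$ with a nonvanishing norm exists, in which case Eq.~\eqref{eq:generalbound1} pins $\vec x$ down to at most $2^{\xi-1}\cdot 2^{2^{2\alpha-1}}$ choices while $\mathrm{T}$ ranges over all $2^{(2\alpha-1)n}$ admissible matrices; or no witness exists, in which case $\vec x$ is completely free ($2^n$ choices) but $\mathrm{T}$ is cut down by the pairwise matching condition $\norm{\vec t_{i_1}\bigodot\vec t_{i_2}}_1=0$ of Eq.~\eqref{eq:C2star} to at most $(2\alpha-1)!!\,(2^{\alpha})^n$ matrices. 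Both branches must be bounded and summed (via the recursion over $\xi=2,\dots,c$, which is where the constant $2^{c+2^{2\alpha-1}}$ actually comes from); a uniform per-$\mathrm{T}$ bound of the kind you propose handles neither branch correctly.
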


The upper bound in Proposition \ref{prop:general} is a bit loose considering the dependence on the parameters $c$ and $\alpha$, which is exponential to $c$ and even double-exponential to $\alpha$. However, suppose one only considers constant $c$ and $\alpha$ and focuses on the relation to the qubit number $n$, the upper bound then shows $\langle\mathbf{m}_{\alpha}\rangle_{\mc{E}_c}=O\left(2^{-n}\right)$, which matches the lower bound in Eq.~\eqref{eq:bounds}. We summarize this and its implication to SRE via Eq.~\eqref{eq:SREalphaAV} in the following corollary.

\begin{corollary}
    For any $c\geq 3 \in \mathbb{Z}$ and $\alpha\geq 2 \in \mathbb{Z}$, with $n\gg\alpha$ and $n\gg c$, the average SRE is lower bounded by
\begin{equation}\label{eq:genboundSRE}
    \langle\mathbf{M}_{\alpha}\rangle_{\mc{E}_c}\geq (\alpha-1)^{-1}[n-(c+2^{2\alpha-1})].
\end{equation}
In particular, for constant $c$ and $\alpha$, it shows that  $\langle\mathbf{M}_{\alpha}\rangle_{\mc{E}_c}\geq (\alpha-1)^{-1} n-O(1)$.
\end{corollary}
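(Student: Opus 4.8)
The plan is to obtain the corollary as an immediate consequence of Proposition~\ref{prop:general} combined with the Jensen-type inequality in Eq.~\eqref{eq:SREalphaAV}. First I would recall that, since $\alpha>1$, the concavity of $\log$ applied to the definition $\mathbf{M}_{\alpha}=(1-\alpha)^{-1}\log\mathbf{m}_{\alpha}$ yields $\langle\mathbf{M}_{\alpha}\rangle_{\mc{E}_c}\geq(1-\alpha)^{-1}\log\langle\mathbf{m}_{\alpha}\rangle_{\mc{E}_c}$. Thus it suffices to control the \emph{average} PL-moment rather than the average of its logarithm, which is precisely what Proposition~\ref{prop:general} supplies under the same hypotheses $n\gg\alpha$ and $n\gg c$ that are carried over verbatim into the corollary.

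Next I would substitute the bound $\langle\mathbf{m}_{\alpha}\rangle_{\mc{E}_c}\leq 2^{(c+2^{2\alpha-1})}/2^{n}=2^{-(n-c-2^{2\alpha-1})}$ from Proposition~\ref{prop:general} into the displayed inequality. The one point deserving care is the sign: since $\alpha\geq 2$, the prefactor $(1-\alpha)^{-1}$ is negative, so an \emph{upper} bound on $\langle\mathbf{m}_{\alpha}\rangle_{\mc{E}_c}$ translates into a \emph{lower} bound on $(1-\alpha)^{-1}\log\langle\mathbf{m}_{\alpha}\rangle_{\mc{E}_c}$. Explicitly, $\log\langle\mathbf{m}_{\alpha}\rangle_{\mc{E}_c}\leq-(n-c-2^{2\alpha-1})$, and multiplying through by $(1-\alpha)^{-1}<0$ flips the inequality to $\langle\mathbf{M}_{\alpha}\rangle_{\mc{E}_c}\geq(\alpha-1)^{-1}[n-(c+2^{2\alpha-1})]$, which is Eq.~\eqref{eq:genboundSRE}.

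Finally, for constant $c$ and $\alpha$ the quantity $c+2^{2\alpha-1}$ is a fixed constant independent of $n$, so the bound reads $\langle\mathbf{M}_{\alpha}\rangle_{\mc{E}_c}\geq(\alpha-1)^{-1}n-O(1)$; together with the universal upper bound $\mathbf{M}_{\alpha}(\ket{\Psi})\leq n$ (and the sharper $\mathbf{M}_{\alpha}(\ket{G})\leq n/(\alpha-1)$ from the remark after Theorem~\ref{th:bdegree}), this fixes the leading order of the average SRE. There is essentially no hard step here: all the combinatorial substance lives in Theorem~\ref{theo:randomave} and Proposition~\ref{prop:general}, and the only things to remain vigilant about are the direction of the inequality induced by the negative factor $(1-\alpha)^{-1}$ and the fact that the concavity step Eq.~\eqref{eq:SREalphaAV} is lossy, so that this corollary controls only the \emph{average} SRE and does not by itself establish concentration of $\mathbf{M}_{\alpha}$ around its maximum---that stronger statement requires the separate variance analysis.
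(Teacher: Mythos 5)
Your proposal is correct and follows exactly the route the paper intends: the corollary is an immediate combination of the Jensen-type bound in Eq.~\eqref{eq:SREalphaAV} with the upper bound on $\langle\mathbf{m}_{\alpha}\rangle_{\mc{E}_c}$ from Proposition~\ref{prop:general}, and your sign bookkeeping for the negative prefactor $(1-\alpha)^{-1}$ is the only nontrivial point, which you handle correctly.
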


Note that as $\alpha=2$, the lower bound of average SRE shows that the magic of random hypergraph states is nearly the maximal possible value $n$ for any constant $c\geq 3$ and sufficiently large qubit number $n$, which reproduces the result of Haar random states \cite{Leone2022SRE}.

Furthermore, a more refined analysis can be conducted when focusing on $3$-uniform hypergraph states. A direct simplification of Eq.~\eqref{eq:conhyper2} in this case is shown as follows.
\begin{corollary}\label{co:3special}
    As $c=3$, the second constraint in Eq.~\eqref{eq:conhyper2} of Theorem \ref{theo:randomave} can be simplified to
    \begin{equation}\label{eq:conhyper3edge}
    \begin{gathered}
        \sum_{v_i\in e_3}\left(x_i\norm{\bigodot_{k\neq i, v_k\in e_3}\vec{t_k}}_1\right)=0,\quad\forall |e_3|=3.
    \end{gathered}
    \end{equation}
\end{corollary}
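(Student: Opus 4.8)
The plan is to start from the second constraint Eq.~\eqref{eq:conhyper2} of Theorem~\ref{theo:randomave}, specialize it to $c=3$, and expand the sum over the subsets $q$ of a fixed $3$-edge $e_3=\{v_i,v_j,v_k\}$. It is cleanest to organize the summands by the cardinality of the complement $e_3\setminus q$, since it is $e_3\setminus q$ that enters the Hadamard product $\bigodot_{v_l\in e_3\setminus q}\vec{t_l}$. As $q$ ranges over the nonempty subsets of $e_3$, one has $|e_3\setminus q|\in\{0,1,2\}$, and the whole content of the corollary is that the $|e_3\setminus q|\le 1$ contributions are forced to vanish once the first constraint Eq.~\eqref{eq:conhyper1} is imposed.

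I would then check the three cases in turn. For a subset $q$ with $|e_3\setminus q|=1$, i.e.\ $|q|=2$, say $q=\{v_i,v_j\}$, one has $\bigodot_{v_l\in e_3\setminus q}\vec{t_l}=\vec{t_k}$, so the corresponding summand is $x_ix_j\norm{\vec{t_k}}_1$, which is zero because $\norm{\vec{t_k}}_1=0$ by Eq.~\eqref{eq:conhyper1}; all three such terms drop out termwise. For the (only) subset with $|e_3\setminus q|=0$, namely $q=e_3$, the relevant factor is the empty Hadamard product, i.e.\ the all-ones string of length $2\alpha$, whose $1$-norm equals $2\alpha\equiv 0\pmod 2$ since $\alpha\in\mathbb{Z}$, so this term vanishes as well. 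What remains is precisely the sum over the three singletons $q=\{v_l\}$, each contributing $x_l\norm{\bigodot_{v_m\in e_3,\,m\neq l}\vec{t_m}}_1$, and requiring that this residual sum vanish is exactly Eq.~\eqref{eq:conhyper3edge}. Imposing this for every $3$-edge $e_3$ closes the argument.

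There is no real obstacle here; the proof is a bookkeeping reduction. The only points deserving mild care are matching $\bigodot_{v_l\in e_3\setminus q}\vec{t_l}$ with the single row $\vec{t_k}$ when the complement is a singleton, and noting that Eq.~\eqref{eq:conhyper1} constrains each row $\vec{t_l}$ of $\mathrm{T}$ individually, which is what makes the $|q|=2$ summands cancel one by one rather than only in aggregate. It is worth emphasizing that the simplification is genuine: Eq.~\eqref{eq:conhyper3edge} is linear in $\vec{x}$, whereas the raw $c=3$ form of Eq.~\eqref{eq:conhyper2} also carries the quadratic monomials $x_ix_j$; eliminating them is what makes the refined counting of $\mathrm{N}(3,\alpha,n)$ in the subsequent analysis tractable.
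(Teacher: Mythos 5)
Your proof is correct and follows essentially the same route as the paper's: expand the sum over nonempty proper subsets $q$ of a $3$-edge, observe that the $|q|=2$ terms each contain a factor $\norm{\vec{t_k}}_1=0$ by Eq.~\eqref{eq:conhyper1}, and conclude that only the $|q|=1$ terms survive, which is Eq.~\eqref{eq:conhyper3edge}. Your extra remark on the $q=e_3$ case is harmless but unnecessary, since the sum in Eq.~\eqref{eq:conhyper2} runs over proper subsets only (and the paper separately notes that $(CZ_{\emptyset})^{\otimes 2\alpha}=(-1)^{2\alpha}=1$, consistent with your parity observation).
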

The proof is straightforward. As $c=3$, one only needs to sum over all possible $q$ with $|q|=1$ and $|q|=2$ in Eq.~\eqref{eq:conhyper2}. For $|q|=2$, the set $e_3\setminus q$ only has one element, so the $l_1$-norm is just for a single binary vector, which is already constrained by Eq.~\eqref{eq:conhyper1}. As a result, the constraint is reduced to that for $|q|=1$ as shown in Eq.~\eqref{eq:conhyper3edge}. Based on this simplification, we give the following more accurate counting results.
\begin{proposition}\label{prop:3cardi}
    The average PL-moments of random $3$-uniform hypergraph states are
    \begin{equation}
    \begin{aligned}
        &\langle\mathbf{m}_{2}\rangle_{\mc{E}_3}=\frac{7}{2^n}-\frac{14}{4^n}+\frac{8}{8^n},\label{eq:3cardiA2}\\
        &\langle\mathbf{m}_{\alpha}\rangle_{\mc{E}_3} \leq\frac{4}{2^n}+\frac{(2\alpha-1)!!}{2^{(\alpha-1)n}}, \quad\forall \alpha \geq 3 \in \mathbb{Z}.
    \end{aligned}
    \end{equation}
\end{proposition}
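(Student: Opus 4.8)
The plan is to specialize the counting problem of Theorem~\ref{theo:randomave} to $c=3$ via the simplified constraint Eq.~\eqref{eq:conhyper3edge} of Corollary~\ref{co:3special}, and to count the number $\mathrm{N}(3,\alpha,n)$ of valid pairs $(\mathrm{T},\vec{x})$ directly. Recall each column $\vec{t_i}$ of the $2\alpha\times n$ matrix $\mathrm{T}$ must lie in the even-weight set $\mc{D}=\{\vec{t}\in\mbb{F}_2^{2\alpha}:\|\vec{t}\|_1=0\}$, which has $2^{2\alpha-1}$ elements; then for every $3$-subset $\{v_i,v_j,v_k\}$ the parity condition
\[
x_i\,\|\vec{t_j}\odot\vec{t_k}\|_1 + x_j\,\|\vec{t_k}\odot\vec{t_i}\|_1 + x_k\,\|\vec{t_i}\odot\vec{t_j}\|_1 = 0
\]
must hold. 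The key observation I would make is that, for a fixed qubit $i$, this condition (over all pairs $\{j,k\}\not\ni i$ and for the sign of $x_i$) behaves very differently depending on whether $\vec{t_i}=\vec{0}$: if $\vec{t_i}=\vec{0}$, then $\|\vec{t_i}\odot\vec{t_j}\|_1=0$ for all $j$, so the index $i$ drops out of every constraint containing it, and $x_i$ is completely free. So the natural first step is to split the sum over $\mathrm{T}$ according to the set $Z(\mathrm{T})=\{i:\vec{t_i}=\vec{0}\}$ of ``trivial'' columns, say $|Z(\mathrm{T})|=r$; the $r$ corresponding bits of $\vec{x}$ contribute a free factor $2^r$, and the remaining columns are drawn from $\mc{D}\setminus\{\vec{0}\}$ (of size $2^{2\alpha-1}-1$).

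Next I would analyze the constraints restricted to the $n-r$ nontrivial columns. For $\alpha=2$ this is where the exact answer comes from: the even-weight vectors in $\mbb{F}_2^4$ other than $\vec{0}$ are the six weight-$2$ vectors; for two distinct weight-$2$ vectors $\vec{t},\vec{s}$ one has $\|\vec{t}\odot\vec{s}\|_1\in\{0,1\}$, and $\|\vec{t}\odot\vec{t}\|_1=\|\vec{t}\|_1=0$. I would classify pairs of the six weight-$2$ vectors by overlap (three ``parallel classes'' of disjoint pairs, the rest overlapping in one coordinate) and work out exactly when the three-term parity equation forces a relation among the $x_i$'s versus leaving them free; summing the resulting geometric-type series over $r$ from $0$ to $n$ should collapse to the stated closed form $7\cdot 2^{-n}-14\cdot 4^{-n}+8\cdot 8^{-n}$. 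A sanity check: at $r=n$ (all columns zero) we get $2^n$ choices of $\vec{x}$ and one $\mathrm{T}$, giving the $2^{-n}$ floor of Eq.~\eqref{eq:bounds}; the correction terms must account for nontrivial configurations that still admit many $\vec{x}$'s.

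For the bound when $\alpha\ge 3$ I would not attempt an exact count but instead bound $\mathrm{N}(3,\alpha,n)$ from above. Split off the $r=n$ term (contributing exactly $2^n$, i.e.\ $2^{-n}$ to the moment after dividing by $2^{4\alpha n}$... wait, by $2^{2\alpha n}$), the $r=n-1$ and $r=n-2$ terms (which I expect to contribute $O(2^{-n})$ in total, feeding the ``$4/2^n$'' part together with the $r=n$ term), and then bound the tail $r\le n-3$: here at least three nontrivial columns are present, and I would argue that the joint constraints from the $3$-edges among them force enough linear conditions on $\vec{x}$ (and restrict $\mathrm{T}$) that the contribution is $O\!\big((2\alpha-1)!!\,2^{-(\alpha-1)n}\big)$ — the double factorial $(2\alpha-1)!!$ naturally arising as it does for Haar moments via pairing/Wick-type counting on the $2\alpha$ replicas. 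The main obstacle I anticipate is precisely this last step: controlling the number of $\vec{x}$ compatible with a given $\mathrm{T}$ when many columns are nontrivial and the weight-$\odot$ overlaps $\|\vec{t_j}\odot\vec{t_k}\|_1$ can take many values in $\{0,\dots,2\alpha\}$, so that the linear system over $\mbb{F}_2$ for $\vec{x}$ has a rank that must be lower-bounded combinatorially; getting a clean $(2\alpha-1)!!$ prefactor rather than a cruder exponential-in-$\alpha$ bound will require carefully identifying which replica-index patterns dominate.
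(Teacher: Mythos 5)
Your central ``key observation'' is false, and the whole decomposition rests on it. If $\vec{t_i}=\vec{0}$, the constraint of Eq.~\eqref{eq:conhyper3edge} for an edge $\{v_i,v_j,v_k\}$ does \emph{not} become vacuous in $x_i$: the two terms that vanish are the ones multiplying $x_j$ and $x_k$ (they contain $\vec{t_i}$ in the Hadamard product), while the term multiplying $x_i$ is $\norm{\vec{t_j}\odot\vec{t_k}}_1$, which does not involve $\vec{t_i}$ at all. The constraint therefore reads $x_i\norm{\vec{t_j}\odot\vec{t_k}}_1=0$, so $x_i$ is \emph{forced to zero} as soon as some pair of other columns has odd overlap. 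Concretely, take $r=n-2$ with the two nontrivial columns equal to $(1,1,0,0)$ and $(1,0,1,0)$: every zero column $i$ sits in an edge with these two, forcing $x_i=0$, so there are exactly $4$ valid $\vec{x}$, not $4\cdot 2^{n-2}$. Your factor $2^r$ thus overcounts exponentially, and summing over $r$ cannot collapse to the stated closed form. The correct dichotomy is not organized by the number of zero columns but globally: either \emph{no} pair of columns has odd Hadamard overlap, in which case every constraint is vacuous and all $2^n$ vectors $\vec{x}$ are admissible, or some pair does, in which case $x_i=0$ is forced on every column equivalent to $\vec{0}$ and exactly $4$ choices of $\vec{x}$ survive. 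A second slip: for $\alpha=2$ the set $\mc{D}\setminus\{\vec{0}\}$ has $2^{3}-1=7$ elements, not six --- you dropped $\vec{1}=(1,1,1,1)$, which has even overlap with everything and behaves exactly like $\vec{0}$; the clean way to handle this (used in the paper) is the equivalence $\vec{t}\leftrightarrow\vec{1}-\vec{t}$, leaving four classes $\vec{d}_0,\dots,\vec{d}_3$ and the count $2^n\left[(3\cdot 2^n-2)2^n+4(4^n-3\cdot 2^n+2)\right]/2^{4n}$.

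For $\alpha\geq 3$ your sketch attaches the $(2\alpha-1)!!$ to the wrong case. The two summands of the bound come from the same global dichotomy: if every pairwise overlap $\norm{\vec{t_i}\odot\vec{t_j}}_1$ is even, then $\vec{x}$ is completely unconstrained (factor $2^n$) and it is $\mathrm{T}$ that is restricted --- the columns must pairwise intersect evenly, a matching-type count bounded by $(2\alpha-1)!!\,(2^{\alpha})^n$, yielding the term $(2\alpha-1)!!/2^{(\alpha-1)n}$; if some pair has odd overlap, $\mathrm{T}$ is only bounded trivially by $2^{(2\alpha-1)n}$ but $\vec{x}$ has at most $4$ choices, yielding $4/2^n$. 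Your plan to extract $(2\alpha-1)!!$ from the ``tail $r\leq n-3$'' by lower-bounding the rank of the linear system on $\vec{x}$ targets the wrong object: in that regime the restriction falls on $\vec{x}$ (at most $4$ solutions), and what it produces is the prefactor $4$, not the double factorial.
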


The proofs of Propositions \ref{prop:general} and \ref{prop:3cardi} are left in Appendix \ref{ApppropRecursion} and \ref{Appprop2}, respectively.



\subsection{Variance analysis of magic}\label{sec:VAR}
In this section, we further investigate the variance of PL-moments, which is helpful in constructing a sharp concentration result of SRE for random hypergraph states in the large $n$ limit. In this way, we find that hypergraph states which are easier to prepare, \emph{typically} own the maximal magic, similar to the Haar random states.

The variance of PL-moment is defined as

\begin{equation}\label{eq:malphaVar}
\begin{aligned}
\delta^2_\mc{E}[\mathbf{m}_{\alpha}]&:=\langle\mathbf{m}_{\alpha}^2\rangle_{\mc{E}}-\langle\mathbf{m}_{\alpha}\rangle_{\mc{E}}^2
    \end{aligned}
\end{equation}
with the average PL-moment $\langle\mathbf{m}_{\alpha}\rangle_{\mc{E}}$ given in Eq.~\eqref{eq:malphaAv}. The average PL-moment is investigated in the previous Sec.~\ref{sec:AV}, and hereafter, we focus on the analysis of the first term, i.e., the average of the $2$-th moment of the PL-moment, and then combine them to show the final variance.

As an analog and extension of Theorem \ref{theo:randomave}, 
for the general $\tau$-th moment of the PL-moment, we transform the calculation of its average on random hypergraph state ensembles into the following counting problem. 
\begin{theorem}\label{theo:randomvar}
For any integer $\alpha \geq 2 \in \mathbb{Z}$, the $\tau$-th moment of the $\alpha$-th PL-moment of $n$ qubit random hypergraph state ensembles $\mc{E}_c$ shows
\begin{equation}
\langle\mathbf{m}_{\alpha}^{\tau}\rangle_{\mc{E}_c}=\frac{\mathrm{N}^{(\tau)}(c,\alpha,n)}{2^{2\tau\alpha n}}
\end{equation}
where $\mathrm{N}^{(\tau)}(c,\alpha,n)$ is the number of $2$-tuple $(\mathcal{T},\mathrm{X} )$, such that the following two constraints are satisfied.
\begin{align}
    &\norm{\vec{t_i}^{(s)}}_1=0,\quad\forall i,s \label{eq:convar1} \\
    &\sum_{s=1}^{\tau} \sum_{q\subset e_c} \left(\prod_{v_{i}\in q}x_{i,s}\norm{\bigodot_{v_k\in e_c\setminus q}\vec{t_k}^{(s)}}_1\right)=0,\quad\forall |e_c|=c, \label{eq:convar2} 
\end{align}
where $\mathcal{T}$ is a $2\alpha\times n\times \tau$ rank-$3$ binary tensor,  with its element $2\alpha\times n$ matrix denoted by $\mathrm{T}^{(s)}=({\vec{t_1}}^{(s)},\cdots,\vec{t_i}^{(s)},\cdots,\vec{t_n}^{(s)})$ for $s\in[\tau]$, and the element  of the $i$-th binary vector $\vec{t_i}^{(s)}$ is $t^{(s)}_{j,i}$; $\mathrm{X}=(\vec{x_1},\cdots,\vec{x_s},\cdots,\vec{x_{\tau}})$ is an $n\times \tau$ binary matrix with elements $x_{i,s}$.
\end{theorem}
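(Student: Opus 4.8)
The plan is to prove Theorem~\ref{theo:randomvar} by extending the replica argument behind Theorem~\ref{theo:randomave} from $2\alpha$ to $2\alpha\tau$ copies of $\mc{H}_d$, organized into $\tau$ blocks of $2\alpha$ copies each; the integrality of $\alpha$ and $\tau$ is exactly what makes $\mathbf{m}_{\alpha}^{\tau}$ a genuine tensor power. First, since hypergraph states are real in the computational basis, the unessential phase $\omega(\vec{x},\vec{z})^{2\alpha}$ multiplies only vanishing contributions, so $\mathbf{m}_{\alpha}(\ket{G})=2^{-n}\sum_{\vec{x},\vec{z}}\Tr{X^{\vec{x}}Z^{\vec{z}}\kb{G}}^{2\alpha}$. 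Raising this to the $\tau$-th power and pulling the product of $\tau$ traces into a single trace on $\mc{H}_d^{\otimes 2\alpha\tau}$ gives
\begin{equation*}
\langle\mathbf{m}_{\alpha}^{\tau}\rangle_{\mc{E}_c}=2^{-\tau n}\sum_{\{(\vec{x}_s,\vec{z}_s)\}_{s=1}^{\tau}}\mbb{E}_{\Psi\in\mc{E}_c}\Tr{\Bigl(\bigotimes_{s=1}^{\tau}(X^{\vec{x}_s}Z^{\vec{z}_s})^{\otimes 2\alpha}\Bigr)(\kb{\Psi})^{\otimes 2\alpha\tau}},
\end{equation*}
where the Pauli acts by $X^{\vec{x}_s}Z^{\vec{z}_s}$ on all $2\alpha$ copies of block $s$.

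Next I would evaluate $\mbb{E}_{\Psi\in\mc{E}_c}(\kb{\Psi})^{\otimes 2\alpha\tau}$. By Definition~\ref{Def:ensemble} each state equals $\prod_{|e|=c}U_e\ket{+}^{\otimes n}$ with the $U_e\in\{\id,CZ_e\}$ independent, diagonal, Hermitian and mutually commuting, so the average factorizes over hyperedges into the commuting averaging channels $\rho\mapsto\tfrac12(\rho+CZ_e^{\otimes 2\alpha\tau}\rho\,CZ_e^{\otimes 2\alpha\tau})$. Expanding $(\kb{+})^{\otimes 2\alpha\tau n}=2^{-2\alpha\tau n}\sum_{A,B}\ket{A}\bra{B}$, with $A,B$ read as $2\alpha\tau\times n$ binary matrices, each channel acts diagonally and contributes a Kronecker delta $\delta[\Phi_e(A)=\Phi_e(B)]$ (both sides taken modulo $2$), where, because $CZ_e^{\otimes 2\alpha\tau}$ factorizes over copies, the induced phase is \emph{copy-additive}: $\Phi_e(A)=\sum_{\text{copies }r}\prod_{v_i\in e}A_{r,i}=\sum_{s=1}^{\tau}\phi_e(\mathrm{T}^{(s)})$, with $\mathrm{T}^{(s)}$ the $s$-th $2\alpha\times n$ block of $A$ and $\phi_e$ the single-block phase from the $\tau=1$ analysis. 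Hence $\mbb{E}_{\Psi\in\mc{E}_c}(\kb{\Psi})^{\otimes 2\alpha\tau}=2^{-2\alpha\tau n}\sum_{A,B}\bigl(\prod_{|e|=c}\delta[\Phi_e(A)=\Phi_e(B)]\bigr)\ket{A}\bra{B}$.

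Then I would insert the Pauli layer into the trace and carry out the remaining sums. The $X$-parts pin $B$ to be $A$ with every row of its $s$-th block shifted by $\vec{x}_s$, eliminating the $B$-sum; expanding each $\prod_{v_i\in e}(a_i+x_{i,s})$ over $\mathbb{F}_2$ (where $a_i$ runs over the entries of a row of block $s$) and cancelling the $q=\emptyset$ and $q=e$ terms then turns each $\delta[\Phi_e(A)=\Phi_e(B)]$ into exactly the constraint Eq.~\eqref{eq:convar2}, whose inner $\sum_{s=1}^{\tau}$ is precisely the copy-additivity of $\Phi_e$ (as in Corollary~\ref{co:3special}, for $c=3$ the $|q|=2$ terms drop further once Eq.~\eqref{eq:convar1} holds). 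The $Z$-parts contribute the phase $(-1)^{\sum_s\vec{z}_s\cdot(\sum_{j}\vec{a}^{(s,j)})}$ with $\vec{a}^{(s,j)}$ the $j$-th row of block $s$, so summing each $\vec{z}_s$ over $\{0,1\}^n$ yields $2^{\tau n}$ times the parity constraints Eq.~\eqref{eq:convar1}. Collecting the prefactors $2^{-\tau n}$ (moment normalizations), $2^{-2\alpha\tau n}$ (the $\ket{+}$ expansion) and $2^{\tau n}$ (the $\vec{z}$ sums) leaves $2^{-2\tau\alpha n}$ times the number of admissible pairs $(\mathcal{T},\mathrm{X})$ --- with $\mathcal{T}$ the $2\alpha\times n\times\tau$ tensor assembled from the surviving $A$ and $\mathrm{X}=(\vec{x}_1,\dots,\vec{x}_\tau)$ --- which is precisely $N^{(\tau)}(c,\alpha,n)/2^{2\tau\alpha n}$.

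The powers-of-two bookkeeping and the $\mathbb{F}_2$ expansion of $\prod_{v_i\in e}(a_i+x_i)$ are routine, being identical to the $\tau=1$ case of Theorem~\ref{theo:randomave}. The one genuinely new point --- and the only place the $\tau$ copies of $\mathbf{m}_{\alpha}$ interact --- is the copy-additivity $\Phi_e=\sum_{s}\phi_e(\mathrm{T}^{(s)})$: since $CZ_e^{\otimes 2\alpha\tau}$ is a product of single-copy $CZ_e$'s, its induced phase is a plain sum of single-copy phases with no cross-block products, which is exactly why the hyperedge constraint Eq.~\eqref{eq:convar2} acquires the extra $\sum_{s=1}^{\tau}$ relative to Eq.~\eqref{eq:conhyper2} while the parity constraints Eq.~\eqref{eq:convar1} remain completely decoupled across blocks. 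I would also double-check the opening $\omega$-free reduction of $\mathbf{m}_{\alpha}$, which again uses only the reality of $\ket{G}$ in the computational basis. An alternative route would raise the closed form of Theorem~\ref{th:PLmom} to the $\tau$-th power and average $\prod_s\Tr{U(G^{*}_{\vec{x}_s,\vec{z}_s})}^{2\alpha}$ over the random hypergraph, but this requires the joint law of the induced hypergraphs $G^{*}_{\vec{x}_s,\vec{z}_s}$ and is less transparent, so I would not pursue it.
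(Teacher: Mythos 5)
Your proposal is correct and follows essentially the same route as the paper: a replica-trick computation on $2\alpha\tau$ copies organized into $\tau$ blocks, in which the hyperedge average couples the blocks through a single indicator (yielding the extra $\sum_{s=1}^{\tau}$ in Eq.~\eqref{eq:convar2}) while the $Z$-sums produce the decoupled parity constraints of Eq.~\eqref{eq:convar1}, reducing everything to the same counting problem with the same $2^{-2\tau\alpha n}$ normalization. The only difference is organizational --- you expand $\ket{+}^{\otimes n}$ directly in the computational basis and push the dephasing channels through, whereas the paper routes through the stabilizer decomposition (Observation~\ref{ob:beauty}) and pins the per-replica stabilizer indices via Lemma~\ref{le:1} --- but these are dual bookkeepings of the identical calculation.
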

Note that Theorem \ref{theo:randomvar} reduces to Theorem \ref{theo:randomave} as $\tau=1$, and its proof is similar but more complicated, which is left in Appendix \ref{Apptheo:randomvar}. The counting problem here is more challenging even for $\tau=2$ of interest, compared to the one in Theorem \ref{theo:randomave} of Sec.~\ref{sec:AV}. Instead of showing a general result similar to Proposition \ref{prop:general}, we focus on the case where $\alpha=2$ and $c=3$ here and further give an upper bound for the variance of the PL-moment.
\begin{proposition}\label{prop:var}
    The variance of $2$-order PL-moment on the random $3$-uniform hypergraph state ensemble $\mc{E}_3$ is
    \begin{equation}
        \delta^2_{\mc{E}_3}[\mathbf{m}_{2}]=\langle\mathbf{m}_{2}^2\rangle_{\mc{E}_3}-\langle\mathbf{m}_{2}\rangle_{\mc{E}_3}^2\leq\frac{60}{2^{3n}}.
    \end{equation}
\end{proposition}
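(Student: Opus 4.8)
The plan is to bound the variance by controlling the second moment $\langle\mathbf{m}_2^2\rangle_{\mc{E}_3}$ from above and then subtracting the exact value of $\langle\mathbf{m}_2\rangle_{\mc{E}_3}^2$ obtained from Proposition~\ref{prop:3cardi}. By Theorem~\ref{theo:randomvar} with $\tau=2$, $\alpha=2$, $c=3$, we have $\langle\mathbf{m}_2^2\rangle_{\mc{E}_3}=\mathrm{N}^{(2)}(3,2,n)/2^{8n}$, where $\mathrm{N}^{(2)}$ counts pairs $(\mathcal{T},\mathrm{X})$ with $\mathcal{T}$ a $4\times n\times 2$ binary tensor and $\mathrm{X}$ an $n\times 2$ binary matrix, subject to the parity constraints \eqref{eq:convar1} on each $\vec{t_i}^{(s)}$ and the edge constraints \eqref{eq:convar2}, which for $c=3$ simplify (exactly as in Corollary~\ref{co:3special}, now summed over $s=1,2$) to $\sum_{s=1}^{2}\sum_{v_i\in e_3}x_{i,s}\norm{\bigodot_{k\neq i,v_k\in e_3}\vec{t_k}^{(s)}}_1=0$ for every $3$-edge $e_3$. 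So the first step is to set up this counting problem explicitly and recall that the parity constraint restricts each $\vec{t_i}^{(s)}$ to the valid set $\mc{D}$ with $|\mc{D}|=2^{3}=8$.

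Next I would organize the count by the "type" of each qubit, i.e.\ by the pair of vectors $(\vec{t_i}^{(1)},\vec{t_i}^{(2)})\in\mc{D}\times\mc{D}$ and the pair of bits $(x_{i,1},x_{i,2})$. For a fixed edge $\{i,j,k\}$ the summand in \eqref{eq:convar2} depends only on these local data at $i,j,k$, so the edge constraint is a condition on triples of qubit types. I would identify which qubit types are "inert" — those for which the edge contribution vanishes regardless of the partner qubits (for instance $\vec{t_i}^{(s)}=\vec 0$ for both $s$, or $x_{i,1}=x_{i,2}=0$) — and classify the remaining "active" types. The key structural fact to extract is that, just as in the $\tau=1$ analysis behind Proposition~\ref{prop:3cardi}, only a bounded number of qubits can be "active" without forcing a contradiction among the $\binom{n}{3}$ edge constraints; the dominant contribution comes from configurations where almost all qubits are inert, contributing a factor $8^{2}=64$ per inert qubit to $\mathrm{N}^{(2)}$ against the normalization $2^{8}$ per qubit, i.e.\ $2^{-2}$ per qubit, giving the leading $2^{-2n}=\langle\mathbf{m}_2\rangle^2$ scaling. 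I would then show that the subleading corrections — the entropy of placing a constant number of active qubits plus the consistency conditions they must satisfy — contribute at most $O(2^{-3n})$, and track the constant $60$ by combining this with the explicit lower-order terms of $\langle\mathbf{m}_2\rangle_{\mc{E}_3}^2=(7\cdot 2^{-n}-14\cdot 4^{-n}+8\cdot 8^{-n})^2$, whose leading term $49\cdot 4^{-n}$ should cancel against the leading term of $\langle\mathbf{m}_2^2\rangle_{\mc{E}_3}$, leaving a negative $O(4^{-n})$ piece plus positive $O(8^{-n})$ pieces that must be bounded by $60\cdot 2^{-3n}$.

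The main obstacle I expect is precisely this cancellation at order $4^{-n}$: the bound is nontrivial only because the leading terms of $\langle\mathbf{m}_2^2\rangle$ and $\langle\mathbf{m}_2\rangle^2$ agree, so a crude upper bound on $\mathrm{N}^{(2)}$ is useless and one needs the $4^{-n}$ coefficient of $\langle\mathbf{m}_2^2\rangle$ computed \emph{exactly} (or at least shown to be $\le 49\cdot 4^{-n}$ with the deficit absorbable). This forces a careful two-copy version of the recursive counting argument used for Proposition~\ref{prop:3cardi} (Appendix~\ref{Appprop2}): one must enumerate, for the two-replica problem, the configurations contributing at orders $2^{-n}$, $4^{-n}$, and $8^{-n}$, handling the interaction between the two replica-layers (the cross terms in \eqref{eq:convar2} linking $s=1$ and $s=2$ through the shared edge, even though the $\vec{t}$ and $x$ data are independent across $s$). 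Once the order-$4^{-n}$ terms are pinned down and shown to cancel, the remaining $O(2^{-3n})$ bookkeeping — counting the admissible active-qubit patterns and their multiplicities to land on the constant $60$ — is routine, and I would relegate the detailed enumeration to the appendix.
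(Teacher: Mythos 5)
Your overall architecture is the right one and matches the paper's: invoke Theorem~\ref{theo:randomvar} with $\tau=2$, $\alpha=2$, $c=3$; bound $\langle\mathbf{m}_2^2\rangle_{\mc{E}_3}$ by a two-replica version of the counting argument behind Proposition~\ref{prop:3cardi}; and exploit the fact that the leading $49\cdot 4^{-n}$ coefficient must cancel exactly against $\langle\mathbf{m}_2\rangle_{\mc{E}_3}^2$, leaving only an $O(2^{-3n})$ residue. Identifying that cancellation as the crux is correct and is indeed why a crude bound on $\mathrm{N}^{(2)}$ is useless.

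The genuine gap is in the mechanism you propose for organizing the count. Your classification into ``inert'' and ``active'' qubit types, together with the claim that ``only a bounded number of qubits can be active without forcing a contradiction among the $\binom{n}{3}$ edge constraints,'' is false, and executing the plan as written would fail there. For example, taking $\vec{t_i}^{(1)}=(1,1,0,0)$ and $\vec{t_i}^{(2)}=(1,0,1,0)$ for \emph{every} qubit satisfies all constraints with $\mathrm{X}$ completely free, since all pairwise Hadamard products within each replica have even weight; and the dominant contributions to $\mathrm{N}^{(2)}$ come precisely from configurations in which essentially all qubits carry nontrivial $\vec{t}$'s. The constraint does not manifest as a bound on the number of active qubits but as a \emph{global freezing of} $\mathrm{X}$: whenever two qubits $k_1,k_2$ witness $\norm{\vec{t}^{(1)}_{k_1}\odot\vec{t}^{(1)}_{k_2}}_1=1$ and/or $\norm{\vec{t}^{(2)}_{k_1}\odot\vec{t}^{(2)}_{k_2}}_1=1$, the corresponding component(s) of $(x_{i,1},x_{i,2})$ (respectively $x_{i,1}+x_{i,2}$ in the mixed case) must agree across all qubits sharing the same type, cutting the count of $\mathrm{X}$ from $4^n$ down to $O(1)$ or $O(2^n)$. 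The paper's proof partitions configurations by which of these three pair-witness conditions hold; the three main case-counts then sum to exactly $\bigl(2^n(3\cdot 2^n-2)+4(4^n-3\cdot 2^n+2)\bigr)^2$, i.e.\ the square of the numerator of $\langle\mathbf{m}_2\rangle_{\mc{E}_3}$, and the single remaining case (only the ``mixed'' condition c) holding) contributes the $60\cdot 2^{3n}$ excess. Your per-qubit bookkeeping also misaccounts the normalization (an inert qubit in your sense contributes far fewer than $64$ choices, and the $4^{-n}$ scaling comes from $\mathcal{T}$ being nearly free while $\mathrm{X}$ is frozen, not from a per-qubit factor of $2^{-2}$), and the claim that a ``negative $O(4^{-n})$ piece'' survives the cancellation cannot be right, since the variance is nonnegative. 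These would all need to be repaired by replacing the inert/active scheme with the pair-witness case analysis before the constant $60$ can be extracted.
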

The proof is by estimating the counting result in Theorem \ref{theo:randomvar} for $\tau=2, \alpha=2$ and $c=3$, and then combining with $\langle\mathbf{m}_{2}\rangle_{\mc{E}_3}$ 
in Eq.~\eqref{eq:3cardiA2}, and we leave it in Appendix \ref{Appprop:var}. 
Note that the standard variance here $\delta_{\mc{E}_3}[\mathbf{m}_{2}]=O(2^{-1.5n})$ is exponentially small than its mean value $\langle\mathbf{m}_{\alpha}\rangle_{\mc{E}_3}=O(2^{-n})$ as shown in Eq.~\eqref{eq:3cardiA2}. Consequently, by utilizing Chebyshev's inequality, 
there is the concentration of measure effect of magic for the $\mc{E}_3$ ensemble. 
\begin{corollary}\label{co:concentarte}
    For a hypergraph state $\ket{G_{n,3}}$ of $n$-qubit chosen randomly from the $3$-unifrom hypergraph state ensemble $\mc{E}_3$, the probability that its \renyit\ larger than $n-3$ is almost $1$, that is
    \begin{equation}
   \Pr\left\{\mb{M}_2(\ket{G_{n,3}})\geq n-3\right\}\geq 1-\frac{60}{2^n}.
    \end{equation}
    \end{corollary}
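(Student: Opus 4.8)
The plan is to combine the variance bound of Proposition~\ref{prop:var} with the mean formula in Eq.~\eqref{eq:3cardiA2} through Chebyshev's inequality, and then translate the resulting statement about the PL-moment $\mathbf{m}_2$ into a statement about the \renyit\ $\mathbf{M}_2$. First I would recall that $\mathbf{M}_2(\ket{G_{n,3}}) = -\log \mathbf{m}_2(\ket{G_{n,3}})$ from Eq.~\eqref{eq:SREfrist} with $\alpha=2$ (or directly Eq.~\eqref{eq:SREalpha}), so the event $\{\mathbf{M}_2 \geq n-3\}$ is exactly the event $\{\mathbf{m}_2 \leq 2^{-(n-3)} = 8\cdot 2^{-n}\}$. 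Thus it suffices to upper bound $\Pr\{\mathbf{m}_2(\ket{G_{n,3}}) > 8\cdot 2^{-n}\}$ by $60/2^n$.

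Next I would apply Chebyshev's inequality in the form $\Pr\{|\mathbf{m}_2 - \langle \mathbf{m}_2\rangle| \geq t\} \leq \delta^2_{\mc{E}_3}[\mathbf{m}_2]/t^2$. From Eq.~\eqref{eq:3cardiA2}, $\langle \mathbf{m}_2\rangle_{\mc{E}_3} = 7\cdot 2^{-n} - 14\cdot 4^{-n} + 8\cdot 8^{-n} \leq 7\cdot 2^{-n}$, so $\mathbf{m}_2 > 8\cdot 2^{-n}$ forces $\mathbf{m}_2 - \langle\mathbf{m}_2\rangle > 8\cdot 2^{-n} - 7\cdot 2^{-n} = 2^{-n}$, hence certainly $|\mathbf{m}_2 - \langle \mathbf{m}_2\rangle| > 2^{-n}$. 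Taking $t = 2^{-n}$ and using Proposition~\ref{prop:var}, $\delta^2_{\mc{E}_3}[\mathbf{m}_2] \leq 60\cdot 2^{-3n}$, I get $\Pr\{\mathbf{m}_2 > 8\cdot 2^{-n}\} \leq \Pr\{|\mathbf{m}_2 - \langle\mathbf{m}_2\rangle| > 2^{-n}\} \leq 60\cdot 2^{-3n} / 2^{-2n} = 60\cdot 2^{-n}$. Therefore $\Pr\{\mathbf{M}_2(\ket{G_{n,3}}) \geq n-3\} = \Pr\{\mathbf{m}_2 \leq 8\cdot 2^{-n}\} \geq 1 - 60/2^n$, which is the claim. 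A minor care point is that strictness of inequalities and the precise constant $n-3$ vs.\ the slack $7\cdot 2^{-n}$ versus $8\cdot 2^{-n}$ works out because the gap $8-7=1$ exactly matches the chosen $t=2^{-n}$; one should double-check whether the author actually wants $n-3$ (giving slack $1$) or whether a slightly larger threshold like $n - \log 7$ would be cleaner — but $n-3$ is safely valid.

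Honestly, there is no real obstacle here: every ingredient (the mean, the variance, and the identity relating $\mathbf{M}_2$ to $\mathbf{m}_2$) has already been established earlier in the paper, so the proof is a two-line application of Chebyshev. The only thing to be careful about is bookkeeping of constants and the direction of the inequality when passing through the $-\log$, since $\log$ is increasing and $\mathbf{m}_2 \mapsto -\log \mathbf{m}_2$ is decreasing, so ``large magic'' corresponds to ``small moment,'' and a one-sided deviation bound suffices. I would write it as a clean three-sentence proof referencing Eq.~\eqref{eq:3cardiA2}, Proposition~\ref{prop:var}, and Chebyshev, with the $\mathbf{M}_2 = -\log\mathbf{m}_2$ step stated explicitly.
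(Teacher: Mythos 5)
Your proposal is correct and follows essentially the same route as the paper's own proof: convert the event $\{\mathbf{M}_2\geq n-3\}$ into $\{\mathbf{m}_2\leq 8\cdot 2^{-n}\}$ via $\mathbf{M}_2=-\log\mathbf{m}_2$, use $\langle\mathbf{m}_2\rangle_{\mc{E}_3}\leq 7\cdot 2^{-n}$ to reduce the complement to a deviation of at least $2^{-n}$, and apply Chebyshev with the variance bound of Proposition~\ref{prop:var}. All constants and inequality directions check out, so no further changes are needed.
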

A similar concentration effect of magic is also observed for Haar random states \cite{Leone2022SRE,Liu2022manybodymagic,white2020mana}. However, compared with Haar random states, $\mc{E}_3$ are quite easy to prepare, i.e., just by randomly operating three-qubit $CCZ$ gates.

\subsection{Average magic with general probability $p$}\label{sec:ppp}
In the previous two subsections, Sec.~\ref{sec:AV} and Sec.~\ref{sec:VAR}, we mainly focus on the magic properties of the hypergraph state ensemble $\mc{E}_c$. In this section, we extend the ensemble $\mc{E}_c$ to $\mc{E}_c^p$ in Definition \ref{Def:ensemble}, where each $c$-edge is randomly selected by the probability parameter $p$. In particular, we focus on the $c=3$ and $\alpha=2$ cases and show the following analytical formula for the average PL-moment for any $n$-qubit system.

\begin{theorem}\label{theo:p-ave}
    The average $2$-order PL-moment of the state ensemble $\mc{E}_3^p$, i.e., random $3$-uniform hypergraph state with each hyperedge selected by probability $p$, shows
  \begin{equation}\label{eq:pave}
   \langle\mathbf{m}_{2}\rangle_{\mc{E}_3^p}=2^{-3n}\sum_{\vec{\kappa}}{n \choose \vec{\kappa}}(1-2p)^{f(\vec{\kappa})}.
\end{equation}
Here $\vec{\kappa}=\left(\vec{\kappa^+},\vec{\kappa^-}\right)$ is a $8$-dimension vector composed of two $4$-dimension vectors $\vec{\kappa^+}$ and $\vec{\kappa^-}$, with each element $\kappa^{\pm}_{1\leq i \leq 4}\in \mbb{N}$ being non-negative integer; the summation of the elements of $\vec{\kappa}$ equals $\sum_{i=1}^4(\kappa_i^+ + \kappa_i^-)=n$, and the multinomial coefficient ${n \choose \vec{\kappa}}$ is short for the combinatorial number ${n \choose \kappa_0^+,\kappa_0^-,\kappa_1^+,\kappa_1^-,\kappa_2^+,\kappa_2^-,\kappa_3^+,\kappa_3^-}$; the function reads
\begin{equation}\label{eq:fa}
\begin{aligned}
    f(\vec{\kappa})=&\kappa_0^+(\kappa_1\kappa_2+\kappa_2\kappa_3+\kappa_3\kappa_1)\\
    &+\kappa_1^+\kappa_1^-(\kappa_2+\kappa_3)+\kappa_2^+\kappa_2^-(\kappa_3+\kappa_1)+\kappa_3^+\kappa_3^-(\kappa_1+\kappa_2)\\
    &+\kappa_1^+\kappa_2^-\kappa_3^-+\kappa_2^+\kappa_3^-\kappa_1^-+\kappa_3^+\kappa_1^-\kappa_2^-+\kappa_1^+\kappa_2^+\kappa_3^+
\end{aligned}
\end{equation}
with $\kappa_i=\kappa_i^++\kappa_i^-$ for short.
\end{theorem}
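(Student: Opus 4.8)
\textbf{Proof plan for Theorem \ref{theo:p-ave}.}

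The plan is to start from the counting expression in Theorem \ref{theo:randomave} applied to $\alpha=2$, $c=3$, but with the crucial modification that the $c$-edges are now selected independently with probability $p$ rather than $1/2$. First I would redo the replica/expectation computation: instead of each $CZ_{e_3}$ contributing $\frac12(1+(-1)^{\text{phase}})$ (which kills all non-even terms and gives the hard constraint Eq.~\eqref{eq:conhyper3edge}), the expectation over a single $3$-edge $e_3$ now contributes a factor $(1-p) + p\,(-1)^{\phi(e_3)} = \tfrac{1}{2}\big[(1+(1-2p)) + (1-(1-2p))(-1)^{\phi(e_3)}\big]$, where $\phi(e_3)\in\{0,1\}$ is the parity appearing in the left-hand side of Eq.~\eqref{eq:conhyper3edge} for that edge, evaluated on the replica matrix $\mathrm{T}$ and the auxiliary vector $\vec x$. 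Multiplying over all $\binom n3$ edges and expanding, one gets $\langle \mathbf{m}_2\rangle_{\mc E_3^p} = 2^{-5n}\sum_{\mathrm{T},\vec x}\prod_{|e_3|=3}\big[(1-p)+p(-1)^{\phi(e_3)}\big]$, where $\mathrm{T}$ ranges over all $4\times n$ binary matrices and $\vec x$ over all $n$-bit vectors (the overall prefactor matching $2^{-2\alpha n}=2^{-4n}$ times a $2^{-n}$ coming from the $\vec x$-summation normalization should be tracked carefully, and reconciled against the known $p=1/2$ value in Eq.~\eqref{eq:3cardiA2} as a sanity check).

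Next I would collapse the sum over $\mathrm{T},\vec x$ to a sum over \emph{types}. Each column of the pair $(\mathrm{T},\vec x)$ is a single qubit's data: a $4$-bit string $\vec t_i$ with $\|\vec t_i\|_1=0$ (from Eq.~\eqref{eq:conhyper1}) together with a bit $x_i$. The even-weight $4$-bit strings are the $8$ elements $\{0000,1111,1100,0011,1010,0101,1001,0110\}$, which group into $4$ classes under the relevant symmetry (this is where the ``$4$'' in $\vec\kappa^\pm$ comes from — I expect the natural labels to be $\{0,1,2,3\}$ corresponding to $\vec t=\vec 0$ and the three ``Pauli-type'' nontrivial patterns, or equivalently to $\bigodot$-products landing in one of three nonzero classes plus zero); the superscript $\pm$ records whether $x_i=0$ or $x_i=1$. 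So a configuration $(\mathrm{T},\vec x)$ is determined up to relabeling by the occupation vector $\vec\kappa=(\vec\kappa^+,\vec\kappa^-)$ with $\sum(\kappa_i^++\kappa_i^-)=n$, and there are $\binom{n}{\vec\kappa}$ configurations of each type. The key algebraic step is then to show that $\prod_{|e_3|=3}\big[(1-p)+p(-1)^{\phi(e_3)}\big]$ depends on $(\mathrm{T},\vec x)$ only through $\vec\kappa$, and equals $(1-p)^{\binom n3 - f(\vec\kappa)}p^0\cdots$ — more precisely, expanding and noting each edge contributes $(1-2p)$ exactly when $\phi(e_3)=1$, the product equals $2^{-\binom n3}\cdot$(stuff), but after the $2^{-5n}$ and the type count are folded in, the clean form $2^{-3n}\sum_{\vec\kappa}\binom{n}{\vec\kappa}(1-2p)^{f(\vec\kappa)}$ emerges, so it remains to identify $f(\vec\kappa)$ as exactly the number of $3$-edges $\{i,j,k\}$ for which $\phi(\{i,j,k\})=1$, as a function of how many qubits sit in each of the $8$ classes.

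The heart of the argument — and the main obstacle — is this last combinatorial identification: computing, for an edge whose three vertices lie in classes $a,b,c$ (with signs), the value of $\phi = x_i\|\vec t_j\odot\vec t_k\|_1 + x_j\|\vec t_k\odot\vec t_i\|_1 + x_k\|\vec t_i\odot\vec t_j\|_1 \bmod 2$, and then counting triples. One needs the multiplication table of $\odot$ on the three nonzero even-weight classes together with the values $\|\vec u\odot\vec v\|_1\bmod 2$ for all class pairs (these behave like an $\mathbb F_2$ inner-product / the $\mathbb F_4$-like structure on $\{0\}\cup\{1,2,3\}$, where distinct nonzero classes have odd overlap and a class with itself has even overlap-ish — the precise table must be written out). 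Then $\phi=1$ forces constraints coupling which classes the three vertices occupy and how many of their $x$-bits are $1$; enumerating the surviving cases and counting ordered-then-unordered triples yields precisely the eight monomials in Eq.~\eqref{eq:fa}. I would organize this as a short case table: the term $\kappa_0^+(\kappa_1\kappa_2+\kappa_2\kappa_3+\kappa_3\kappa_1)$ from edges with one vertex in class $0$ with $x=1$ and the other two in distinct nonzero classes (either sign), $\kappa_1^+\kappa_1^-(\kappa_2+\kappa_3)$ from two vertices in the same nonzero class with opposite $x$-bits plus one in another nonzero class, $\kappa_1^+\kappa_2^-\kappa_3^-$ (and cyclic) from three vertices in the three distinct nonzero classes with a specified sign pattern, and $\kappa_1^+\kappa_2^+\kappa_3^+$ from the all-$+$ version of the same; verifying that all other class/sign patterns give $\phi=0$ (so contribute nothing) and that nothing is double-counted. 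Finally, I would check the formula at $p=1/2$ (only $f(\vec\kappa)=0$ types survive, and their total count must reproduce $7\cdot2^n$ at leading order, matching Eq.~\eqref{eq:3cardiA2}) and note monotonicity/behaviour in $p$ as corollaries.
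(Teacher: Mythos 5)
Your proposal follows essentially the same route as the paper's proof: replace the uniform $\tfrac12$ weights in the replica expression of Theorem~\ref{theo:randomave} by $(1-p)$ and $p$ so that each $3$-edge contributes a factor $(1-2p)$ exactly when its parity $\phi(e_3)$ from Eq.~\eqref{eq:conhyper3edge} equals $1$, collapse $(\mathrm{T},\vec x)$ to occupation numbers over the four classes $\vec d_0,\vec d_1,\vec d_2,\vec d_3$ (modulo $\vec t\leftrightarrow\vec 1-\vec t$) paired with the sign bit $x_i$, and identify $f(\vec\kappa)$ by the exact case analysis you describe, which correctly reproduces all three lines of Eq.~\eqref{eq:fa}. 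The only loose end is the normalization (your intermediate $2^{-5n}$ should be $2^{-4n}$ once the $\vec z$-summation has been absorbed into the parity constraint, with the residual $2^n$ from the $\vec t\leftrightarrow\vec 1-\vec t$ equivalence yielding the final $2^{-3n}$), which you explicitly flag and propose to fix by checking against the $p=1/2$ value.
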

The proof is left in Appendix \ref{Apptheo:pave}. We remark that Theorem \ref{theo:p-ave} could be extended to arbitrary constant $c$ cases while the polynomial complexity of calculation holds since one still only needs to sum over some combinatorial number of the vector $\vec{\kappa}$. 

\begin{figure}[tbhp!]
\centering
\resizebox{9cm}{!}{\includegraphics[scale=0.8]{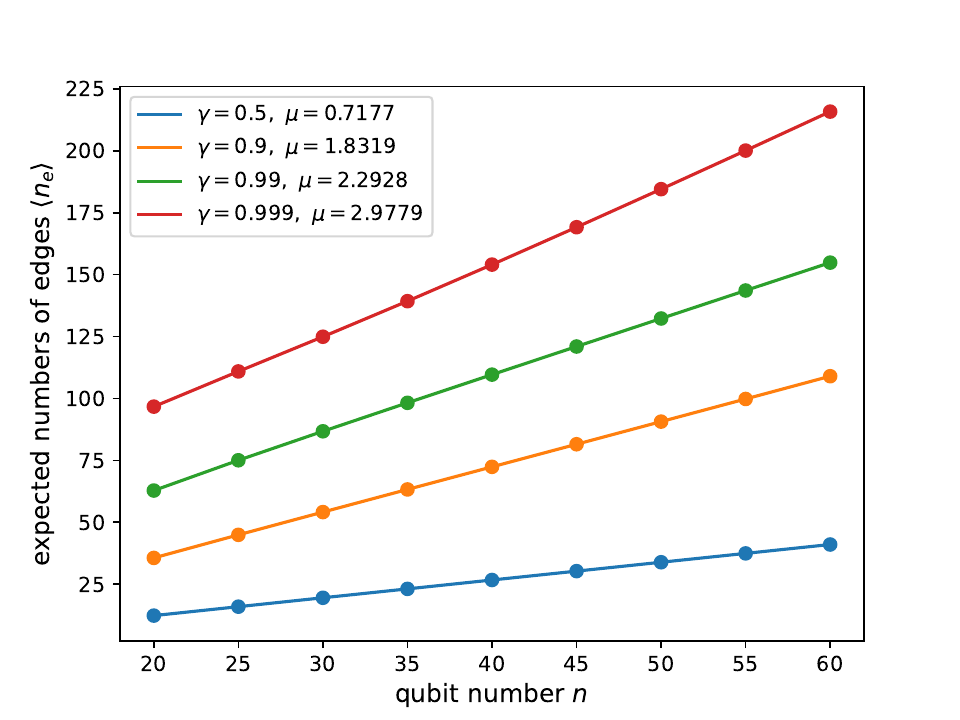}}
\caption{Relation between qubit number $n$ ($x$-axis) and expected number of hyperedges $\langle n_e\rangle=p\cdot{n \choose 3}$ ($y$-axis), given the (lower bound of) expected \renyit\ $\underline{\langle\mathbf{M}_{2}\rangle_{\mc{E}_3^p}}=-\log\langle\mathbf{m}_{2}\rangle_{\mc{E}_3^p}=\gamma\cdot n$ for different $\gamma$'s. Different colors represent different expected magic with different slopes $\mu$'s.}
  \label{fig:Pmagic} 
\end{figure}

Note that Eq.~\eqref{eq:3cardiA2} of the previous $p=1/2$ case is obtained by counting the number of solutions to $f(\vec{\kappa})=0$. For the general $p$ here, we do not have a very compact formula, and the function $f(\vec{\kappa})$ and also $\langle\mathbf{m}_{2}\rangle_{\mc{E}_3^p}$ in Eq.~\eqref{eq:pave} look a little tedious. Nevertheless, the calculation of $\langle\mathbf{m}_{2}\rangle_{\mc{E}_3^p}$ only involves polynomial complexity with respective to the qubit-number $n$, that is, $O(n^7\log n)$. This fact enables us to numerically study the relationship between the average magic and the probability $p$ for quite large $n$, as shown in Fig.~\ref{fig:Pmagic}. 

To be specific, Fig.~\ref{fig:Pmagic} shows the relation between the expected number of hyperedges $\langle n_e\rangle:=p\cdot{n \choose 3}$ and the qubit-number $n$, given the average magic $\langle\mathbf{m}_{2}\rangle_{\mc{E}_3^p}=\gamma n$ for some constant $\gamma$. One can see that for a fixed proportion $\gamma$ of $n$, $\langle n_e\rangle$ is almost linear to $n$ for different $\gamma$'s, i.e., $\langle n_e\rangle \sim \mu n$, and thus $p\sim O(n^{-2})$ far less than $1/2$ like before.

For each vertex, the expected number of edges is  
\begin{equation}\label{eq:}
\begin{aligned}
    \langle n_e\rangle \cdot \frac{{n-1\choose 2}}{{n \choose 3}}\sim 3\mu,
\end{aligned}
\end{equation}
and thus the expected average degree $\langle \bar{\Delta}(G) \rangle$ is about a constant. This shows the consistency to Theorem \ref{th:bdegree}, where the magic of a bounded-average-degree hypergraph state is also bounded. For $\gamma=0.999$, which is very near the maximal $1$, the slope $\mu \simeq 3.0$. It means that a very small $p$ can let the average magic near the maximal value.

The statistical results here may also suggest a dynamical way to generate maximal magic states efficiently. For each step, one operates a $CCZ$ gate on any three-qubit chosen randomly from a $3$-edge, and repeats this process for about $K=O(n)$ times. In particular, the numerical result implies that $K=3n$ may be enough to let \renyit\ reach $0.999n$. Moreover, if one parallel applies $CCZ$ gates, constant-depth (depth-$3\mu$) quantum circuit could be sufficient (depth-9 for $\mathbf{M}_{2}(\ket{G})\sim 0.999n$).

\section{Magic of some symmetric hypergraph states}
In this section, we investigate the magic of some hypergraph states with \emph{permutation symmetry}. We focus on the $c$-complete hypergraph states with $c=3$ and $c=n$, and our method can be applied to other symmetric hypergraph states.

The motivation for this study is two-fold. First, for any $c$-complete hypergraph, every vertex is connected to other $n-1$ vertices, and thus the average degree being $\bar{\Delta}(G_{c\textrm{-com}})=n-1$. Consequently, $c$-complete hypergraph states can give some new insight beyond the one with a bounded average degree whose magic is constrained by Theorem \ref{th:bdegree}.
Second, the permutation symmetry significantly simplifies the summation of indices ${\vec{x},\vec{z}}$ in Eq.~\eqref{eq:PLmoment} from exponential to polynomial. This fact and the pictural magic formula introduced in Theorem \ref{th:PLmom} enable calculating the spectrum of the PL-components, and thus \renyiaa\ for any $\alpha$. In particular, $\alpha$ needs not to be limited to integers as that in Sec.~\ref{sec:random}, for instance, here one can take $\alpha=1/2$.
\begin{definition}\label{Def:per}
An $n$-partite quantum state $\ket{\Psi}$ owns permutation symmetry, if 
\begin{equation}
    \begin{aligned}
        U_{\{\pi\}}\kb{\Psi}U^\dag_{\{\pi\}}=\kb{\Psi},\ \forall \pi\in S_n.
    \end{aligned}
    \end{equation}
Here $\pi$ is the permutation element in the $n$-th order symmetric group $S_n$. $U_{\{\pi\}}$ is an unitary representation with 
$U_{\{\pi\}}\ket{a_1,a_2,\cdots,a_n}=\ket{a_{\pi(1)},a_{\pi(2)},\cdots,a_{\pi(n)}}$
with $\{\ket{a}\}$ the basis state for a single-party.  For example, $U_{\{(1,2)\}}$ is the swap operator on 2-party.
\end{definition}
Due to the permutation symmetry, the PL-component $\Tr{P_{\vec{x},\vec{z}}\kb{\Psi}}$ is not relevant to the specific positions of single-qubit Pauli operators but only depends on the numbers of them, i.e., $x_i=1$ and $z_i=1$ and both $x_i=z_i=1$ in $P_{\vec{x},\vec{z}}$. Denote the corresponding sets of $P_{\vec{x},\vec{z}}$ as $A_x=\{i|x_i=1\}$, $A_{-x}=\{i|x_i=0\}$, $A_{z,x}=\{i|z_i=1,x_i=1\}$, $A_{z,-x}=\{i|z_i=1,x_i=0\}$, one has the following general observation. 

\begin{observation}\label{ob:sym}
For an $n$-partite quantum state $\ket{\Psi}$ owning permutation symmetry defined in Def.~\ref{Def:per}, its PL-component  $\Tr{P_{\vec{x},\vec{z}}\kb{\Psi}}$ is directly related to the cardinality of the sets $m=|A_x|$, $m_1=|A_{z,x}|$ and $m_0=|A_{z,-x}|$, but not the specific positions. That is, two components take the same value if these sets share the same cardinality.
\end{observation}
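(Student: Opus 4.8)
The plan is to derive everything from the defining invariance $U_{\{\pi\}}\kb{\Psi}U_{\{\pi\}}^\dagger=\kb{\Psi}$ of Def.~\ref{Def:per}, together with the elementary fact that conjugating a Pauli string by a qubit-permutation unitary merely relabels its single-qubit tensor factors. First I would record that for every $\pi\in S_n$,
\begin{equation}
    U_{\{\pi\}}^\dagger\, P_{\vec{x},\vec{z}}\, U_{\{\pi\}}=P_{\vec{x}^{\pi},\,\vec{z}^{\pi}},
\end{equation}
where $\vec{x}^{\pi}$ and $\vec{z}^{\pi}$ are the strings obtained from $\vec{x}$ and $\vec{z}$ by permuting their entries according to $\pi$. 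This holds because $U_{\{\pi\}}$ acts on the computational basis by permuting qubit labels, so conjugation carries each single-qubit factor $X_i$ or $Z_i$ to the corresponding factor at the permuted site; moreover the prefactor $\omega(\vec{x},\vec{z})=\sqrt{-1}^{\,\sum_i x_iz_i}$ is left unchanged, since $\sum_i x_iz_i$ is invariant under permuting the indices, so the equality is exact (no stray sign) with the right-hand side the standard Pauli operator of Eq.~\eqref{eq:pauliP}.

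Next I would combine this identity with the permutation symmetry of $\ket{\Psi}$ and the cyclicity of the trace to get
\begin{equation}
    \Tr{P_{\vec{x}^{\pi},\,\vec{z}^{\pi}}\kb{\Psi}}=\Tr{U_{\{\pi\}}^\dagger P_{\vec{x},\vec{z}}\,U_{\{\pi\}}\kb{\Psi}}=\Tr{P_{\vec{x},\vec{z}}\,U_{\{\pi\}}\kb{\Psi}U_{\{\pi\}}^\dagger}=\Tr{P_{\vec{x},\vec{z}}\kb{\Psi}}.
\end{equation}
Hence the PL-component is constant along each orbit of the natural $S_n$-action on index pairs $(\vec{x},\vec{z})$, and it then remains only to identify those orbits.

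Finally I would note that a pair $(\vec{x},\vec{z})$ partitions $[n]$ into the four classes labeled by $(x_i,z_i)\in\{(0,0),(0,1),(1,0),(1,1)\}$, and that $S_n$ permutes positions freely; therefore the orbit of $(\vec{x},\vec{z})$ is determined exactly by the four class cardinalities, and conversely any two pairs with equal class sizes lie in a common orbit. Since, with $n$ fixed, the three numbers $m=|A_x|$, $m_1=|A_{z,x}|$ and $m_0=|A_{z,-x}|$ pin down all four class sizes---namely $m_1$ for $(1,1)$, $m-m_1$ for $(1,0)$, $m_0$ for $(0,1)$, and $n-m-m_0$ for $(0,0)$---two components sharing the same triple $(m,m_1,m_0)$ belong to the same orbit and hence take the same value by the chain of equalities above. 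I do not expect any serious obstacle here: the only step requiring care is the first one, namely verifying that conjugation by $U_{\{\pi\}}$ reproduces $P$ with exactly the permuted indices and with no spurious phase, which reduces to the permutation-invariance of $\sum_i x_iz_i$; the identification of the orbits is then a routine counting argument.
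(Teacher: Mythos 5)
Your proof is correct and is exactly the standard covariance-plus-orbit argument that the paper implicitly relies on (it states Observation~\ref{ob:sym} without a written proof): conjugation by $U_{\{\pi\}}$ permutes the Pauli indices while leaving $\omega(\vec{x},\vec{z})$ unchanged, the trace is invariant along $S_n$-orbits, and the orbits of pairs $(\vec{x},\vec{z})$ are labeled by the four class sizes, which $(m,m_1,m_0)$ determine once $n$ is fixed. No gaps; the only cosmetic caveat is that whether conjugation yields $\pi$ or $\pi^{-1}$ acting on the indices is a convention that does not affect the orbit structure.
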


\begin{figure}[tbhp!]
\centering
\resizebox{9cm}{!}{\includegraphics[scale=0.8]{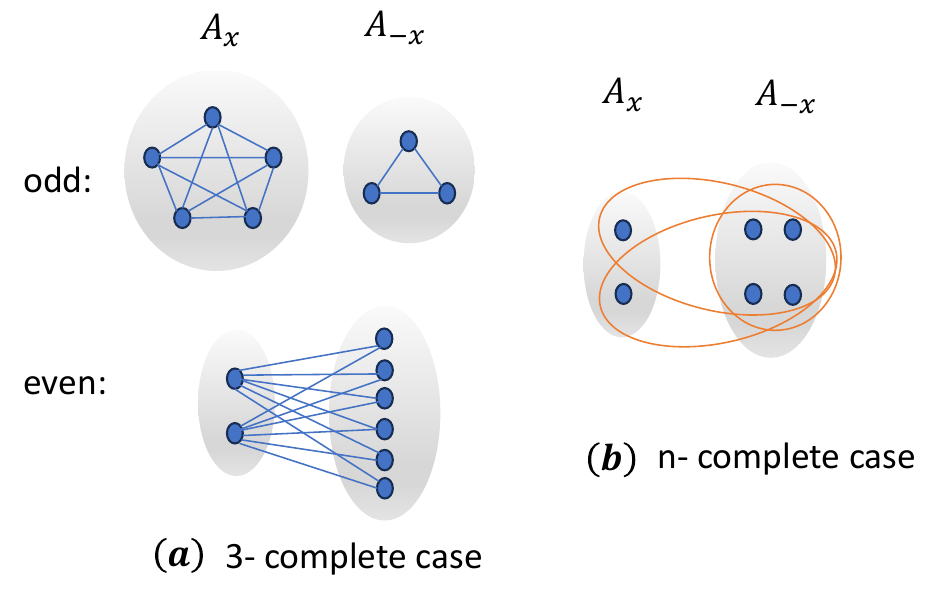}}
\caption{Illustration of induced hypergraphs for $c$-complete hypergraph states. Here the shaded areas label the corresponding sets. Due to the permutation symmetry, the induced hypergraph  $G^*_{\vec{x},\vec{z}}$ defined in Eq.~\eqref{eq:Gin} and \eqref{eq:GinSim} is determined by  $|A_x|$, $|A_{z,x}|$ and $|A_{z,-x}|$. Since $E^{(1)}_{\vec{z}}$ in Eq.~\eqref{eq:GinSim} only introduces 1-edges, for simplicity, we take $\vec{z}=0$ and thus $|A_{z,x}|=|A_{z,-x}|=0$. In this way, the structure of $G^*_{\vec{x},\vec{0}}$ only depends on $m=|A_x|$. (a) $3$-complete hypergraph with 8 vertices: $G^*_{\vec{x},\vec{0}}$ is related to even/odd of $m$, and it is either \emph{two complete-graphs} of $A_x$ and $A_{-x}$  or \emph{complete bipartite graph} between them.  (b) $n$-complete hypergraph with 6 vertices with $m=2$: $G^*_{\vec{x},\vec{0}}$ in this case contains all the hyperedges $A_{-x}\subseteq e'\subset V$. See Appendix \ref{app:sym} for more detailed discussions.
} \label{fig:indGsym} 
\end{figure}

For quantum hypergraph states whose PL-component is proportional to $\Tr{U(G_{\vec{x},\vec{z}})}$ as shown in Proposition \ref{prop:PLcomp}, if they own permutation symmetry, like $c$-complete hypergraph states, the induced hypergraphs $G_{\vec{x},\vec{z}}$ (and also $G^*_{\vec{x},\vec{z}}$) are isomorphic to each other if the aforementioned sets share the same cardinality. This observation and the pictural magic formula in Theorem \ref{th:PLmom} make the calculations intuitive and elegant. 

In the following, we first show the results of magic for $3$- and $n$-complete hypergraph states, respectively, and then give a series of comments on them. We leave the detailed proofs in Appendix \ref{app:sym}. We show their induced hypergraphs $G^*_{\vec{x},\vec{z}}$ in Fig.~\ref{fig:indGsym}.

\begin{proposition}\label{th:3-com}
The $2$-order and $1/2$-order PL-moment of the quantum $3$-complete hypergraph state of $n$-qubit $\ket{G_{3\textrm{-com}}}$ are
\begin{equation}\label{eq:PL3com}
\begin{aligned}
&\mathbf{m}_{2}(\ket{G_{3\textrm{-com}}})=\frac{1}{8}+\frac{7}{2^{n+\frac{3-(-1)^n}{2}}},\\
&\mathbf{m}_{1/2}(\ket{G_{3\textrm{-com}}})=2^{\frac{2n-7-(-1)^n}{4}}+1-2^{-n+\frac{1+(-1)^n}{2}}.
\end{aligned}
\end{equation}

Consequently, the corresponding $SRE$ approaches 
$\mathbf{M}_{2}(\ket{G_{3\textrm{-com}}})\rightarrow 3$ and 
   $\mathbf{M}_{1/2}(\ket{G_{3\textrm{-com}}})\rightarrow n-\frac{7+(-1)^n}{2}$, as $n\rightarrow\infty$. 
\end{proposition}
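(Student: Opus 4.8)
The plan is to feed the $3$-complete hypergraph into the pictorial formula of Theorem~\ref{th:PLmom}, classify the induced graphs $G^{*}_{\vec{x},\vec{z}}$, and reduce everything to two short families of Gauss sums over the binary field. First I would specialize Eq.~\eqref{eq:PLmoment} to $\mathbf{m}_{\alpha}(\ket{G_{3\textrm{-com}}})=2^{-n(1+2\alpha)}\sum_{\vec{x},\vec{z}}\big(\Tr{U(G^{*}_{\vec{x},\vec{z}})}^{2}\big)^{\alpha}$ and work out the combinatorial type of $G^{*}_{\vec{x},\vec{z}}$ from Eqs.~\eqref{eq:Gin}--\eqref{eq:GinSim}: for a $3$-uniform complete $G$, $E^{(2)}_{\vec{x}}$ contains only $2$-edges, and a pair $\{v_i,v_j\}$ is present iff $|A_x\setminus\{v_i,v_j\}|$ is odd. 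Hence, writing $m=|A_x|$, the $2$-edge set is the disjoint union of the two cliques on $A_x$ and on $A_{-x}$ when $m$ is odd, and the complete bipartite graph between $A_x$ and $A_{-x}$ when $m$ is even (exactly Fig.~\ref{fig:indGsym}(a)), while $E^{(1)}_{\vec{z}}$ just attaches $Z$ gates on the sites with $z_j=1$. By the permutation symmetry (Observation~\ref{ob:sym}) the summand depends on $(\vec{x},\vec{z})$ only through $m=|A_x|$, $m_1=|A_{z,x}|$, $m_0=|A_{z,-x}|$, so $\sum_{\vec{x},\vec{z}}$ collapses to $\sum_{m,m_1,m_0}\binom{n}{m}\binom{m}{m_1}\binom{n-m}{m_0}$.

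Next I would evaluate $\Tr{U(G^{*}_{\vec{x},\vec{z}})}$ case by case. For odd $m$ the unitary factorizes across $A_x$ and $A_{-x}$, and each factor is a quadratic Gauss sum $\sum_{\vec{a}\in\{0,1\}^{k}}(-1)^{\binom{|\vec{a}|}{2}+w(\vec{a})}$ with $w$ counting the $1$'s on the $Z$-support of that block; using $(-1)^{\binom{k}{2}}=\mathrm{Re}\big[(1+i)(-i)^{k}\big]$ together with the binomial theorem gives a closed form of the shape $2^{(k+1)/2}\cos\!\big(\tfrac{\pi}{4}(2\nu-k+1)\big)$, where $\nu$ is the number of $Z$'s on the block. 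For even $m$, $\Tr{U(G^{*}_{\vec{x},\vec{z}})}$ is the bilinear Gauss sum of a complete bipartite graph dressed with $Z$'s, which vanishes unless the $Z$-support on each side is either empty or the whole side, in which case it equals $\pm 2^{n-1}$; the two degenerate cases $m\in\{0,n\}$ (empty bipartite graph) give instead $2^{n}$, and only when that single side carries no $Z$.

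Plugging these values into $\mathbf{m}_{\alpha}$ and performing the inner sums over $m_1,m_0$ against the binomial weights reduces, for $\alpha=2$, to the identity $\sum_{j}\binom{k}{j}\cos^{4}\!\big(\tfrac{\pi}{4}(2j-k+1)\big)=2^{k-2}$ for even $k$ and $2^{k-1}$ for odd $k$ (expand $\cos^{4}$ in harmonics and use the standard alternating binomial sums), and for $\alpha=1/2$ to the analogous identity with $\big|\cos(\cdots)\big|$ in place of $\cos^{4}(\cdots)$, equal to $2^{k-1/2}$ for even $k$ and $2^{k-1}$ for odd $k$. The remaining outer sum over $m$ then only needs $\sum_{m\ \mathrm{even}}\binom{n}{m}=\sum_{m\ \mathrm{odd}}\binom{n}{m}=2^{n-1}$, after isolating the boundary terms $m\in\{0,n\}$ (one of which merges into the odd-$m$ family when $n$ is odd). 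Collecting the odd-$m$ contribution, the generic even-$m$ contribution, and the boundary terms yields Eq.~\eqref{eq:PL3com} directly, with the $(-1)^{n}$'s tracking whether $n$ is even or odd, and then $\mathbf{M}_{\alpha}=(1-\alpha)^{-1}\log\mathbf{m}_{\alpha}$ with $n\to\infty$ gives $\mathbf{M}_{2}\to3$ and $\mathbf{M}_{1/2}\to n-\tfrac{7+(-1)^{n}}{2}$.

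I expect the main obstacle to be the bookkeeping in the last two steps rather than any single hard estimate: one must evaluate the two small Gauss-sum families cleanly and, crucially, retain the exceptional configurations --- $m\in\{0,n\}$, and the bipartite cases whose $Z$-support is exactly one full side --- since these produce precisely the subleading parity-dependent pieces $7/2^{\,n+(3-(-1)^n)/2}$ and $1-2^{-n+(1+(-1)^n)/2}$. A secondary point to keep track of is that the exponent ``$2\alpha$'' in Theorem~\ref{th:PLmom} must be read as $\big(\Tr{U}^{2}\big)^{\alpha}=\big|\Tr{U}\big|^{2\alpha}$, which only matters for the non-integer case $\alpha=1/2$, where traces can be negative.
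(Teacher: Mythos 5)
Your proposal is correct and follows essentially the same route as the paper's proof: specializing the pictorial formula, using permutation symmetry to reduce to $(m,m_1,m_0)$, splitting by the parity of $m$ into the two-cliques versus complete-bipartite cases, evaluating the resulting Gauss sums, and isolating the exceptional configurations $m\in\{0,n\}$ and full/empty $Z$-support on each side. The only difference is cosmetic --- you package the clique Gauss sums in a cosine closed form and trig--binomial identities where the paper manipulates the binomial sums directly --- and your closing remark about reading $\Tr{U}^{2\alpha}$ as $\left|\Tr{U}\right|^{2\alpha}$ for $\alpha=1/2$ is a correct and worthwhile caution.
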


\begin{proposition}\label{th:n-com}
The $2$-order and $1/2$-order PL-moment of the quantum $n$-complete hypergraph state of $n$-qubit $\ket{G_{n\textrm{-com}}}$ are
\begin{equation}\label{eq:PLncom}
\begin{aligned}
    &\mathbf{m}_{2}(\ket{G_{n\textrm{-com}}})=1-16\cdot2^{-n}+112\cdot2^{-2n}
    -224\cdot2^{-3n}+128\cdot2^{-4n},\\
     &\mathbf{m}_{1/2}(\ket{G_{n\textrm{-com}}})=3-10\cdot2^{-n}+8\cdot2^{-2n},
\end{aligned}
\end{equation}
for $n\geq 2$.
Consequently, the corresponding SRE $\mathbf{M}_{2}(\ket{G_{n\textrm{-com}}})$ shows its maximal value $\log{\frac{32}{11}}\approx1.54$ when $n=3$, and decreases exponentially with $n$; $\mathbf{M}_{1/2}(\ket{G_{n\textrm{-com}}})$ increases monotonously to $2\log_2{3}$ as $n\rightarrow\infty$.
\end{proposition}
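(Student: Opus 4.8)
The plan is to apply Theorem \ref{th:PLmom} together with the permutation-symmetry simplification (Observation \ref{ob:sym} and Fig.~\ref{fig:indGsym}(b)). Since the $n$-complete hypergraph has a single $n$-edge $e=V$, I would first identify the induced hypergraph $G^*_{\vec{x},\vec{z}}$. From Eq.~\eqref{eq:GinSim}, $E^{(1)}_{\vec{z}}$ contributes the $1$-edges $\{v_j\}$ with $z_j=1$, and $E^{(2)}_{\vec{x}}$ contains all $e_2$ with $|e_2|\geq 2$ such that $\sum_{e\supset e_2}\prod_{i:v_i\in e\setminus e_2}x_i=1$; here the only candidate $e$ is $V$, so the condition becomes $\prod_{i:v_i\in V\setminus e_2}x_i=1$, i.e.\ $e_2\supseteq A_{-x}$ (all vertices outside $e_2$ must have $x_i=1$). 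So $G^*_{\vec{x},\vec{z}}$ carries $CZ$-type gates on every subset $e'$ with $A_{-x}\subseteq e'\subsetneq V$ together with $|e'|\geq 2$, plus single-qubit $Z$'s on $A_z$. Writing $m=|A_x|$, $m_0=|A_{z,-x}|$, $m_1=|A_{z,x}|$, the next step is to compute $\Tr U(G^*_{\vec{x},\vec{z}})$ as a function of $(m,m_0,m_1)$ only — this is a sum over computational-basis strings of a product of phase factors, and it factorizes over the $A_{-x}$ block (which is acted on by many nested controlled gates) versus the $A_x$ block.

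Concretely I would split each basis string according to its restriction to $A_{-x}$ and to $A_x$. The nested $CZ$ structure means the overall phase vanishes unless \emph{all} bits in $A_{-x}$ are $1$ (since every relevant edge $e'$ contains $A_{-x}$), in which case the gates reduce to controlled phases on the $A_x$-bits; one then evaluates a binomial-type sum over the weight of the $A_x$-part, while the single-qubit $Z$'s on $A_{z,-x}$ and $A_{z,x}$ contribute sign factors $(-1)^{\text{(bits in those sets)}}$. I expect $\Tr U(G^*_{\vec{x},\vec{z}})$ to collapse to something like $2^{\,m}\cdot(\text{small correction depending on }m,m_0)$ up to low-order terms, with the dominant contribution coming from $m\approx n$ (i.e.\ $|A_{-x}|$ small). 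Then plug into Eq.~\eqref{eq:PLmoment}: $\mathbf{m}_\alpha=2^{-n(1+2\alpha)}\sum_{\vec{x},\vec{z}}\Tr U(G^*_{\vec{x},\vec{z}})^{2\alpha}$, and convert the sum over $\vec{x},\vec{z}$ into a sum over $(m,m_0,m_1)$ weighted by $\binom{n}{m}\binom{m}{m_1}\binom{n-m}{m_0}$. For $\alpha=2$ and $\alpha=1/2$ this becomes an explicit finite sum of binomial sums, which I evaluate in closed form (e.g.\ via $\sum_k\binom{n}{k}=2^n$ and shifted variants), obtaining the two polynomials in $2^{-n}$ stated in Eq.~\eqref{eq:PLncom}. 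Finally, substituting into $\mathbf{M}_\alpha=(1-\alpha)^{-1}\log\mathbf{m}_\alpha$ and expanding for large $n$: for $\alpha=2$, $\mathbf{m}_2\to 1$ so $\mathbf{M}_2\to 0$ exponentially, with the maximum over integers $n\geq 2$ checked by direct substitution to occur at $n=3$ giving $\mathbf{m}_2=11/32$, hence $\mathbf{M}_2=\log(32/11)$; for $\alpha=1/2$, $\mathbf{m}_{1/2}\to 3$, and $\mathbf{M}_{1/2}=2\log\mathbf{m}_{1/2}\to 2\log 3$, with monotonicity following since $\mathbf{m}_{1/2}=3-10\cdot 2^{-n}+8\cdot 2^{-2n}$ is increasing in $n$ for $n\geq 2$.

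The main obstacle I anticipate is the clean evaluation of $\Tr U(G^*_{\vec{x},\vec{z}})$: the induced phase unitary involves an exponentially large tower of nested controlled-$Z$ gates on the $A_x$-block (all edges $e'$ with $A_{-x}\subseteq e'$), so one must carefully resum the resulting Boolean phase $\bigoplus_{A_{-x}\subseteq e'}(\text{AND of bits in }e')$ over computational-basis strings — recognizing, e.g., that this is related to a single high-degree monomial or to an inclusion–exclusion collapse — and keep track of the $\vec{z}$-dependent signs. Getting the exact low-order $2^{-kn}$ corrections right (rather than just the leading behavior) is where the bookkeeping is delicate; once the per-string trace is in closed form, the remaining binomial summation and the asymptotics are routine.
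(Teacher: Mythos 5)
Your plan follows the paper's proof essentially step for step: identify the induced edges as all $e'$ with $A_{-x}\subseteq e'\subsetneq V$, split each computational-basis string into its $A_x$- and $A_{-x}$-blocks, observe that the induced phases are trivial unless the $A_{-x}$-block is all ones, reduce the $(\vec{x},\vec{z})$-sum to a sum over $(m,m_0,m_1)$ with binomial weights, and then verify the stated asymptotics (your checks of $n=3$ giving $\mathbf{m}_2=11/32$ and of the monotonicity of $\mathbf{m}_{1/2}$ are correct). The one place your anticipation goes astray is the guessed form of $\Tr{U(G^{*}_{\vec{x},\vec{z}})}$: it is not ``$2^{m}$ times a small correction, dominated by $m\approx n$''; rather, once the all-ones condition on the $A_{-x}$-block is in hand, a short pairing argument in the $\vec{z}$-signs (pairing $\vec{a}$ with $\vec{a}+\hat{i}$ for $v_i\in A_z$) shows the trace takes only the values $0$ or $4$ when $\vec{z}\neq\vec{0}$ and $A_x\neq\emptyset$ (namely $2(1+(-1)^{m_1})$), equals $2^{n}-4$ for every $\vec{x}$ with $A_x\neq\emptyset$ and $\vec{z}=\vec{0}$, and equals $2^{n}\delta_{\vec{z},\vec{0}}$ when $A_x=\emptyset$ --- so the ``delicate resummation of the tower of nested controlled gates'' you flag as the main obstacle collapses to identifying the two strings $(\vec{0},\vec{1})$ and $(\vec{1},\vec{1})$ that pick up a $-1$, the dominant contribution is spread uniformly over all $m\geq 1$ rather than concentrated near $m=n$, and the remaining binomial sums are immediate.
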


First, these two examples indicate that hypergraph states with an unbounded average degree could not own maximum magic, opposite to Theorem \ref{th:bdegree}. For $\ket{G_{3\textrm{-com}}}$, \renyit\ is about a constant, even though there are extensive $CCZ$ gates to generate it. This phenomenon also appears in entanglement analysis, where over-connected graphs and hypergraph states only own a small amount of entanglement. For instance, the $2$-complete graph state is equivalent to the GHZ state, with entanglement entropy being $1$ for any bipartition. 

For $\ket{G_{n\textrm{-com}}}$, its \renyit\ even scales like $O(2^{-n})$ for the large $n$ limit. This behavior can be understood via the fidelity to the initial stabilizer state $\ket{\Psi_0}$ with zero magic. The $n$-qubit gate $CZ_n$ is nearly an identity operator for a large $n$, and the fidelity $F(\ket{G_{n\textrm{-com}}},\ket{\Psi_0})=\bra{G_{n\textrm{-com}}}\Psi_0 \rangle=1-2^{-n}$. As such, $G_{n\textrm{-com}}$ should have small magic as being close to a stabilizer state. 

Second, the \renyit\ and \renyih\ are very different for both states. In particular, $\mathbf{M}_{1/2}(\ket{G_{3\textrm{-com}}})$ is near $n$, however, $\mathbf{M}_{2}(\ket{G_{3\textrm{-com}}})$ is about a constant. This is mainly due to the distribution of the PL-component of $\ket{G_{3\textrm{-com}}}$. There are almost half of PL-components are non-zero, but their distribution is very sharp. That is, most of them are extraordinarily small, and on the contrary, only a constant number of them are larger as $O(1)$. This phenomenon may lead to significant consequences in further one-shot manipulations \cite{tomamichel2012framework,Liu2019OneShot}.

\section{Conclusion and outlook}
In this work, we study the magic properties of quantum hypergraph states in terms of SRE. We find that the random hypergraph states typically have near-maximal magic, indicating an efficient way to generate them by dynamically applying few-qubit controlled phase gates. We also show a general upper bound of magic by the degree of the corresponding hypergraph and give a few analytical solutions to states with permutation symmetry. The general pictural formula of magic presented here can advance further explorations of many-qubit magic for large-scale systems.

From the results shown here, there are several directions to investigate further. First, it is direct to generalize the current study to other phase diagonal states \cite{Kruszynska2009Local,Nakata2014review,iaconis2021quantum}, especially those generated by 2-qubit $CZ_{\theta}$ gates. $CZ_{\theta}$ gate is more feasible in practise than $3$-qubit $CCZ$ gate and still can create magic. Second, the permutation symmetry significantly simplifies the calculation of magic, leading to a few analytical results, and it is promising to generalize to other states beyond hypergraph states, for example, the W-state \cite{odavic2022complexity}. It is thus interesting to analyze the statistical properties of state ensembles with high symmetry \cite{nakata2020generic}. Third, there is a large gap of \renyiaa\ for different $\alpha$’s, and this indicates the significance of analyzing the spectrum of the PL-component, which may play an important role similar to the entanglement spectrum \cite{Haldane2008Spectrum,Chamon2014Irreversibility,shaffer2014irreversibility}. And the implications of this phenomenon to one-shot quantum resource theory \cite{Liu2019OneShot} and magic distillation of many-qubit states \cite{bao2022magic} also deserve further studies. Finally, it is known that typically random hypergraph states also have near-maximal entanglement \cite{zhou2022hyper}. Thus the following important question arises---whether quantum states with near-maximal magic necessarily own a large amount of entanglement? The random states and permutation-symmetric states here give some positive support. Or, more broadly, what are the general relations and constraints between these two key quantum resources \cite{Leone2021Chaos,tirrito2023quantifying}? The answers to these questions would definitely unveil some ultra-quantum features and benefit quantum information processing.

\section{Acknowledgement}
We thank Alioscia Hamma, Xiongfeng Ma, and Ryuji Takagi for their useful discussions. J.~C and Y.~Y acknowledge the support of the National Natural Science Foundation of China (NSFC) via Grant No.12174216 and the Innovation Program for Quantum Science and Technology Grant via No.2021ZD0300804. Y.~Z acknowledges the support of National Natural Science Foundation of China (NSFC) Grant No.12205048, Innovation Program for Quantum Science and Technology 2021ZD0302000, the start-up funding of Fudan University, and the CPS-Huawei MindSpore Fellowship.



\begin{thebibliography}{79}
\providecommand{\natexlab}[1]{#1}
\providecommand{\url}[1]{\texttt{#1}}
\expandafter\ifx\csname urlstyle\endcsname\relax
  \providecommand{\doi}[1]{doi: #1}\else
  \providecommand{\doi}{doi: \begingroup \urlstyle{rm}\Url}\fi

\bibitem[Terhal(2015)]{Terhal2015QECC}
Barbara~M. Terhal.
\newblock Quantum error correction for quantum memories.
\newblock \emph{Rev. Mod. Phys.}, 87:\penalty0 307--346, Apr 2015.
\newblock \doi{10.1103/RevModPhys.87.307}.

\bibitem[Bravyi and Kitaev(2005)]{Bravyi2005Magic}
Sergey Bravyi and Alexei Kitaev.
\newblock Universal quantum computation with ideal clifford gates and noisy
  ancillas.
\newblock \emph{Phys. Rev. A}, 71:\penalty0 022316, Feb 2005.
\newblock \doi{10.1103/PhysRevA.71.022316}.

\bibitem[Veitch et~al.(2014)Veitch, Mousavian, Gottesman, and
  Emerson]{veitch2014resource}
Victor Veitch, SA~Hamed Mousavian, Daniel Gottesman, and Joseph Emerson.
\newblock The resource theory of stabilizer quantum computation.
\newblock \emph{New Journal of Physics}, 16\penalty0 (1):\penalty0 013009,
  2014.
\newblock \doi{10.1088/1367-2630/16/1/013009}.

\bibitem[Campbell et~al.(2017)Campbell, Terhal, and Vuillot]{campbell2017roads}
Earl~T Campbell, Barbara~M Terhal, and Christophe Vuillot.
\newblock Roads towards fault-tolerant universal quantum computation.
\newblock \emph{Nature}, 549\penalty0 (7671):\penalty0 172--179, 2017.
\newblock \doi{10.1038/nature23460}.

\bibitem[Gottesman and Chuang(1999)]{gottesman1999demonstrating}
Daniel Gottesman and Isaac~L Chuang.
\newblock Demonstrating the viability of universal quantum computation using
  teleportation and single-qubit operations.
\newblock \emph{Nature}, 402\penalty0 (6760):\penalty0 390--393, 1999.
\newblock \doi{10.1038/46503}.

\bibitem[Horodecki et~al.(2009)Horodecki, Horodecki, Horodecki, and
  Horodecki]{Horodecki2007Entanglement}
Ryszard Horodecki, Pawe\l{} Horodecki, Micha\l{} Horodecki, and Karol
  Horodecki.
\newblock Quantum entanglement.
\newblock \emph{Rev. Mod. Phys.}, 81:\penalty0 865--942, Jun 2009.
\newblock \doi{10.1103/RevModPhys.81.865}.

\bibitem[Knill(2005)]{knill2005quantum}
Emanuel Knill.
\newblock Quantum computing with realistically noisy devices.
\newblock \emph{Nature}, 434\penalty0 (7029):\penalty0 39--44, 2005.
\newblock \doi{10.1038/nature03350}.

\bibitem[Aaronson and Gottesman(2004)]{Aaronson2004ImprovedGKT}
Scott Aaronson and Daniel Gottesman.
\newblock Improved simulation of stabilizer circuits.
\newblock \emph{Phys. Rev. A}, 70:\penalty0 052328, Nov 2004.
\newblock \doi{10.1103/PhysRevA.70.052328}.

\bibitem[Bravyi et~al.(2016)Bravyi, Smith, and Smolin]{Bravyi2016Trading}
Sergey Bravyi, Graeme Smith, and John~A. Smolin.
\newblock Trading classical and quantum computational resources.
\newblock \emph{Phys. Rev. X}, 6:\penalty0 021043, Jun 2016.
\newblock \doi{10.1103/PhysRevX.6.021043}.

\bibitem[Bravyi and Gosset(2016)]{Bravyi2016Improved}
Sergey Bravyi and David Gosset.
\newblock Improved classical simulation of quantum circuits dominated by
  clifford gates.
\newblock \emph{Phys. Rev. Lett.}, 116:\penalty0 250501, Jun 2016.
\newblock \doi{10.1103/PhysRevLett.116.250501}.

\bibitem[Bravyi et~al.(2019)Bravyi, Browne, Calpin, Campbell, Gosset, and
  Howard]{Bravyi2019simulation}
Sergey Bravyi, Dan Browne, Padraic Calpin, Earl Campbell, David Gosset, and
  Mark Howard.
\newblock Simulation of quantum circuits by low-rank stabilizer decompositions.
\newblock \emph{{Quantum}}, 3:\penalty0 181, September 2019.
\newblock ISSN 2521-327X.
\newblock \doi{10.22331/q-2019-09-02-181}.

\bibitem[Pashayan et~al.(2015)Pashayan, Wallman, and Bartlett]{Pashayan2015neg}
Hakop Pashayan, Joel~J. Wallman, and Stephen~D. Bartlett.
\newblock Estimating outcome probabilities of quantum circuits using
  quasiprobabilities.
\newblock \emph{Phys. Rev. Lett.}, 115:\penalty0 070501, Aug 2015.
\newblock \doi{10.1103/PhysRevLett.115.070501}.

\bibitem[Howard and Campbell(2017)]{Howard2017Resource}
Mark Howard and Earl Campbell.
\newblock Application of a resource theory for magic states to fault-tolerant
  quantum computing.
\newblock \emph{Phys. Rev. Lett.}, 118:\penalty0 090501, Mar 2017.
\newblock \doi{10.1103/PhysRevLett.118.090501}.

\bibitem[Seddon et~al.(2021)Seddon, Regula, Pashayan, Ouyang, and
  Campbell]{Seddon2021Speedups}
James~R. Seddon, Bartosz Regula, Hakop Pashayan, Yingkai Ouyang, and Earl~T.
  Campbell.
\newblock Quantifying quantum speedups: Improved classical simulation from
  tighter magic monotones.
\newblock \emph{PRX Quantum}, 2:\penalty0 010345, Mar 2021.
\newblock \doi{10.1103/PRXQuantum.2.010345}.

\bibitem[Amico et~al.(2008)Amico, Fazio, Osterloh, and
  Vedral]{Amico2008Entanglement}
Luigi Amico, Rosario Fazio, Andreas Osterloh, and Vlatko Vedral.
\newblock Entanglement in many-body systems.
\newblock \emph{Rev. Mod. Phys.}, 80:\penalty0 517--576, May 2008.
\newblock \doi{10.1103/RevModPhys.80.517}.

\bibitem[Sarkar et~al.(2020)Sarkar, Mukhopadhyay, and Bayat]{Sarkar2020NJP}
S~Sarkar, C~Mukhopadhyay, and A~Bayat.
\newblock Characterization of an operational quantum resource in a critical
  many-body system.
\newblock \emph{New Journal of Physics}, 22\penalty0 (8):\penalty0 083077, aug
  2020.
\newblock \doi{10.1088/1367-2630/aba919}.

\bibitem[Ellison et~al.(2021)Ellison, Kato, Liu, and
  Hsieh]{ellison2021symmetry}
Tyler~D Ellison, Kohtaro Kato, Zi-Wen Liu, and Timothy~H Hsieh.
\newblock Symmetry-protected sign problem and magic in quantum phases of
  matter.
\newblock \emph{Quantum}, 5:\penalty0 612, 2021.
\newblock \doi{10.22331/q-2021-12-28-612}.

\bibitem[Liu and Winter(2022)]{Liu2022manybodymagic}
Zi-Wen Liu and Andreas Winter.
\newblock Many-body quantum magic.
\newblock \emph{PRX Quantum}, 3:\penalty0 020333, May 2022.
\newblock \doi{10.1103/PRXQuantum.3.020333}.

\bibitem[Zhou et~al.(2020)Zhou, Yang, Hamma, and
  Chamon]{Shiyu2020Entanglemagic}
Shiyu Zhou, Zhi-Cheng Yang, Alioscia Hamma, and Claudio Chamon.
\newblock Single t gate in a clifford circuit drives transition to universal
  entanglement spectrum statistics.
\newblock \emph{SciPost Phys.}, 9:\penalty0 087, 2020.
\newblock \doi{10.21468/SciPostPhys.9.6.087}.

\bibitem[Haferkamp et~al.(2023)Haferkamp, Montealegre-Mora, Heinrich, Eisert,
  Gross, and Roth]{haferkamp2020quantum}
Jonas Haferkamp, Felipe Montealegre-Mora, Markus Heinrich, Jens Eisert, David
  Gross, and Ingo Roth.
\newblock Efficient unitary designs with a system-size independent number of
  non-clifford gates.
\newblock \emph{Communications in Mathematical Physics}, 397:\penalty0 994, Feb
  2023.
\newblock \doi{10.1007/s00220-022-04507-6}.

\bibitem[True and Hamma(2022)]{true2022transitions}
Sarah True and Alioscia Hamma.
\newblock Transitions in {E}ntanglement {C}omplexity in {R}andom {C}ircuits.
\newblock \emph{{Quantum}}, 6:\penalty0 818, September 2022.
\newblock ISSN 2521-327X.
\newblock \doi{10.22331/q-2022-09-22-818}.

\bibitem[Sewell and White(2022)]{Sewell2022Manathermal}
Troy~J. Sewell and Christopher~David White.
\newblock Mana and thermalization: Probing the feasibility of near-clifford
  hamiltonian simulation.
\newblock \emph{Phys. Rev. B}, 106:\penalty0 125130, Sep 2022.
\newblock \doi{10.1103/PhysRevB.106.125130}.

\bibitem[Leone et~al.(2021)Leone, Oliviero, Zhou, and Hamma]{Leone2021Chaos}
Lorenzo Leone, Salvatore F.~E. Oliviero, You Zhou, and Alioscia Hamma.
\newblock Quantum chaos is quantum.
\newblock \emph{Quantum}, 5:\penalty0 453, may 2021.
\newblock \doi{10.22331/q-2021-05-04-453}.

\bibitem[Goto et~al.(2022)Goto, Nosaka, and Nozaki]{goto2022probing}
Kanato Goto, Tomoki Nosaka, and Masahiro Nozaki.
\newblock Probing chaos by magic monotones.
\newblock \emph{Phys. Rev. D}, 106:\penalty0 126009, Dec 2022.
\newblock \doi{10.1103/PhysRevD.106.126009}.

\bibitem[Garcia et~al.(2023)Garcia, Bu, and Jaffe]{garcia2023resource}
Roy~J. Garcia, Kaifeng Bu, and Arthur Jaffe.
\newblock Resource theory of quantum scrambling.
\newblock \emph{Proceedings of the National Academy of Sciences}, 120\penalty0
  (17):\penalty0 e2217031120, 2023.
\newblock \doi{10.1073/pnas.2217031120}.

\bibitem[White et~al.(2021)White, Cao, and Swingle]{White2021CFTmagic}
Christopher~David White, ChunJun Cao, and Brian Swingle.
\newblock Conformal field theories are magical.
\newblock \emph{Phys. Rev. B}, 103:\penalty0 075145, Feb 2021.
\newblock \doi{10.1103/PhysRevB.103.075145}.

\bibitem[Leone et~al.(2024)Leone, Oliviero, Lloyd, and
  Hamma]{leone2022learning}
Lorenzo Leone, Salvatore F.~E. Oliviero, Seth Lloyd, and Alioscia Hamma.
\newblock Learning efficient decoders for quasichaotic quantum scramblers.
\newblock \emph{Phys. Rev. A}, 109:\penalty0 022429, Feb 2024.
\newblock \doi{10.1103/PhysRevA.109.022429}.

\bibitem[Leone et~al.(2022{\natexlab{a}})Leone, Oliviero, Piemontese, True, and
  Hamma]{leone2022retrieving}
Lorenzo Leone, Salvatore F.~E. Oliviero, Stefano Piemontese, Sarah True, and
  Alioscia Hamma.
\newblock Retrieving information from a black hole using quantum machine
  learning.
\newblock \emph{Phys. Rev. A}, 106:\penalty0 062434, Dec 2022{\natexlab{a}}.
\newblock \doi{10.1103/PhysRevA.106.062434}.

\bibitem[Veitch et~al.(2012)Veitch, Ferrie, Gross, and
  Emerson]{veitch2012negative}
Victor Veitch, Christopher Ferrie, David Gross, and Joseph Emerson.
\newblock Negative quasi-probability as a resource for quantum computation.
\newblock \emph{New Journal of Physics}, 14\penalty0 (11):\penalty0 113011,
  2012.
\newblock \doi{10.1088/1367-2630/14/11/113011}.

\bibitem[Seddon and Campbell(2019)]{Seddon2019channel}
James~R. Seddon and Earl~T. Campbell.
\newblock Quantifying magic for multi-qubit operations.
\newblock \emph{Proceedings of the Royal Society A: Mathematical, Physical and
  Engineering Sciences}, 475\penalty0 (2227):\penalty0 20190251, jul 2019.
\newblock \doi{10.1098/rspa.2019.0251}.

\bibitem[Beverland et~al.(2020)Beverland, Campbell, Howard, and
  Kliuchnikov]{Beverland2020NJP}
Michael Beverland, Earl Campbell, Mark Howard, and Vadym Kliuchnikov.
\newblock Lower bounds on the non-clifford resources for quantum computations.
\newblock \emph{Quantum Science and Technology}, 5\penalty0 (3):\penalty0
  035009, may 2020.
\newblock \doi{10.1088/2058-9565/ab8963}.

\bibitem[Wang et~al.(2019)Wang, Wilde, and Su]{wang2019quantifying}
Xin Wang, Mark~M Wilde, and Yuan Su.
\newblock Quantifying the magic of quantum channels.
\newblock \emph{New Journal of Physics}, 21\penalty0 (10):\penalty0 103002,
  2019.
\newblock \doi{10.1088/1367-2630/ab451d}.

\bibitem[Wang et~al.(2020)Wang, Wilde, and Su]{wang2020efficiently}
Xin Wang, Mark~M Wilde, and Yuan Su.
\newblock Efficiently computable bounds for magic state distillation.
\newblock \emph{Physical review letters}, 124\penalty0 (9):\penalty0 090505,
  2020.
\newblock \doi{10.1103/PhysRevLett.124.090505}.

\bibitem[Haug and Kim(2023)]{Haug2022Bellmagic}
Tobias Haug and M.S. Kim.
\newblock Scalable measures of magic resource for quantum computers.
\newblock \emph{{PRX} Quantum}, 4\penalty0 (1), jan 2023.
\newblock \doi{10.1103/prxquantum.4.010301}.

\bibitem[Saxena and Gour(2022)]{saxena2022quantifying}
Gaurav Saxena and Gilad Gour.
\newblock Quantifying multiqubit magic channels with completely
  stabilizer-preserving operations.
\newblock \emph{Phys. Rev. A}, 106:\penalty0 042422, Oct 2022.
\newblock \doi{10.1103/PhysRevA.106.042422}.

\bibitem[Bu et~al.(2023)Bu, Gu, and Jaffe]{bu2023quantum}
Kaifeng Bu, Weichen Gu, and Arthur Jaffe.
\newblock Quantum entropy and central limit theorem.
\newblock \emph{Proceedings of the National Academy of Sciences}, 120\penalty0
  (25):\penalty0 e2304589120, 2023.
\newblock \doi{10.1073/pnas.2304589120}.

\bibitem[Heinrich and Gross(2019)]{Heinrich2019robustnessofmagic}
Markus Heinrich and David Gross.
\newblock Robustness of {M}agic and {S}ymmetries of the {S}tabiliser
  {P}olytope.
\newblock \emph{{Quantum}}, 3:\penalty0 132, April 2019.
\newblock ISSN 2521-327X.
\newblock \doi{10.22331/q-2019-04-08-132}.

\bibitem[Leone et~al.(2022{\natexlab{b}})Leone, Oliviero, and
  Hamma]{Leone2022SRE}
Lorenzo Leone, Salvatore F.~E. Oliviero, and Alioscia Hamma.
\newblock Stabilizer r\'enyi entropy.
\newblock \emph{Phys. Rev. Lett.}, 128:\penalty0 050402, Feb
  2022{\natexlab{b}}.
\newblock \doi{10.1103/PhysRevLett.128.050402}.

\bibitem[Leone and Bittel(2024)]{leone2024stabilizer}
Lorenzo Leone and Lennart Bittel.
\newblock Stabilizer entropies are monotones for magic-state resource theory.
\newblock \emph{arXiv:2404.11652}, 2024.
\newblock \doi{10.48550/arXiv.2404.11652}.

\bibitem[Oliviero et~al.(2022{\natexlab{a}})Oliviero, Leone, Hamma, and
  Lloyd]{oliviero2022measuring}
Salvatore F.~E. Oliviero, Lorenzo Leone, Alioscia Hamma, and Seth Lloyd.
\newblock Measuring magic on a quantum processor.
\newblock \emph{npj Quantum Information}, 8\penalty0 (1):\penalty0 148, Dec
  2022{\natexlab{a}}.
\newblock ISSN 2056-6387.
\newblock \doi{10.1038/s41534-022-00666-5}.

\bibitem[Haug et~al.(2023)Haug, Lee, and Kim]{haug2023efficient}
Tobias Haug, Soovin Lee, and MS~Kim.
\newblock Efficient stabilizer entropies for quantum computers.
\newblock \emph{arXiv preprint arXiv:2305.19152}, 2023.
\newblock \doi{10.48550/arXiv.2305.19152}.

\bibitem[Oliviero et~al.(2022{\natexlab{b}})Oliviero, Leone, and
  Hamma]{oliviero2022magic}
Salvatore F.~E. Oliviero, Lorenzo Leone, and Alioscia Hamma.
\newblock Magic-state resource theory for the ground state of the
  transverse-field ising model.
\newblock \emph{Phys. Rev. A}, 106:\penalty0 042426, Oct 2022{\natexlab{b}}.
\newblock \doi{10.1103/PhysRevA.106.042426}.

\bibitem[Odavić et~al.(2023)Odavić, Haug, Torre, Hamma, Franchini, and
  Giampaolo]{odavic2022complexity}
Jovan Odavić, Tobias Haug, Gianpaolo Torre, Alioscia Hamma, Fabio Franchini,
  and Salvatore~Marco Giampaolo.
\newblock {Complexity of frustration: A new source of non-local
  non-stabilizerness}.
\newblock \emph{SciPost Phys.}, 15:\penalty0 131, 2023.
\newblock \doi{10.21468/SciPostPhys.15.4.131}.

\bibitem[Rattacaso et~al.(2023)Rattacaso, Leone, Oliviero, and
  Hamma]{rattacaso2023stabilizer}
Davide Rattacaso, Lorenzo Leone, Salvatore F.~E. Oliviero, and Alioscia Hamma.
\newblock Stabilizer entropy dynamics after a quantum quench.
\newblock \emph{Phys. Rev. A}, 108:\penalty0 042407, Oct 2023.
\newblock \doi{10.1103/PhysRevA.108.042407}.

\bibitem[Piemontese et~al.(2023)Piemontese, Roscilde, and
  Hamma]{piemontese2023entanglement}
Stefano Piemontese, Tommaso Roscilde, and Alioscia Hamma.
\newblock Entanglement complexity of the rokhsar-kivelson-sign wavefunctions.
\newblock \emph{Phys. Rev. B}, 107:\penalty0 134202, Apr 2023.
\newblock \doi{10.1103/PhysRevB.107.134202}.

\bibitem[Leone et~al.(2023{\natexlab{a}})Leone, Oliviero, and
  Hamma]{leone2023learningTdope}
Lorenzo Leone, Salvatore~FE Oliviero, and Alioscia Hamma.
\newblock Learning t-doped stabilizer states.
\newblock \emph{arXiv preprint arXiv:2305.15398}, 2023{\natexlab{a}}.
\newblock \doi{10.48550/arXiv.2305.15398}.

\bibitem[Leone et~al.(2023{\natexlab{b}})Leone, Oliviero, and
  Hamma]{leone2023nonstabilizerness}
Lorenzo Leone, Salvatore F.~E. Oliviero, and Alioscia Hamma.
\newblock Nonstabilizerness determining the hardness of direct fidelity
  estimation.
\newblock \emph{Phys. Rev. A}, 107:\penalty0 022429, Feb 2023{\natexlab{b}}.
\newblock \doi{10.1103/PhysRevA.107.022429}.

\bibitem[Haug and Piroli(2023{\natexlab{a}})]{Haug2022MPSmagic}
Tobias Haug and Lorenzo Piroli.
\newblock Quantifying nonstabilizerness of matrix product states.
\newblock \emph{Phys. Rev. B}, 107:\penalty0 035148, Jan 2023{\natexlab{a}}.
\newblock \doi{10.1103/PhysRevB.107.035148}.

\bibitem[Lami and Collura(2023)]{lami2023quantum}
Guglielmo Lami and Mario Collura.
\newblock Quantum magic via perfect pauli sampling of matrix product states.
\newblock \emph{arXiv:2303.05536}, 2023.
\newblock \doi{10.48550/arXiv.2303.05536}.

\bibitem[Haug and Piroli(2023{\natexlab{b}})]{haug2023stabilizer}
Tobias Haug and Lorenzo Piroli.
\newblock Stabilizer entropies and nonstabilizerness monotones.
\newblock \emph{{Quantum}}, 7:\penalty0 1092, August 2023{\natexlab{b}}.
\newblock ISSN 2521-327X.
\newblock \doi{10.22331/q-2023-08-28-1092}.

\bibitem[Tarabunga et~al.(2023)Tarabunga, Tirrito, Chanda, and
  Dalmonte]{tarabunga2023many}
Poetri~Sonya Tarabunga, Emanuele Tirrito, Titas Chanda, and Marcello Dalmonte.
\newblock Many-body magic via pauli-markov chains---from criticality to gauge
  theories.
\newblock \emph{PRX Quantum}, 4:\penalty0 040317, Oct 2023.
\newblock \doi{10.1103/PRXQuantum.4.040317}.

\bibitem[Rossi et~al.(2013)Rossi, Huber, Bruß, and
  Macchiavello]{Rossi2013Hyper}
M~Rossi, M~Huber, D~Bruß, and C~Macchiavello.
\newblock Quantum hypergraph states.
\newblock \emph{New Journal of Physics}, 15\penalty0 (11):\penalty0 113022, nov
  2013.
\newblock \doi{10.1088/1367-2630/15/11/113022}.

\bibitem[Qu et~al.(2013)Qu, Wang, Li, and Bao]{Qu2013hypergraph}
Ri~Qu, Juan Wang, Zong-shang Li, and Yan-ru Bao.
\newblock Encoding hypergraphs into quantum states.
\newblock \emph{Phys. Rev. A}, 87:\penalty0 022311, Feb 2013.
\newblock \doi{10.1103/PhysRevA.87.022311}.

\bibitem[Raussendorf and Briegel(2001)]{Raussendorf2001onewayQC}
Robert Raussendorf and Hans~J. Briegel.
\newblock A one-way quantum computer.
\newblock \emph{Phys. Rev. Lett.}, 86:\penalty0 5188--5191, May 2001.
\newblock \doi{10.1103/PhysRevLett.86.5188}.

\bibitem[Raussendorf et~al.(2003)Raussendorf, Browne, and
  Briegel]{Raussendorf2003MBQC}
Robert Raussendorf, Daniel~E. Browne, and Hans~J. Briegel.
\newblock Measurement-based quantum computation on cluster states.
\newblock \emph{Phys. Rev. A}, 68:\penalty0 022312, Aug 2003.
\newblock \doi{10.1103/PhysRevA.68.022312}.

\bibitem[Bremner et~al.(2016)Bremner, Montanaro, and Shepherd]{Bremner2016IQP}
Michael~J. Bremner, Ashley Montanaro, and Dan~J. Shepherd.
\newblock Average-case complexity versus approximate simulation of commuting
  quantum computations.
\newblock \emph{Phys. Rev. Lett.}, 117:\penalty0 080501, Aug 2016.
\newblock \doi{10.1103/PhysRevLett.117.080501}.

\bibitem[Miller and Miyake(2016)]{Miller2016Hierarchy}
Jacob Miller and Akimasa Miyake.
\newblock Hierarchy of universal entanglement in 2d measurement-based quantum
  computation.
\newblock \emph{npj Quantum Information}, 2\penalty0 (1):\penalty0 1--6, 2016.
\newblock \doi{10.1038/npjqi.2016.36}.

\bibitem[Takeuchi et~al.(2019)Takeuchi, Morimae, and
  Hayashi]{Takeuchi2019PauliMBQC}
Yuki Takeuchi, Tomoyuki Morimae, and Masahito Hayashi.
\newblock Quantum computational universality of hypergraph states with pauli-x
  and z basis measurements.
\newblock \emph{Scientific Reports}, 9\penalty0 (1):\penalty0 13585, Sep 2019.
\newblock ISSN 2045-2322.
\newblock \doi{10.1038/s41598-019-49968-3}.

\bibitem[Levin and Gu(2012)]{Levin2012Braiding}
Michael Levin and Zheng-Cheng Gu.
\newblock Braiding statistics approach to symmetry-protected topological
  phases.
\newblock \emph{Phys. Rev. B}, 86:\penalty0 115109, Sep 2012.
\newblock \doi{10.1103/PhysRevB.86.115109}.

\bibitem[Yoshida(2016)]{Yoshida2016Topological}
Beni Yoshida.
\newblock Topological phases with generalized global symmetries.
\newblock \emph{Phys. Rev. B}, 93:\penalty0 155131, Apr 2016.
\newblock \doi{10.1103/PhysRevB.93.155131}.

\bibitem[Miller and Miyake(2018)]{Miller2018Latent}
Jacob Miller and Akimasa Miyake.
\newblock Latent computational complexity of symmetry-protected topological
  order with fractional symmetry.
\newblock \emph{Phys. Rev. Lett.}, 120:\penalty0 170503, Apr 2018.
\newblock \doi{10.1103/PhysRevLett.120.170503}.

\bibitem[White and Wilson(2020)]{white2020mana}
Christopher~David White and Justin~H Wilson.
\newblock Mana in haar-random states.
\newblock \emph{arXiv preprint arXiv:2011.13937}, 2020.
\newblock \doi{10.48550/arXiv.2011.13937}.

\bibitem[Hein et~al.(2004)Hein, Eisert, and Briegel]{Hein2004Multiparty}
M.~Hein, J.~Eisert, and H.~J. Briegel.
\newblock Multiparty entanglement in graph states.
\newblock \emph{Phys. Rev. A}, 69:\penalty0 062311, Jun 2004.
\newblock \doi{10.1103/PhysRevA.69.062311}.

\bibitem[Bell et~al.(2014)Bell, Herrera-Mart{\'\i}, Tame, Markham, Wadsworth,
  and Rarity]{bell2014experimental}
Bryn~A Bell, DA~Herrera-Mart{\'\i}, MS~Tame, Damian Markham, WJ~Wadsworth, and
  JG~Rarity.
\newblock Experimental demonstration of a graph state quantum error-correction
  code.
\newblock \emph{Nature communications}, 5\penalty0 (1):\penalty0 3658, 2014.
\newblock \doi{10.1038/ncomms4658}.

\bibitem[Kruszynska and Kraus(2009{\natexlab{a}})]{Kruszynska2009Hyper}
C.~Kruszynska and B.~Kraus.
\newblock Local entanglability and multipartite entanglement.
\newblock \emph{Phys. Rev. A}, 79:\penalty0 052304, May 2009{\natexlab{a}}.
\newblock \doi{10.1103/PhysRevA.79.052304}.

\bibitem[T\'oth and G\"uhne(2005)]{toth2005detecting}
G\'eza T\'oth and Otfried G\"uhne.
\newblock Detecting genuine multipartite entanglement with two local
  measurements.
\newblock \emph{Phys. Rev. Lett.}, 94:\penalty0 060501, Feb 2005.
\newblock \doi{10.1103/PhysRevLett.94.060501}.

\bibitem[G{\"u}hne et~al.(2014)G{\"u}hne, Cuquet, Steinhoff, Moroder, Rossi,
  Bru{\ss}, Kraus, and Macchiavello]{guhne2014hypergraph}
Otfried G{\"u}hne, Marti Cuquet, Frank~ES Steinhoff, Tobias Moroder, Matteo
  Rossi, Dagmar Bru{\ss}, Barbara Kraus, and Chiara Macchiavello.
\newblock Entanglement and nonclassical properties of hypergraph states.
\newblock \emph{Journal of Physics A: Mathematical and Theoretical},
  47\penalty0 (33):\penalty0 335303, 2014.
\newblock \doi{10.1088/1751-8113/47/33/335303}.

\bibitem[Zhou and Hamma(2022)]{zhou2022hyper}
You Zhou and Alioscia Hamma.
\newblock Entanglement of random hypergraph states.
\newblock \emph{Phys. Rev. A}, 106:\penalty0 012410, Jul 2022.
\newblock \doi{10.1103/PhysRevA.106.012410}.

\bibitem[Tomamichel(2012)]{tomamichel2012framework}
Marco Tomamichel.
\newblock A framework for non-asymptotic quantum information theory.
\newblock \emph{arXiv preprint arXiv:1203.2142}, 2012.
\newblock \doi{10.48550/arXiv.1203.2142}.

\bibitem[Liu et~al.(2019)Liu, Bu, and Takagi]{Liu2019OneShot}
Zi-Wen Liu, Kaifeng Bu, and Ryuji Takagi.
\newblock One-shot operational quantum resource theory.
\newblock \emph{Phys. Rev. Lett.}, 123:\penalty0 020401, Jul 2019.
\newblock \doi{10.1103/PhysRevLett.123.020401}.

\bibitem[Kruszynska and Kraus(2009{\natexlab{b}})]{Kruszynska2009Local}
C.~Kruszynska and B.~Kraus.
\newblock Local entanglability and multipartite entanglement.
\newblock \emph{Phys. Rev. A}, 79:\penalty0 052304, May 2009{\natexlab{b}}.
\newblock \doi{10.1103/PhysRevA.79.052304}.

\bibitem[Nakata and Murao(2014)]{Nakata2014review}
Yoshifumi Nakata and Mio Murao.
\newblock Diagonal quantum circuits: Their computational power and
  applications.
\newblock \emph{The European Physical Journal Plus}, 129\penalty0 (7):\penalty0
  152, Jul 2014.
\newblock ISSN 2190-5444.
\newblock \doi{10.1140/epjp/i2014-14152-9}.

\bibitem[Iaconis(2021)]{iaconis2021quantum}
Jason Iaconis.
\newblock Quantum state complexity in computationally tractable quantum
  circuits.
\newblock \emph{PRX Quantum}, 2:\penalty0 010329, Feb 2021.
\newblock \doi{10.1103/PRXQuantum.2.010329}.

\bibitem[Nakata and Murao(2020)]{nakata2020generic}
Yoshifumi Nakata and Mio Murao.
\newblock Generic entanglement entropy for quantum states with symmetry.
\newblock \emph{Entropy}, 22\penalty0 (6):\penalty0 684, 2020.
\newblock \doi{10.3390/e22060684}.

\bibitem[Li and Haldane(2008)]{Haldane2008Spectrum}
Hui Li and F.~D.~M. Haldane.
\newblock Entanglement spectrum as a generalization of entanglement entropy:
  Identification of topological order in non-abelian fractional quantum hall
  effect states.
\newblock \emph{Phys. Rev. Lett.}, 101:\penalty0 010504, Jul 2008.
\newblock \doi{10.1103/PhysRevLett.101.010504}.

\bibitem[Chamon et~al.(2014)Chamon, Hamma, and
  Mucciolo]{Chamon2014Irreversibility}
Claudio Chamon, Alioscia Hamma, and Eduardo~R. Mucciolo.
\newblock Emergent irreversibility and entanglement spectrum statistics.
\newblock \emph{Phys. Rev. Lett.}, 112:\penalty0 240501, Jun 2014.
\newblock \doi{10.1103/PhysRevLett.112.240501}.

\bibitem[Shaffer et~al.(2014)Shaffer, Chamon, Hamma, and
  Mucciolo]{shaffer2014irreversibility}
Daniel Shaffer, Claudio Chamon, Alioscia Hamma, and Eduardo~R Mucciolo.
\newblock Irreversibility and entanglement spectrum statistics in quantum
  circuits.
\newblock \emph{Journal of Statistical Mechanics: Theory and Experiment},
  2014\penalty0 (12):\penalty0 P12007, 2014.
\newblock \doi{10.1088/1742-5468/2014/12/P12007}.

\bibitem[Bao et~al.(2022)Bao, Cao, and Su]{bao2022magic}
Ning Bao, ChunJun Cao, and Vincent~Paul Su.
\newblock Magic state distillation from entangled states.
\newblock \emph{Physical Review A}, 105\penalty0 (2):\penalty0 022602, 2022.
\newblock \doi{10.1103/PhysRevA.105.022602}.

\bibitem[Tirrito et~al.(2023)Tirrito, Tarabunga, Lami, Chanda, Leone, Oliviero,
  Dalmonte, Collura, and Hamma]{tirrito2023quantifying}
Emanuele Tirrito, Poetri~Sonya Tarabunga, Gugliemo Lami, Titas Chanda, Lorenzo
  Leone, Salvatore~FE Oliviero, Marcello Dalmonte, Mario Collura, and Alioscia
  Hamma.
\newblock Quantifying non-stabilizerness through entanglement spectrum
  flatness.
\newblock \emph{arXiv preprint arXiv:2304.01175}, 2023.
\newblock \doi{10.48550/arXiv.2304.01175}.

\end{thebibliography}

\newpage



\begin{appendix}
\section{Related Proofs of the general formula of magic for hypergraph states}
\subsection{Proof of Observation \ref{ob:beauty}}\label{App:beauty}
We first prove some commuting relation that 
\begin{equation}\label{eq:induction}
    \left(\prod_{v_i\in e'} X_i\right)CZ_e\left(\prod_{v_i\in e'} X_i\right)=\prod_{q\subseteq e'}CZ_{e\setminus q}
\end{equation}
for any set $e'\subseteq e$ by induction. When there is only one element in $e'$, it is just the formula $X_iCZ_eX_i=CZ_e\cdot CZ_{e\setminus\{v_i\}}$. Suppose for some $e'$ such that Eq.~\eqref{eq:induction} holds, then for any $e''=e'\cup\{v_{i_0}\}$ with $e'\subset e''\subseteq e$, one has
\begin{equation}
\begin{aligned}
    \left(\prod_{v_i\in e''} X_i\right)CZ_e\left(\prod_{v_i\in e''} X_i\right)&=X_{i_0}\left(\left(\prod_{v_i\in e'} X_i\right)CZ_e\left(\prod_{v_i\in e'} X_i\right)\right)X_{i_0}\\
    &=X_{i_0}\left(\prod_{q\subseteq e'}CZ_{e\setminus q}\right)X_{i_0}\\
    &=\left(\prod_{q\subseteq e'}CZ_{e\setminus q}\right)\cdot\left(\prod_{q\subseteq e'}CZ_{e\setminus (q\cup\{v_{i_0}\})}\right)\\
    &=\left(\prod_{q\subseteq e''}CZ_{e\setminus q}\right),
\end{aligned}
\end{equation}
where the second line is by the induction assumption, and the third line is by the single-element case. Consequently, Eq.~\eqref{eq:induction} holds for any set $e'\subseteq e$. 

It is clear that the stabilizer generators of the initial state $\ket{\Psi_0}=\ket{+}^{\otimes n}$ are $X_i$. It is direct to check that $S_i=U(G)X_iU^{\dag}(G)$. Then we calculate the product of stabilizer generators as
\begin{equation}
\begin{aligned}
    \mathrm{St}(G,\vec{s})&:=\prod_{i}S_i^{s_i}=U(G)\cdot\prod_{i} X_i^{s_i}\cdot U^{\dag}(G)\\
    &=\prod_{i} X_i^{s_i}\cdot \prod_{e\in E}\left(\left(\prod_{i} X_i^{s_i}\right)CZ_e\left(\prod_{i} X_i^{s_i}\right)\right)\cdot\prod_{e\in E}CZ_e\\
    &=\prod_{i} X_i^{s_i}\cdot \prod_{e\in E}\left(\left(\prod_{v_i\in (v(\vec{s})\cap e)} X_i\right)CZ_e\left(\prod_{v_i\in (v(\vec{s})\cap e)} X_i\right)\right)\cdot\prod_{e\in E}CZ_e\\
    &=\prod_{i}X_i^{s_i}\cdot\prod_{e\in E}\prod_{q\subseteq (v(\vec{s})\cap e)}CZ_{e\setminus q}\cdot\prod_{e\in E}CZ_e\\
    &=\prod_{i}X_i^{s_i}\cdot\prod_{e\in E}\prod_{q\neq \emptyset,q\subseteq (v(\vec{s})\cap e)}CZ_{e\setminus q}.
\end{aligned}
\end{equation}
Here the second line we insert many additional $\prod_{i} X_i^{s_i}$ for each $CZ_e$. Note that in the third line, nontrivial contributions only come from $v_i\in (v(\vec{s})\cap e)$, otherwise $X_i$ commutes with $CZ_e$. The fourth line is by Eq.~\eqref{eq:induction}.

\subsection{Proof of Proposition \ref{prop:PLcomp}}\label{App:PLcomp}
To proceed with the proof, we first show the following simple but useful result about the action of multi-qubit Pauli $X$ gates with any phase unitary $U^Z=e^{i\theta(\vec{a})}\kb{\vec{a}}$, which only introduces phase on the computational basis $\{\ket{\vec{a}}\}$ for some function $\theta$ of the n-bit string $\vec{a}$.
\begin{lemma}\label{le:1}
    Let $X^{\vec{x}}=\bigotimes_i X_i^{x_i}$ with $X_i$ the Pauli $X$ gate on qubit $i$. If $U^Z$ is a phase unitary, one has
    \begin{equation}
	\Tr{X^{\vec{x}}U^Z}=\delta_{\vec{x},\vec{0}}\Tr{U^Z}.
    \end{equation}
    Moreover, if $U_1^Z$ and $U_2^Z$ are both phase unitary, 
    \begin{equation}
	\Tr{X^{\vec{x}_2}U_2^Z X^{\vec{x}_1}U_1^Z}=\delta_{\vec{x}_1,\vec{x}_2}\Tr{X^{\vec{x}_2}U_2^Z X^{\vec{x}_1}U_1^Z}.
    \end{equation}
\end{lemma}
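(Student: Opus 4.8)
The plan is to establish the first identity by a one-line computation in the computational basis and then obtain the second identity as a corollary of the first. For the first identity, since $U^Z$ is a phase unitary I would expand $U^Z=\sum_{\vec{a}}e^{i\theta(\vec{a})}\ket{\vec{a}}\bra{\vec{a}}$ and use that $X^{\vec{x}}$ implements the bit-flip permutation $\ket{\vec{a}}\mapsto\ket{\vec{a}\oplus\vec{x}}$ of the computational basis, where $\oplus$ denotes addition modulo $2$. Then $X^{\vec{x}}U^Z=\sum_{\vec{a}}e^{i\theta(\vec{a})}\ket{\vec{a}\oplus\vec{x}}\bra{\vec{a}}$, and taking the trace retains only the terms with $\vec{a}\oplus\vec{x}=\vec{a}$, that is, the terms with $\vec{x}=\vec{0}$; in that case the sum collapses to $\sum_{\vec{a}}e^{i\theta(\vec{a})}=\Tr{U^Z}$, yielding $\Tr{X^{\vec{x}}U^Z}=\delta_{\vec{x},\vec{0}}\Tr{U^Z}$.

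For the second identity, the key observation is that conjugation by a Pauli-$X$ string preserves diagonality: $X^{\vec{x}_1}U_2^Z X^{\vec{x}_1}=\sum_{\vec{a}}e^{i\theta_2(\vec{a})}\ket{\vec{a}\oplus\vec{x}_1}\bra{\vec{a}\oplus\vec{x}_1}$ is again a phase unitary, and a product of phase unitaries is a phase unitary. I would therefore regroup $X^{\vec{x}_2}U_2^Z X^{\vec{x}_1}U_1^Z = X^{\vec{x}_1\oplus\vec{x}_2}\bigl(X^{\vec{x}_1}U_2^Z X^{\vec{x}_1}\bigr)U_1^Z = X^{\vec{x}_1\oplus\vec{x}_2}\,\tilde{U}^Z$ with $\tilde{U}^Z:=\bigl(X^{\vec{x}_1}U_2^Z X^{\vec{x}_1}\bigr)U_1^Z$ a phase unitary, and then apply the first identity with index vector $\vec{x}_1\oplus\vec{x}_2$ to conclude that the trace vanishes unless $\vec{x}_1\oplus\vec{x}_2=\vec{0}$, i.e.\ unless $\vec{x}_1=\vec{x}_2$. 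This gives exactly the claimed $\Tr{X^{\vec{x}_2}U_2^Z X^{\vec{x}_1}U_1^Z}=\delta_{\vec{x}_1,\vec{x}_2}\Tr{X^{\vec{x}_2}U_2^Z X^{\vec{x}_1}U_1^Z}$, and moreover identifies the nonzero value as $\delta_{\vec{x}_1,\vec{x}_2}\Tr{\tilde{U}^Z}$.

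There is no substantial obstacle in this lemma; the points that need care are purely the bookkeeping with modulo-$2$ addition of the index vectors and the elementary fact that $X^{\vec{x}}$ acts as a permutation of the computational basis, so that extracting the trace turns into the constraint $\vec{x}=\vec{0}$. The same computation also makes transparent that the printed right-hand side of the second line is to be read as asserting that the trace vanishes off the diagonal $\vec{x}_1=\vec{x}_2$.
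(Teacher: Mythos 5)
Your proposal is correct and follows essentially the same route as the paper: both parts reduce to a computational-basis expansion using that a phase unitary is diagonal while $X^{\vec{x}}$ permutes the basis by $\vec{a}\mapsto\vec{a}\oplus\vec{x}$. Your only (harmless) variation is deriving the second identity by regrouping $X^{\vec{x}_2}U_2^Z X^{\vec{x}_1}U_1^Z=X^{\vec{x}_1\oplus\vec{x}_2}\tilde{U}^Z$ and invoking the first identity, whereas the paper computes the matrix element $\bra{\vec{a}+\vec{x}_2}U_2^Z\ket{\vec{a}+\vec{x}_1}$ directly; both hinge on the same diagonality fact.
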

\begin{proof}
The proof is straightforward by definition.
\begin{equation}
    \Tr{X^{\vec{x}}U^Z}=\sum_{\vec{a}}\bra{\vec{a}}X^{\vec{x}}U^Z\ket{\vec{a}}=\sum_{\vec{a}}\bra{\vec{a}}X^{\vec{x}}\ket{\vec{a}}e^{i\theta(\vec{a})}=\delta_{\vec{x},\vec{0}}\Tr{U_Z},
\end{equation}
where $U^Z$ only introduces some phase $\theta(\vec{a})$ on $\ket{\vec{a}}$, and 
\begin{equation}
\begin{aligned}
    \Tr{X^{\vec{x}_2}U_2^Z X^{\vec{x}_1}U_1^Z}
    =&\sum_{\vec{a}}\bra{\vec{a}}X^{\vec{x}_2}U_2^Z X^{\vec{x}_1}U_1^Z\ket{\vec{a}}\\
    =&\sum_{\vec{a}}\bra{\vec{a}+\vec{x}_2}U_2^Z\ket{\vec{a}+\vec{x}_1}e^{i\theta_1(\vec{a})}\\
    =&\delta_{\vec{x}_1,\vec{x}_2}\Tr{X^{\vec{x}_2}U_2^Z X^{\vec{x}_1}U_1^Z}.
\end{aligned}
\end{equation}
\end{proof}

It is clear that $CZ_{e}$ gates used to generate hypergraph states in Eq.~\eqref{eq:def:hyper} are phase gates, and we can apply Lemma \ref{le:1} to calculate its PL-component as follows. By using the stabilizer decomposition of $\ket{G}$ in  Eq.~\eqref{eq:stabG}, \eqref{eq:stabDecom} and \eqref{eq:beauty}, and the definition of Pauli operator in Eq.~\eqref{eq:pauliP}, one has
\begin{equation}\label{eq:Xloc2stab}
\begin{aligned}
    \Tr{P_{\vec{x},\vec{z}}\kb{G}}=&2^{-n}\Tr{P_{\vec{x},\vec{z}}\ \sum_{\vec{s}} \mathrm{St}(G,\vec{s})}\\
    =&2^{-n}\omega(\vec{x},\vec{z}) \sum_{\vec{s}}\Tr{X^{\vec{x}} Z^{\vec{z}}\cdot X^{\vec{s}}\prod_{e}\prod_{q\subseteq (v(\vec{s})\cap e)}CZ_{e\setminus q}}\\
    =&2^{-n}\omega(\vec{x},\vec{z}) \Tr{X^{\vec{x}} Z^{\vec{z}} X^{\vec{x}}\prod_{e}\prod_{q\subseteq (v(\vec{x})\cap e)}CZ_{e\setminus q}}\\
    =&2^{-n}\omega^3(\vec{x},\vec{z})\Tr{\prod_i Z_i^{z_i}\cdot\prod_{e}\prod_{q\subseteq (v(\vec{x})\cap e)}CZ_{e\setminus q}},
\end{aligned}
\end{equation}
where the third line is by Lemma \ref{le:1} inducing the constraint $\delta_{\vec{x},\vec{s}}$, and the final line is  by the commuting relation $X^{\vec{x}} Z^{\vec{z}} X^{\vec{x}}=\omega^2(\vec{x},\vec{z})Z^{\vec{z}}$.
Eq.~\eqref{eq:Xloc2stab} indicates that one should only select the stabilizer generator of $\prod_i  S_i^{s_i}$ in the summation \emph{exactly} according to $\vec{x}$ information of the Pauli operator $P_{\vec{x},\vec{z}}$. 

The operator in the trace of the final line of Eq.~\eqref{eq:Xloc2stab} is indeed a phase unitary denoted as
\begin{equation}\label{eq:UGin}
\begin{aligned}
    \tilde{U}(G_{\vec{x},\vec{z}})&:=\prod_i Z_i^{z_i}\cdot\prod_{e}\prod_{q\subseteq (v(\vec{x})\cap e)}CZ_{e\setminus q}\\
    &=(-1)^{\nu(\vec{x})}\prod_i Z_i^{z_i}\cdot\prod_{e}\prod_{q\subseteq (v(\vec{x})\cap e),q\neq e}CZ_{e\setminus q}\\
    &:=(-1)^{\nu(\vec{x})}U(G_{\vec{x},\vec{z}})
\end{aligned}
\end{equation}
which can be represented by a hypergraph $G_{\vec{x},\vec{z}}$ defined in Eq.~\eqref{eq:Gin} by considering $1$-hyperedges and other hyperedges separately, similar as in the definition of hypergraph state in Eq.~\eqref{eq:def:hyper}. Notice that $\nu(\vec{x})$ is the number of $e\subseteq v(\vec{x})$, i.e., it counts the number of $q=e$. In this case, $CZ_{e\setminus q}=CZ_{\emptyset}$ which gives a $-1$ phase by definition in the second line. Since both $\Tr{P_{\vec{x},\vec{z}}\kb{G}}$ and $\Tr{U(G_{\vec{x},\vec{z}})}$ are real numbers, the phases are not essential and by Eq.~\eqref{eq:Xloc2stab} we finally have 
\begin{equation}
\begin{aligned}
    \Tr{P_{\vec{x},\vec{z}}\kb{G}}^2=2^{-2n}\Tr{U(G_{\vec{x},\vec{z}})}^2.
\end{aligned}
\end{equation}

\section{Proof of Theorem \ref{th:bdegree}: upper bound of magic from bounded average degree}\label{App:Thdegree}
First, we find the subset $\tilde{V}_{\vec{x}}$ by deleting zero-degree vertices in $V$. That is, all the vertices now has some neighbour,  $\tilde{V}_{\vec{x}}=\{v_i\in V | \exists\ e\in E^{(2)}_{\vec{x}}, v_i\in e\}$, which is only determined by $E^{(2)}_{\vec{x}}$ and thus $\vec{x}$. In this way, for each induced hypergraph $G_{\vec{x},\vec{z}}^{*}$, we define the reduced hypergraph $\tilde{G}_{\vec{x},\vec{z}}^{*}=\{\tilde{V}_{\vec{x}},\tilde{E}_{\vec{x},\vec{z}}^{*}\}$, and here $\tilde{E}_{\vec{x},\vec{z}}^{*}$ is just maintained by reducing $E_{\vec{x},\vec{z}}^{*}$ on the subset $\tilde{V}_{\vec{x}}$. Actually, this step reduces all the vertices that are not connected to other vertices and delete all the related $1$-hyperedges in $E^{(1)}_{\vec{z}}$. 

\begin{figure}[tbhp!]
\centering
\resizebox{4cm}{!}{\includegraphics[scale=0.8]{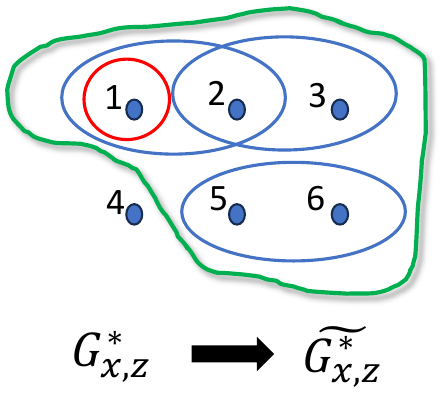}}
\caption{The reduced hypergraph $\tilde{G}_{\vec{x},\vec{z}}^{*}$ from the hypergraph $G_{\vec{x},\vec{z}}^{*}$. Here the vertex $4$ has no neighbour, or is zero-degree, and one has $\tilde{V}_{\vec{x}}=\{1,2,3,5,6\}$. The vertex $4$ is deleted from $G_{\vec{x},\vec{z}}^{*}$ to get $\tilde{G}_{\vec{x},\vec{z}}^{*}$, which is highlighted by the shaded green circle.}  \label{fig:reducedG} 
\end{figure}

If there is some vertex $v_j \in V\setminus \tilde{V}_{\vec{x}}$ with a $1$-hyperedge on it, i.e., there is some Pauli $Z$ operator outside $\tilde{V}_{\vec{x}}$, one has
\begin{equation}
\begin{aligned}
    2^{-n}\Tr{U(G_{\vec{x},\vec{z}}^{*})}=&2^{-n}\Tr{U(\tilde{G}_{\vec{x},\vec{z}}^{*})\otimes\cdots\otimes Z\otimes\cdots}=0.
\end{aligned}
\end{equation}
As a result, we should choose $\vec{z} \in \zeta_{\vec{x}}:=\{\vec{z} | z_i=0, \forall v_i\in V\setminus\tilde{V}_{\vec{x}}\}$ to get nonzero result. That is, there should be no $1$-hyperedge on $V\setminus \tilde{V}_{\vec{x}}$. Denote $m(\vec{x})=|\tilde{V}_{\vec{x}}|$, in this case one has
\begin{equation}
\begin{aligned}
    2^{-n}\Tr{U(G_{\vec{x},\vec{z}}^{*})}=&2^{-n}\Tr{U(\tilde{G}_{\vec{x},\vec{z}}^{*})\otimes \id^{\otimes(n-{m(\vec{x})})}}\\
    =&\frac{1}{2^{m(\vec{x})}}\Tr{U(\tilde{G}_{\vec{x},\vec{z}}^{*})}.
\end{aligned}
\end{equation}

Now by fixing $\vec{x}$ and summarizing all nontrivial vectors $\vec{z}\in \zeta_{\vec{x}}$,
\begin{equation}
\begin{aligned}
    \sum_{\vec{z}}\Tr{U(\tilde{G}_{\vec{x},\vec{z}}^{*})}
    =&\Tr{\sum_{\vec{z} \in \zeta_{\vec{x}}}\prod_j Z_j^{z_j}\prod_{e'\in E^{(2)}_{\vec{x}}}CZ_{e'}}\\
    =&\sum_{\vec{a}}\bra{\vec{a}}\left(\prod_j (\id+Z_j)\prod_{e'\in E^{(2)}_{\vec{x}}}CZ_{e'}\right)\ket{\vec{a}}\\
    =&\sum_{\vec{a}}\delta_{\vec{a},\vec{0}}\cdot2^{m(\vec{x})}\bra{\vec{a}}\prod_{e'\in E^{(2)}_{\vec{x}}}CZ_{e'}\ket{\vec{a}}\\
    =&2^{m(\vec{x})}\bra{\vec{0}}\prod_{e'\in E^{(2)}_{\vec{x}}}CZ_{e'}\ket{\vec{0}}=2^{m(\vec{x})}.
\end{aligned}
\end{equation}

This indicates that $\left|\sum_{\vec{z}}\frac{1}{2^{m(\vec{x})}}\Tr{U(\tilde{G}_{\vec{x},\vec{z}}^{*})}\right|=1$, which means that for $\alpha\geq 2$,
\begin{equation}\label{eq:loose}
    \sum_{\vec{z}}\left|\frac{1}{2^{m(\vec{x})}}\Tr{U(\tilde{G}_{\vec{x},\vec{z}}^{*})}\right|^{2\alpha}\geq 2^{m(\vec{x})}\cdot\left(\frac{1}{2^{m(\vec{x})}}\right)^{2\alpha}=\frac{1}{2^{(2\alpha-1)m(\vec{x})}}.
\end{equation}

Let $W(\vec{x})=w$ be the weight of $\vec{x}$, i.e., the number of $x_i=1$ in $\vec{x}$. Since $m(\vec{x})$ is the number of vertices in edges in $E^{(2)}_{\vec{x}}$, it is bounded by the number of vertices adjacent to those $v_i$ with $x_i=1$. That is, $m(\vec{x})\leq \Delta(\vec{x}):=\sum_{i:x_i=1} \Delta(v_i)$ where $\Delta(v_i)$ is for the degree of vertex $v_i$. 
\begin{equation}
\begin{aligned}
    \mathbf{m}_{\alpha}(\ket{G})=&2^{-n}\sum_{\vec{x}}\left(\sum_{\vec{z}}\left(2^{-n}\Tr{U(G_{\vec{x},\vec{z}}^{*})}\right)^{2\alpha}\right)\\
    \geq&2^{-n}\sum_{\vec{x}} \frac{1}{2^{(2\alpha-1)m(\vec{x})}}\geq 2^{-n}\sum_{\vec{x}} \frac{1}{2^{(2\alpha-1)\Delta(\vec{x})}}\\
    =&2^{-n}\sum_{w=0}^n\sum_{\vec{x}:W(\vec{x})=w}\frac{1}{2^{(2\alpha-1)\Delta(\vec{x})}}\geq 2^{-n}\sum_{w=0}^n{n \choose w}\frac{1}{2^{(2\alpha-1)\overline{\Delta(\vec{x})}^w}}\\
    =&2^{-n}\sum_{w=0}^n{n \choose w}\frac{1}{2^{(2\alpha-1)w\bar{\Delta}(G)}}=2^{-n}\left(1+\frac{1}{2^{(2\alpha-1)\bar{\Delta}(G)}}\right)^n.
\end{aligned}
\end{equation}
In the third line, we apply the convex property of function $f(z)=\frac{1}{2^{(2\alpha-1)z}}$, and the relation
\begin{equation}
    \overline{\Delta(\vec{x})}^{w}:=\frac{1}{{n \choose w}}\sum_{\vec{x}:W(\vec{x})=w}\Delta(\vec{x})=w\bar{\Delta}(G),
\end{equation}
where the degree of each vertex has been summed totally $w$ times. The final line is by the binomial formula.

Finally, the \renyiaa\ is bounded by 
\begin{equation}
    \mb{M}_{\alpha}(\ket{G})=-\frac{1}{\alpha-1}\log \mb{m}_{\alpha}(\ket{G})\leq \frac{1}{\alpha-1}\left(1-\log_2\left(1+\frac{1}{2^{(2\alpha-1)\bar{\Delta}(G)}}\right)\right)n
\end{equation}
for $\alpha\geq 2$.

\section{Proofs of general formulas for statistical properties of PL-moment}

\subsection{Proof of Theorem \ref{theo:randomave}}\label{App:Thmain-counting}
To distinguish among different $c$-uniform hypergraphs, we assign a binary vector $\vec{B}=\left(\cdots,B_{e_c},\cdots\right)$ with ${e_c\subseteq V,|e_c|=c}$ as a label, i.e., $G_{n,c}^{(\vec{B})}$. The value of $B_{e_c}$ decides whether the hyperedge $e_c$ belongs to the edge set of $G_{n,c}^{(\vec{B})}$.

The first step is to write $\mathbf{m}_{\alpha}$ on $2\alpha$-replica \cite{Leone2022SRE,Haug2022MPSmagic},
\begin{equation}\label{eq:base}
\begin{aligned}
    \mathbf{m}_{\alpha}(\ket{\Psi})&=2^{-n}\sum_{P\in\mathcal{P}_n}\left(\Tr{P\ket{\Psi}\bra{\ket{\Psi}}}\right)^{2\alpha}\\
    &=\left(\bra{\Psi}\otimes\bra{\Psi^*}\right)^{\otimes\alpha}\bigotimes_i \Lambda_i^{(\alpha)}\left(\ket{\Psi}\otimes\ket{\Psi^*}\right)^{\otimes\alpha}\\
    &=\Tr{\rho^{(\alpha)}(\Psi)\cdot\bigotimes_i \Lambda_i^{(\alpha)}}
\end{aligned}
\end{equation}
where the local term $\Lambda_i^{(\alpha)}$ is for the $2\alpha$-replica of $i$-th qubit 
\begin{equation}
\begin{aligned}
    \Lambda_i^{(\alpha)}
    =\frac{1}{2}\left(\id_i^{\otimes 2\alpha}+X_i^{\otimes 2\alpha}\right)\left(\id_i^{\otimes 2\alpha}+Z_i^{\otimes 2\alpha}\right),
\end{aligned}
\end{equation}
and
\begin{equation}
    \rho^{(\alpha)}(\Psi)=(\ket{\Psi}\otimes\ket{\Psi^*})^{\otimes\alpha}(\bra{\Psi}\otimes\bra{\Psi^*})^{\otimes\alpha}=(\rho\otimes\rho^*)^{\otimes\alpha},
\end{equation}
where $\rho=\rho^*$ for hypergraph states.
In this step, the $2\alpha$-degree formula is transformed into a $2\alpha$-replica tensor, which enables the interchange of the order of the summation of Pauli operators and the trace. Similarly, we can also interchange the order in the calculation of the average PL-moment
\begin{equation}\label{eq:avemagic}
\begin{aligned}
    \langle\mathbf{m}_{\alpha}\rangle_{\mc{E}_c}&=\frac{1}{2^{{n \choose c}}}\sum_{G_{n,c}\in\mathcal{G}_{n,c}}\mathbf{m}_\alpha(\ket{G_{n,c}})\\
    &=\frac{1}{2^{{n \choose c}}}\sum_{\vec{B}}\Tr{\rho^{(\alpha)}(G_{n,c}^{(\vec{B})})\cdot\bigotimes_i \Lambda_i^{(\alpha)}}\\
    &=\Tr{\frac{1}{2^{{n \choose c}}}\sum_{\vec{B}}\rho^{(\alpha)}(G_{n,c}^{(\vec{B})})\cdot\bigotimes_i \Lambda_i^{(\alpha)}},
\end{aligned}
\end{equation}
where the second line is by Eq.~\eqref{eq:base}. The average of the $\alpha$-fold density matrix in the final of Eq.~\eqref{eq:avemagic} shows
\begin{equation}\label{eq:averho}
\begin{aligned}
    \overline{\rho^{(\alpha)}}&:=\frac{1}{2^{{n \choose c}}}\sum_{\vec{B}}\rho^{(\alpha)}(G_{n,c}^{(\vec{B})})\\
    &=\frac{1}{2^{{n \choose c}}}\cdot\frac{1}{2^{2\alpha n}}\sum_{\vec{B}}\bigotimes_{r=1}^{2\alpha}\sum_{\vec{x}^{(r)}}\mathrm{St}(G_{n,c}^{(\vec{B})},\vec{x}^{(r)})\\
    &=\frac{1}{2^{{n \choose c}}}\cdot\frac{1}{2^{2\alpha n}}\sum_{\vec{x}^{(1)}}\cdots\sum_{\vec{x}^{(2\alpha)}}\sum_{\vec{B}}\prod_{r=1}^{2\alpha}\mathrm{St}(G_{n,c}^{(\vec{B})},\vec{x}^{(r)}),
\end{aligned}
\end{equation}
where in the second line we insert the stabilizer decomposition of the hypergraph state. 
Here the index $r$ labels all $2\alpha$-replica, and the binary vector $\vec{x}^{(r)}$ labels the choice of $S_i$ of each qubit for the $r$-th replica. Without causing any ambiguity, we use $\mathrm{St}(G_{n,c}^{(\vec{B})},\vec{x}^{(r)})$ to replace the complicated expression $\id\otimes\cdots\otimes \mathrm{St}(G_{n,c}^{(\vec{B})},\vec{x}^{(r)})\otimes\cdots\otimes\id$ for $2\alpha$-replica and change the tensor product to the matrix product. Similar simplifications will also be used in the following discussion without further explanations. By using Observation \ref{ob:beauty}, we further get
\begin{equation}\label{eq:Halpha}
\begin{aligned}
    \frac{1}{2^{{n \choose c}}}\sum_{\vec{B}}\prod_{r=1}^{2\alpha}\mathrm{St}(G_{n,c}^{(\vec{B})},\vec{x}^{(r)})
    =& \frac{1}{2^{{n \choose c}}}\sum_{\vec{B}}\prod_{r=1}^{2\alpha}\left(\prod_{i}\left(X_i^{(r)}\right)^{x_i^{(r)}}\prod_{e_c}\prod_{q\subseteq v(\vec{x}^{(r)})\cap e_c}\left(CZ_{e_c\setminus q}^{(r)}\right)^{B_{e_c}}\right)\\
    =&\frac{1}{2^{{n \choose c}}}\left(\prod_{r=1}^{2\alpha}\prod_{i}\left(X_i^{(r)}\right)^{x_i^{(r)}}\cdot\sum_{\vec{B}}\prod_{r=1}^{2\alpha}\prod_{e_c}\prod_{q\subseteq v(\vec{x}^{(r)})\cap e_c}\left(CZ_{e_c\setminus q}^{(r)}\right)^{B_{e_c}}\right).
\end{aligned}
\end{equation}
Here in the final line, we put the summation of hypergraph configuration $\vec{B}$ into the phase-gate part, since the first Pauli-$X$ part is irrelevant to $\vec{B}$. This indicates that we can separate the Pauli-$X$ part and the phase-gate part.
 
As shown in Eq.~\eqref{eq:averho}, 
the average $\alpha$-fold density matrix $\overline{\rho^{(\alpha)}}$ is a linear combination of $\frac{1}{2^{{n \choose c}}}\sum_{\vec{B}}\prod_{r=1}^{2\alpha}\mathrm{St}(G_{n,c}^{(\vec{B})},\vec{x}^{(r)})$ indexed by $\{\vec{x}^{(r)}\}_{r=1}^{2\alpha}$. By focusing one specific term and inserting the result in Eq.~\eqref{eq:Halpha}, we get 
\begin{equation}\label{eq:canselX}
\begin{aligned}
    &\Tr{\frac{1}{2^{{n \choose c}}}\sum_{\vec{B}}\prod_{r=1}^{2\alpha}\mathrm{St}(G_{n,c}^{(\vec{B})},\vec{x}^{(r)})\cdot\bigotimes_i \Lambda_i^{(\alpha)}}\\
    =&\Tr\left\{\prod_{i}\left(\prod_{r=1}^{2\alpha}\left(X_i^{(r)}\right)^{x_i^{(r)}}\left(\id_i^{\otimes 2\alpha}+X_i^{\otimes 2\alpha}\right)\right)
    \prod_{i}\frac{\id_i^{\otimes 2\alpha}+Z_i^{\otimes 2\alpha}}{2}\cdot\frac{1}{2^{{n \choose c}}}\sum_{\vec{B}}\prod_{r=1}^{2\alpha}\prod_{e_c}\prod_{q\subseteq v(\vec{x}^{(r)})\cap e_c}\left(CZ_{e_c\setminus q}^{(r)}\right)^{B_{e_c}}\right\}\\
     =&\delta_{\vec{x},\vec{x}^{(1)},\cdots,\vec{x}^{(2\alpha)}}\Tr{C_0\cdot C_1(\vec{x})}.
\end{aligned}
\end{equation}
Here similar as before we separate the $X$-part and phase-gate part, and we change the order of the product in the $X$-part. Due to Lemma \ref{le:1}, the whole trace is nonzero iff $\prod_{r=1}^{2\alpha}\left(X_i^{(r)}\right)^{x_i^{(r)}}$ equals to $\id_i^{\otimes 2\alpha}$ or $X_i^{\otimes 2\alpha}$ on $2\alpha$-replica. It indicates that all the vectors $\vec{x}^{(1)},\cdots,\vec{x}^{(2\alpha)}$ should be identical, which is denoted by $\vec{x}$ in the final line in Eq.~\eqref{eq:canselX}. The physical meaning is that, to make nonzero contribution, the choice of stabilizers should be the same among $2\alpha$ replicas.
For short, we use $C_0$ to represent
\begin{equation}\label{eq:C0}
    C_0:=\prod_{i}\frac{\id_i^{\otimes 2\alpha}+Z_i^{\otimes 2\alpha}}{2}
\end{equation}
and $C_1(\vec{x})$ represents
\begin{equation}\label{eq:C1}
\begin{aligned}
    C_1(\vec{x}):=&\left.\frac{1}{2^{{n \choose c}}}\sum_{\vec{B}}\prod_{r=1}^{2\alpha}\prod_{e_c}\prod_{q\subseteq v(\vec{x}^{(r)})\cap e_c}\left(CZ_{e_c\setminus q}^{(r)}\right)^{B_{e_c}}\right|_{\vec{x}^{(r)}=\vec{x}}\\
    =&\frac{1}{2^{{n \choose c}}}\sum_{\vec{B}}\prod_{e_c}\prod_{q\subseteq v(\vec{x})\cap e_c}\left(\left(CZ_{e_c\setminus q}\right)^{\otimes 2\alpha}\right)^{B_{e_c}}\\
    =&\prod_{e_c}\frac{1}{2}\left(\id+\prod_{q\subseteq v(\vec{x})\cap e_c}\left(CZ_{e_c\setminus q}\right)^{\otimes 2\alpha}\right).
\end{aligned}
\end{equation}
In the final line, it is in fact an interchange of index: $\sum_{\vec{B}}\prod_{e_c}\rightarrow\prod_{e_c}\sum_{B_{e_c}}$.

Since $C_0$ and $C_1$ are both phase matrices in the $Z$ basis, the trace of their multiplication can be written as
\begin{equation}\label{eq:finalave}
\begin{aligned}
    \Tr{C_0\cdot C_1(\vec{x})}=&\sum_{\mathrm{T}}\bra{\mathrm{T}}\left(C_0\cdot C_1(\vec{x})\right)\ket{\mathrm{T}}\\
    =&\sum_{\mathrm{T}}\left(\bra{\mathrm{T}}C_0\ket{\mathrm{T}}\cdot\bra{\mathrm{T}}C_1(\vec{x})\ket{\mathrm{T}}\right)
\end{aligned}
\end{equation}
where $\mathrm{T}=(\vec{t}_1,\cdots,\vec{t_i},\cdots,\vec{t}_n)$ is a $2\alpha\times n$ binary matrix denoting the bit string in $Z$ basis whose elements are $t_{j,i}$. Each row of $\mathrm{T}$ represents one replica, and each column represents a corresponding qubit. The first term $\bra{\mathrm{T}}C_0\ket{\mathrm{T}}$ is either $0$ or $1$. It equals to $1$ iff $\bra{\mathrm{T}}\frac{\id_i^{\otimes 2\alpha}+Z_i^{\otimes 2\alpha}}{2}\ket{\mathrm{T}}=1$ for all $i$, i.e., $Z_i^{\otimes 2\alpha}\ket{\mathrm{T}}=\ket{\mathrm{T}}$ or equivalently
\begin{equation}\label{eq:con1}
    \sum_{j=1}^{2\alpha}t_{j,i}=0,\quad\forall i.
\end{equation}
On the other hand, the second term $\bra{\mathrm{T}}C_1(\vec{x})\ket{\mathrm{T}}=1$ iff 
\begin{equation}
\bra{\mathrm{T}}\frac{1}{2}\left(\id+\prod_{q\subseteq v(\vec{x})\cap e_c}\left(CZ_{e_c\setminus q}\right)^{\otimes 2\alpha}\right)\ket{\mathrm{T}}=1    
\end{equation}
for all $e_c$, i.e., $\prod_{q\subseteq v(\vec{x})\cap e_c}\left(CZ_{e_c\setminus q}\right)^{\otimes 2\alpha}\ket{\mathrm{T}}=\ket{\mathrm{T}}$ or equivalently
\begin{equation}\label{eq:con2}
    \sum_{q\subset e_c}\left(\prod_{v_i\in q}x_i\norm{\bigodot_{v_k\in e_c\setminus q}\vec{t_k}}_1\right)=0,\quad\forall |e_c|=c,
\end{equation}
where $\prod_{v_i\in q}x_i$ is by $\prod_{q\subseteq v(\vec{x})}$ and $\norm{\bigodot_{v_k\in e_c\setminus q}\vec{t_k}}_1$ is for the phase gate. Here we ignore the case $q=e_c$ since in this case $\left(CZ_{e_c\setminus q}\right)^{\otimes2\alpha}=(-1)^{2\alpha}=1$.
These two conditions are just that of Eq.~\eqref{eq:conhyper1} and Eq.~\eqref{eq:conhyper2} in main text.

Finally, plug Eq.~\eqref{eq:averho} and the final line of Eq.~\eqref{eq:canselX} into Eq.~\eqref{eq:avemagic} to get
\begin{equation}
\begin{aligned}
    \langle\mathbf{m}_{\alpha}\rangle_{\mc{E}_c}
    =&\Tr{\overline{\rho^{(\alpha)}}\cdot\bigotimes_i \Lambda_i^{(\alpha)}}\\
    =&\frac{1}{2^{2\alpha n}}\sum_{\vec{x}}\Tr{C_0\cdot C_1(\vec{x})}\\
    =&\frac{1}{2^{2\alpha n}}\sum_{\vec{x}}\sum_{\mathrm{T}}\left(\bra{\mathrm{T}}C_0\ket{\mathrm{T}}\cdot\bra{\mathrm{T}}C_1(\vec{x})\ket{\mathrm{T}}\right).
\end{aligned}
\end{equation}
It is clear that now the average is just related to the number of possible $\vec{x}$ and $\mathrm{T}$ satisfying conditions \eqref{eq:con1} and \eqref{eq:con2}, and we finish the proof.

\subsection{Proof of Theorem \ref{theo:randomvar}}\label{Apptheo:randomvar}
Similar to that of Theorem \ref{theo:randomave} in Sec.~\ref{App:Thmain-counting}, the first step is to write the moment of the PL-moment in the following replica form.
\begin{equation}
\begin{aligned}
    \mathbf{m}_{\alpha}(\ket{\Psi})^{\tau}&=\left(\Tr{\rho^{(\alpha)}(\Psi)\cdot\bigotimes_i \Lambda_i^{(\alpha)}}\right)^{\tau}\\
    &=\Tr{\left(\bigotimes_{s=1}^{\tau}\rho^{(\alpha)}_s(\Psi)\right)\cdot\bigotimes_{s=1}^{\tau}\left(\bigotimes_i \Lambda_{i,s}^{(\alpha)}\right)}\\
    &=\Tr{\rho^{(\tau\alpha)}(\Psi)\cdot\bigotimes_{s=1}^{\tau}\left(\bigotimes_i \Lambda_{i,s}^{(\alpha)}\right)}.
\end{aligned}
\end{equation}
In this way, we only need to replace $\alpha$ with $\tau\alpha$ in Eq.~\eqref{eq:averho}
and Eq.~\eqref{eq:Halpha}, and replace $\bigotimes_i \Lambda_{i,s}^{(\alpha)}$ with $\bigotimes_{s=1}^{\tau}\left(\bigotimes_i \Lambda_{i,s}^{(\alpha)}\right)$ in Eq.~\eqref{eq:canselX} to get (ignoring the $2^{-2\tau\alpha n}$ prefactor)
\begin{equation}
\begin{aligned}
    &\Tr{\frac{1}{2^{{n \choose c}}}\sum_{\vec{B}}\prod_{r=1}^{2\tau\alpha}\mathrm{St}(G_{n,c}^{(\vec{B})},\vec{x}^{(r)})\cdot\bigotimes_{s=1}^{\tau}\left(\bigotimes_i \Lambda_{i,s}^{(\alpha)}\right)}\\
    =&\prod_{s=1}^{\tau}\delta_{\vec{x}_s,\vec{x}^{(s\alpha+1)},\cdots,\vec{x}^{((s+1)\alpha)}}\cdot\Tr{\prod_{s=1}^{\tau}\left(\prod_{i}\frac{\id_{i,s}^{\otimes 2\alpha}+Z_{i,s}^{\otimes 2\alpha}}{2}\right)\cdot\sum_{\vec{B}} \frac{1}{2}\prod_{e_c} \prod_{s=1}^{\tau} \prod_{q\subseteq v(\vec{x}_s)\cap e_c}\left(\left(CZ^{(s)}_{e_c\setminus q}\right)^{\otimes 2\alpha}\right)^{B_{e_c}}}\\
    =&\prod_{s=1}^{\tau}\delta_{\vec{x}_s,\vec{x}^{(s\alpha+1)},\cdots,\vec{x}^{((s+1)\alpha)}}\cdot\Tr{\prod_{s=1}^{\tau}\left(\prod_{i}\frac{\id_{i,s}^{\otimes 2\alpha}+Z_{i,s}^{\otimes 2\alpha}}{2}\right)\cdot\prod_{e_c}\frac{1}{2}\left(\id+\prod_{s=1}^{\tau}\prod_{q\subseteq v(\vec{x}_s)\cap e_c}\left(CZ^{(s)}_{e_c\setminus q}\right)^{\otimes 2\alpha}\right)}\\
    =&\Tr{\prod_{s=1}^{\tau} C_{0,s}\cdot C_{1}(\mathrm{X})}
\end{aligned}
\end{equation}
where
\begin{equation}
    C_{0,s}:=\prod_{i}\frac{\id_{i,s}^{\otimes 2\alpha}+Z_{i,s}^{\otimes 2\alpha}}{2}
\end{equation}
and 
\begin{equation}
    C_{1}(\mathrm{X}):=\prod_{e_c}\frac{1}{2}\left(\id+\prod_{s=1}^{\tau}\prod_{q\subseteq v(\vec{x}_s)\cap e_c}\left(CZ^{(s)}_{e_c\setminus q}\right)^{\otimes 2\alpha}\right)
\end{equation}
where $\mathrm{X}=(\vec{x}_1,\cdots,\vec{x}_s,\cdots,\vec{x}_{\tau})$.

With a similar argument as that in Eq.~\eqref{eq:finalave}, one can transform these phase matrices to the conditions Eq.~\eqref{eq:convar1} and Eq.~\eqref{eq:convar2} according to the expression of $C_{0,s}$ and $C_{1}(\mathrm{X})$ respectively.

\subsection{Proof of Proposition \ref{prop:general}}\label{ApppropRecursion}
We prove the result mainly by recursion. First define a series of more general conditions $C_{\xi}$ with parameter $\xi\in[c]$ describing the cardinality of the hyperedge.
\begin{equation}\label{eq:ksi}
    C_{\xi}:\quad \sum_{q\subset e_{\xi}}\left(\prod_{v_i\in q}x_i\norm{\bigodot_{v_k\in e_{\xi}\setminus q}\vec{t_k}}_1\right)=0,\quad\forall |e_{\xi}|=\xi.
\end{equation}
Notice that as in main text here $q\neq\emptyset$ unless otherwise specified. It is clear the condition Eq.~\eqref{eq:conhyper2} is just $C_{\xi=c}$. 

We further add the case $q=\emptyset$ and define another similar series of conditions $C_{\xi}^{*}$ for $\xi\in[c]$ as
\begin{equation}\label{eq:ksistar}
    C_{\xi}^{*}:\sum_{q\subset e_{\xi}\| q=\emptyset}\left(\prod_{v_i\in q}x_i\norm{\bigodot_{v_k\in e_{\xi}\setminus q}\vec{t_k}}_1\right):= \sum_{q\subset e_{\xi}}\left(\prod_{v_i\in q}x_i\norm{\bigodot_{v_k\in e_{\xi}\setminus q}\vec{t_k}}_1\right)+\norm{\bigodot_{v_k\in e_{\xi}}\vec{t_k}}_1=0,\forall |e_{\xi}|=\xi.
\end{equation}
For convenience, we also use $C_{\xi}$ (or $C_{\xi}^{*}$) to denote all the 2-tuples $(\mathrm{T},\vec{x})$ satisfying $C_{\xi}$ (or $C_{\xi}^{*}$) and the parity constraint in Eq.~\eqref{eq:conhyper1} in main text. Denote $N_{\xi}=|C_{\xi}|$ (and $N_{\xi}^{*}=|C_{\xi}^{*}|$) as the number of them, so $\mathrm{N}(c,\alpha,n)=N_c$ in Theorem \ref{theo:randomave}. 

Furthermore, define $\Delta C_{\xi}^{*}$ as the set of all the $(\mathrm{T},\vec{x})$ satisfying condition $C_{\xi}^{*}$ but not satisfying condition $C_{\xi-1}^{*}$ and denote $\Delta N_{\xi}^{*}=|\Delta C_{\xi}^{*}|$. Similarly, define $\Delta C_{\xi}$ as the set of all the $(\mathrm{T},\vec{x})$ satisfying condition $C_{\xi}$ but not satisfying condition $C_{\xi-1}^{*}$ and denote $\Delta N_{\xi}=|\Delta C_{\xi}|$. Notice that in the definition of $\Delta C_{\xi}$, the set is $C_{\xi-1}^{*}$ but not $C_{\xi-1}$. Then we directly have the inequalities of the set cardinality $N_{\xi}^{*}\leq N_{\xi-1}^{*}+\Delta N_{\xi}^{*}$ and $N_{\xi}\leq N_{\xi-1}^{*}+\Delta N_{\xi}$, and we give the estimation of both $\Delta N_{\xi}^{*}$ and $\Delta N_{\xi}$ as follows.

By definition, for any element $(\mathrm{T},\vec{x})$ in $\Delta C_{\xi}^{*}$, one can find a set $e_{\xi-1}^{(0)}$ such that 
\begin{equation}\label{eq:generalboundcon}
    \sum_{q\subset e_{\xi-1}^{(0)}\| q=\emptyset}\left(\prod_{v_i\in q}x_i\norm{\bigodot_{v_k\in e_{\xi-1}^{(0)}\setminus q}\vec{t_k}}_1\right)=1.
\end{equation}
Eq.~\eqref{eq:generalboundcon} results in the following powerful restriction on the elements in $C_{\xi}^{*}$. For any $v_{i_1},v_{i_2}\notin e_{\xi-1}^{(0)}$, suppose their corresponding $\vec{t}$ are the same, i.e., $\vec{t}_{i_1}=\vec{t}_{i_2}$, then consider two sets $e_{\xi}^{(1)}=\{v_{i_1}\}\cup e_{\xi-1}^{(0)}$ and $e_{\xi}^{(2)}=\{v_{i_2}\}\cup e_{\xi-1}^{(0)}$. 
\begin{equation}\label{eq:generalbound1}
\begin{aligned}
    0=&\sum_{q\subset e_{\xi}^{(1)}\|q=\emptyset}\left(\prod_{v_i\in q}x_i\norm{\bigodot_{v_k\in e_{\xi}^{(1)}\setminus q}\vec{t_k}}_1\right)-\sum_{q\subset e_{\xi}^{(2)}\|q=\emptyset}\left(\prod_{v_i\in q}x_i\norm{\bigodot_{v_k\in e_{\xi}^{(2)}\setminus q}\vec{t_k}}_1\right)\\
    =&\sum_{q\subset e_{\xi}^{(1)}}\left(\prod_{v_i\in q}x_i\norm{\bigodot_{v_k\in e_{\xi}^{(1)}\setminus q}\vec{t_k}}_1\right)-\sum_{q\subset e_{\xi}^{(2)}}\left(\prod_{v_i\in q}x_i\norm{\bigodot_{v_k\in e_{\xi}^{(2)}\setminus q}\vec{t_k}}_1\right)\\
    =&\left(x_{i_1}-x_{i_2}\right)\sum_{q\subset e_{\xi-1}^{(0)}\|q=\emptyset}\left(\prod_{v_i\in q}x_i\norm{\bigodot_{v_k\in e_{\xi-1}^{(0)}\setminus q}\vec{t_k}}_1\right)\\
    =&x_{i_1}-x_{i_2}.
\end{aligned}
\end{equation}
In the first line, we use the precondition that $C_{\xi}^{*}$ is satisfied.
In the third line, we use the fact that $\vec{t}_{i_1}=\vec{t}_{i_2}$ so those two summations in differ only in $x_{i_1}$ and $x_{i_2}$. The final line is by Eq.~\eqref{eq:generalboundcon}.
As a result, Eq.~\eqref{eq:generalbound1} shows that given $\mathrm{T}$, all the vertices except those in $e_{\xi-1}^{(0)}$ share the same valid vector $\vec{t}$, should have the same value of $x_i$.

Similarly, for any element $(\mathrm{T},\vec{x})$ in $\Delta C_{\xi}$, one can use the same strategy to find the same result starting from the second line of Eq.~\eqref{eq:generalbound1}. That is, given $\mathrm{T}$, all the vertices except those in $e_{\xi-1}^{(0)}$ which share the same valid vector $\vec{t}$, should take the same value of $x_i$.
 
Now fix $\mathrm{T}$, consider $\vec{x}$ such that $(\mathrm{T},\vec{x})\in \Delta C_{\xi}^{*}$. There are totally $2^{2\alpha-1}$ possible $\vec{t}$ satisfying Eq.~\eqref{eq:conhyper1}, the number of choices of $\vec{x}$ is no more than $2^{\xi-1}\cdot 2^{2^{2\alpha-1}}$. Here $2^{\xi-1}$ denotes the number of choices of $x_i$ whose $v_i\in e_{\xi-1}^{(0)}$ in Eq.~\eqref{eq:generalboundcon}. $2^{2^{2\alpha-1}}$ denotes the number of choices of other $x_i$ outside $e_{\xi-1}^{(0)}$, which should be the same for the same-valued $\vec{t}_i$. Therefore, by counting all $2^{(2\alpha-1)n}$ possible $\mathrm{T}$, the size of $\Delta C_{\xi}^{*}$ is
\begin{equation}
    \Delta N_{\xi}^{*}\leq 2^{(2\alpha-1)n}\cdot 2^{\xi-1}\cdot 2^{2^{2\alpha-1}}.
\end{equation}
One can similarly get 
\begin{equation}
    \Delta N_{\xi}\leq 2^{(2\alpha-1)n}\cdot 2^{\xi-1}\cdot 2^{2^{2\alpha-1}}.
\end{equation}

For the initial condition when $\xi=2$, $C_{2}^{*}$ becomes
\begin{equation}\label{eq:C2star}
\begin{aligned}
    C_{2}^{*}:\quad \sum_{q\subset e_{2}\|q=\emptyset}\left(\prod_{v_i\in q}x_i\norm{\bigodot_{v_k\in e_{2}\setminus q}\vec{t_k}}_1\right)&=x_{i_1}\norm{\vec{t}_{i_2}}_1+x_{i_2}\norm{\vec{t}_{i_1}}_1+\norm{\vec{t}_{i_1}\odot\vec{t}_{i_2}}_1\\
    &=\norm{\vec{t}_{i_1}\odot\vec{t}_{i_2}}_1=0,\quad\forall i_1,i_2
\end{aligned}   
\end{equation}
where we use the fact that $\norm{\vec{t}_{i}}_1=0$.
Although it gives no limitation on $\vec{x}$, it limits the valid $\mathrm{T}$. Indeed, this is a matching problem, so there are no more than $(2\alpha-1)!!\cdot (2^{\alpha})^n$ possible $\mathrm{T}$, i.e.,
\begin{equation}\label{eq:C2starC}
    N_2^{*}\leq (2\alpha-1)!!\cdot (2^{\alpha})^n\cdot 2^n=(2\alpha-1)!!\cdot (2^{\alpha+1})^n.
\end{equation}
Then by recursion, one has
\begin{equation}
\begin{aligned}
   \mathrm{N}(c,\alpha,n)=&N_c\leq N_{c-1}^{*}+\Delta N_c\leq N^*_2+\sum_{\xi=3}^{c-1}\Delta N_{\xi}^{*}+\Delta N_c\\
    \leq& (2\alpha-1)!!\cdot (2^{\alpha+1})^n+(2^{c}-4)\cdot 2^{2^{2\alpha-1}}\cdot 2^{(2\alpha-1)n}\leq 2^{c}\cdot 2^{2^{2\alpha-1}}\cdot 2^{(2\alpha-1)n}
\end{aligned}
\end{equation}
and 
\begin{equation}
    \langle\mathbf{m}_{\alpha}\rangle_{\mc{E}_c}=\frac{\mathrm{N}(c,\alpha,n)}{2^{2\alpha n}}\leq \frac{2^{(c+2^{2\alpha-1})}}{2^n}
\end{equation}
for $\alpha\geq 2$.

We remark that in the final step, we use $(2\alpha-1)!!\leq 4\cdot 2^{2^{2\alpha-1}}$ as $\alpha\geq 2$ for simplicity. But such inequality is quite loose, so one can use the original value to give a relatively tighter bound.

\section{Proofs of the concentration of magic for random 3-uniform hypergraph states}
All the proofs in this section are based on the simplified constraint shown in Corollary \ref{co:3special}.
\subsection{Proof of Proposition \ref{prop:3cardi}}\label{Appprop2}
When $c=3$, we inherit the spirit of 
the proof of Proposition \ref{prop:general} in Sec~\ref{ApppropRecursion} by considering $C_2^*$ in Eq.~\eqref{eq:C2star} and $C_3$ in Corollary \ref{co:3special}. Suppose one finds two vectors satisfying $\norm{\vec{t}_{i_1}\odot\vec{t}_{i_2}}_1=1$, then besides $i_1,i_2$, the $x_i$ whose corresponding $\vec{t_i}$ share the same-valued of vector, should also be the same.

In fact, we go one step further to show that these result also applies to $i_1,i_2$ as follows. To be specific, suppose one finds two vectors satisfying $\norm{\vec{t}_{i_1}\odot\vec{t}_{i_2}}_1=1$, and also $\vec{t}_{i_3}=\vec{t}_{i_1}$, then focus on the edge $e_3=\{v_{i_1},v_{i_2},v_{i_3}\}$, Eq.~\eqref{eq:conhyper3edge} shows
\begin{equation}
    x_{i_1}\norm{\vec{t}_{i_2}\odot \vec{t}_{i_3}}_1+x_{i_2}\norm{\vec{t}_{i_1}\odot \vec{t}_{i_3}}_1+x_{i_3}\norm{\vec{t}_{i_1}\odot \vec{t}_{i_2}}_1=(x_{i_1}+x_{i_3})\norm{\vec{t}_{i_1}\odot \vec{t}_{i_2}}_1=x_{i_1}+x_{i_3}=0,
\end{equation}
since $\norm{\vec{t}_{i_1}\odot \vec{t}_{i_3}}_1=\norm{\vec{t}_{i_1}}_1=0$. 
On the other hand, if there is no such pair with  $\norm{\vec{t}_{i_1}\odot\vec{t}_{i_2}}_1=1$,  there is no constraint on $\vec{x}$ at all.

We define the set $\mc{D}=\{\vec{d}~\big|~\|\vec{d}\|_1=0\}$ with totally $2^{2\alpha-1}$ elements. Actually, it is easy to check $\vec{d}$ is equivalent to $\vec{1}-\vec{d}$ for the $c=3$ constraint in Eq.~\eqref{eq:conhyper3edge}. That is, if a 2-tuple $2$-tuple 
$(\mathrm{T},\vec{x})$ fulfills the constraint, then one can substitute any $\vec{t}$ to $\vec{1}-\vec{t}$. As a result, there are $2^{2\alpha-2}$ nonequivalent $\vec{d}$.

Focusing on the $\alpha=2$ case, there are $4$ nonequivalent $\vec{d}$ showing: $\vec{d}_0=(0,0,0,0)$, $\vec{d}_1=(1,1,0,0)$, $\vec{d}_2=(1,0,1,0)$, $\vec{d}_3=(1,0,0,1)$, and their relations are 
\begin{equation}\label{eq:app:d}
\begin{aligned}
&\norm{\vec{d}_0\odot\vec{d}_1}_1=\norm{\vec{d}_0\odot\vec{d}_2}_1=\norm{\vec{d}_0\odot\vec{d}_3}_1=0,\\
&\norm{\vec{d}_1\odot\vec{d}_2}_1=\norm{\vec{d}_1\odot\vec{d}_3}_1=\norm{\vec{d}_2\odot\vec{d}_3}_1=1.
\end{aligned}
\end{equation}
Use the function $n(\vec{d})$ to denote the number of $\vec{t}$ that equals to $\vec{d}$. According to the previous analysis, if at most one of $n(\vec{d}_1),n(\vec{d}_2),n(\vec{d}_3)$ is nonzero, $\vec{x}$ can take all the $2^n$ values. On the other hand, there exists vector-pair whose Hadamard product is $1$, and the $x_i$ should be the same if the corresponding $t_i$ are the same. We thus use $x_{\vec{d}_0}$, $x_{\vec{d}_1}$, $x_{\vec{d}_2}$, $x_{\vec{d}_3}$ to denote the possible values, if their corresponding $n(\vec{d})$ is nonzero. Actually, we find that there are only $4$ possible $\vec{x}$ in any of these cases. First, $x_{\vec{d}_0}$ must keep zero, so the existence of $\vec{d}_0$ in $\mathrm{T}$ does not affect the possibility of $\vec{x}$. If only two of $n(\vec{d}_1),n(\vec{d}_2),n(\vec{d}_3)$ are nonzero, one can check that $x_{\vec{d}_{i\neq 0}}$ can take any value and thus there are $4$ possibilities; if all three are nonzero, it gives the constraint $x_{\vec{d}_1}+x_{\vec{d}_2}+x_{\vec{d}_3}=1$, and thus there are still $4$ possibilities.

As a result, 
\begin{equation}
\begin{aligned}
    \langle\mathbf{m}_{2}\rangle_{\mc{E}_3}&=\frac{1}{2^{4n}}\cdot2^n\cdot\left(2^n(3\cdot 2^n-2)+4(4^n-3\cdot 2^n+2)\right)\\
    &=\frac{7}{2^n}-\frac{14}{4^n}+\frac{8}{8^n}
\end{aligned}
\end{equation}
where the first $2^n$ comes from the equivalence between $\vec{d}$ and $\vec{1}-\vec{d}$ and $3\cdot 2^n-2$ is the number of cases that at least two of $n(\vec{d}_1),n(\vec{d}_2),n(\vec{d}_3)$ equal to $0$.

When $\alpha\geq3$, there are totally $2^{2\alpha-2}$ nonequivalent $\vec{d}$. On the one hand, if all $\norm{\vec{t}_{i_1}\odot\vec{t}_{i_2}}_1=0$ with $\vec{t}_{i_1}, \vec{t}_{i_2}\in \mathrm{T}  $, there is no constraint on $\vec{x}$, which is actually $C_2^*$ in Eq.~\eqref{eq:C2star} with couting result in Eq.~\eqref{eq:C2starC}.  
On the one hand, if $\norm{\vec{d}\odot\vec{d}'}_1=1$ and both $n(\vec{d})\neq0$, $n(\vec{d}')\neq0$, it is not hard to see that as long as $x_{\vec{d}}$ and $x_{\vec{d}'}$ are fixed, all the other $x_i$ are also determined. To be specific, considering $e_3=\{v_{i_1},v_{i_2},v_{i_3}\}$ with $\vec{t}_{i_1}=\vec{d}$, $\vec{t}_{i_2}=\vec{d}'$, $x_{i_3}$ is determined by the constraint in Eq.~\eqref{eq:conhyper3edge}. As a result, there are $4$ possibilities when choosing $x_{\vec{d}}$ and $x_{\vec{d}'}$, and the number of possible $\mathrm{T}$ should be no more than the maximal value $2^{(2\alpha-1)n}$.

We finally have the upper bound 
\begin{equation}
\begin{aligned}
    \langle\mathbf{m}_{\alpha}\rangle_{\mc{E}_3}&\leq\frac{1}{2^{2\alpha n}}\cdot\left(2^n\cdot(2\alpha-1)!!(2^{\alpha})^n+4\cdot2^{(2\alpha-1)n}\right)\\
    &=\frac{4}{2^n}+\frac{(2\alpha-1)!!}{2^{(\alpha-1)n}}.
\end{aligned}
\end{equation}

\subsection{Proof of Proposition \ref{prop:var} }\label{Appprop:var} 
From Theorem \ref{theo:randomvar}, one can directly derive the corresponding counting problem when $\alpha=2$ and $c=3$:
\begin{equation}
\begin{gathered}
    \norm{\vec{t}^{(1)}_i}_1=\norm{\vec{t}^{(2)}_i}_1=0,\quad\forall i;\\
    \sum_{i\in e_3}\left(x_{i,1}\norm{\bigodot_{k\neq i}\vec{t}^{(1)}_k}_1+x_{i,2}\norm{\bigodot_{k\neq i}\vec{t}^{(2)}_k}_1\right)=0,\quad\forall |e_3|=3.
\end{gathered}
\end{equation}

Following the proof of Proposition \ref{prop:general} in Sec.~\ref{ApppropRecursion} and Proposition \ref{prop:3cardi} in Sec.~\ref{Appprop2}, similar conditions can be derived in the same way.
\begin{itemize}
    \item[a)] Suppose we can find $v_{k_1}$ and $v_{k_2}$ such that $\norm{\vec{t}^{(1)}_{k_1}\odot\vec{t}^{(1)}_{k_2}}_1=1$ and $\norm{\vec{t}^{(2)}_{k_1}\odot\vec{t}^{(2)}_{k_2}}_1=0$. Then for any other two $v_{i_1}$ and $v_{i_2}$ with the same $(\vec{t}^{(1)},\vec{t}^{(2)})$, consider 3-hyperedges $e_{3,1}=\{v_{i_1},v_{k_1},v_{k_2}\}$ and $e_{3,2}=\{v_{i_2},v_{k_1},v_{k_2}\}$, 
    \begin{equation}
    \begin{aligned}
        0=&\sum_{i\in e_{3,1}}\left(x_{i,1}\norm{\bigodot_{k\neq i}\vec{t}^{(1)}_k}_1+x_{i,2}\norm{\bigodot_{k\neq i}\vec{t}^{(2)}_k}_1\right)-\sum_{i\in e_{3,2}}\left(x_{i,1}\norm{\bigodot_{k\neq i}\vec{t}^{(1)}_k}_1+x_{i,2}\norm{\bigodot_{k\neq i}\vec{t}^{(2)}_k}_1\right)\\
        =&(x_{i_1,1}-x_{i_2,1})\norm{\bigodot_{k\neq i}\vec{t}^{(1)}_k}_1+(x_{i_1,2}-x_{i_2,2})\norm{\bigodot_{k\neq i}\vec{t}^{(2)}_k}_1\\
        =&x_{i_1,1}-x_{i_2,1}.
    \end{aligned}
    \end{equation}
    This indicates that $x_{i,1}$ corresponding to the same $(\vec{t}^{(1)}_i,\vec{t}^{(2)}_i)$ should be the same except for $v_{k_1}$ and $v_{k_2}$. With a similar analysis as that in Appendix \ref{Appprop2}, one can remove the exception by considering $e_3=\{v_{k_1},v_{k_2},v_{k_3}\}$ where $(\vec{t}^{(1)}_{k_3},\vec{t}^{(2)}_{k_3})=(\vec{t}^{(1)}_{k_1},\vec{t}^{(2)}_{k_1})$.
    \item[b)] Suppose we can find $v_{k_1}$ and $v_{k_2}$ such that $\norm{\vec{t}^{(1)}_{k_1}\odot\vec{t}^{(1)}_{k_2}}_1=0$ and $\norm{\vec{t}^{(2)}_{k_1}\odot\vec{t}^{(2)}_{k_2}}_1=1$. One can derive a similar conclusion: $x_{i,2}$ corresponding to the same $(\vec{t}^{(1)}_i,\vec{t}^{(2)}_i)$ should be the same.
    \item[c)] Suppose we can find $v_{k_1}$ and $v_{k_2}$ such that $\norm{\vec{t}^{(1)}_{k_1}\odot\vec{t}^{(1)}_{k_2}}_1=1$ and $\norm{\vec{t}^{(2)}_{k_1}\odot\vec{t}^{(2)}_{k_2}}_1=1$. One can derive a similar conclusion: $x_{i,1}+x_{i,2}$ corresponding to the same $(\vec{t}^{(1)}_i,\vec{t}^{(2)}_i)$ should be the same.
\end{itemize}
Any two of the conditions in cases a), b) and c) could lead to the conclusion that the $(x_{i,1},x_{i_2})$ whose corresponding $\vec{t_i}^{(1)},\vec{t_i}^{(2)}$ share the same-valued of vector, should also be the same. In this way, the number of choices of $\mathrm{X}$ is no more than $4^2$ according to the analysis in Sec.~\ref{Appprop2}. Also, if only one of the conditions in cases a), b) and c) is satisfied, the number of choices of $\mathrm{X}$ should be no more than $4\cdot 2^n$. Otherwise, one should count all the $4^n$ different $\mathrm{X}$. 


Since $\alpha=2$ here, there are still only $4$ nonequivalent $\vec{d}$ as in the proof of Proposition \ref{prop:3cardi}: $\vec{d}_0$, $\vec{d}_1$, $\vec{d}_2$, $\vec{d}_3$. According to Sec.~\ref{Appprop2}, when none of the assumptions in case a), b) and c) is satisfied, we should have $\norm{\vec{t}^{(1)}_{k_1}\odot\vec{t}^{(1)}_{k_2}}_1\equiv 0$ and $\norm{\vec{t}^{(2)}_{k_1}\odot\vec{t}^{(2)}_{k_2}}_1\equiv 0$ so the number of possible $\mathcal{T}$ is $(3\cdot 2^n-2)^2$. When only the assumption in case a) is satisfied, we should have $\norm{\vec{t}^{(2)}_{k_1}\odot\vec{t}^{(2)}_{k_2}}_1\equiv 0$ so the number of choices of $\mathrm{T}^{(2)}$ is $3\cdot 2^n-2$. In this case, since the number of choices of $\mathrm{T}^{(1)}$ is no more than $4^n-3\cdot 2^n+2$, the total number of $(\mathcal{T},\mathrm{X})$ is upper bounded by $4\cdot 2^n\cdot (4^n-3\cdot 2^n+2)\cdot(3\cdot 2^n-2)$. The same result holds for that in case b). When only the assumption in case c) is satisfied, the total number of $\mathcal{T}$ is no more than $15\cdot 4^n$, where $15$ comes from all possible choices of $4$ different $(\vec{d}^{(1)},\vec{d}^{(2)})$ such that only c) is satisfied. When any two of the conditions in cases a), b) and c) are satisfied, the number of possible $\mathcal{T}$ is no more than $(4^n-3\cdot 2^n+2)^2$. With the above conditions, by considering the equivalence between $\vec{d}$ and $\vec{1}-\vec{d}$ one can obtain in total
\begin{equation}
\begin{aligned}
    \langle\mathbf{m}_{2}^2\rangle_{\mc{E}_3} & \leq\frac{2^{2n}}{2^{8n}}(4^2\cdot (4^n-3\cdot 2^n+2)^2+2\cdot 4\cdot 2^n\cdot (4^n-3\cdot 2^n+2)\cdot(3\cdot 2^n-2)\\
    & \quad +4^n\cdot(3\cdot 2^n-2)^2+4\cdot 2^n\cdot 15\cdot 4^n)\\
    &=\frac{2^{2n}}{2^{8n}}\left(\left(2^n(3\cdot 2^n-2)+4(4^n-3\cdot 2^n+2)\right)^2+60\cdot 2^{3n}\right).
\end{aligned}
\end{equation}
Then one gets
\begin{equation}
    \delta^2_{\mc{E}_3}[\mathbf{m}_{2}]\leq\frac{60}{2^{3n}}.
\end{equation}

\subsection{Proof of Corollary \ref{co:concentarte}}\label{Appco:var}
\begin{proof}
\begin{equation}
\begin{aligned}
    &\Pr{\mb{M}_2(\ket{G_{n,3}})\leq n-3}\\
     =&\Pr{\mb{m}_2(\ket{G_{n,3}})\geq\frac{8}{2^n}}\\
    \leq&\Pr{\left|\mathbf{m}_2(\ket{G_{n,3}})-\langle\mathbf{m}_{2}\rangle_{\mc{E}_3}\right|\geq2^{-n}}\\
   \leq&\frac{\delta_{\mc{E}_3}[\mathbf{m}_{2}]^2}{\left(2^{-n}\right)^2}\leq\frac{60}{2^{n}}.
\end{aligned}
\end{equation}
with the final line by Chebyshev's inequality. 
 \end{proof}

\section{Proof of Theorem \ref{theo:p-ave}: average magic for non-uniform ensembles}\label{Apptheo:pave}
We can directly follow through the proof of Theorem \ref{theo:randomave} to Eq.~\eqref{eq:canselX}, Eq.~\eqref{eq:C0} and Eq.~\eqref{eq:C1}. The only difference is to replace the two coefficients $\frac{1}{2}$ with $(1-p)$ and $p$ in the expression of $C_1(\vec{x})$ to get
\begin{equation}
\begin{gathered}
    C_0=\prod_{i}\frac{\id_i^{\otimes 2\alpha}+Z_i^{\otimes 4}}{2},\\
    C_1(\vec{x},p)=\Tr{\prod_{e_3}\left((1-p)\id+p\prod_{v_i\in e_3}\left(\left(CZ_{e_3\setminus \{v_i\}}\right)^{\otimes 4}\right)^{x_i}\right)}.
\end{gathered}
\end{equation}
The expectation value of the PL-moment is still the summation
\begin{equation}\label{eq:sump}
\begin{aligned}
    \langle\mathbf{m}_{2}\rangle_{\mc{E}_3^p}
    =&\frac{1}{2^{4n}}\sum_{\vec{x}}\Tr{C_0\cdot C_1(\vec{x},p)}\\
    =&\frac{1}{2^{4n}}\sum_{\vec{x}}\sum_{\mathrm{T}}\left(\bra{\mathrm{T}}C_0\ket{\mathrm{T}}\cdot\bra{\mathrm{T}}C_1(\vec{x},p)\ket{\mathrm{T}}\right).
\end{aligned}
\end{equation}
First, to get nonzero $\Tr{C_0\cdot C_1(\vec{x},p)}$, all the $\mathrm{T}$ should still satisfy 
$\norm{\vec{t_i}}_1=0, \forall i$, i.e., $\bra{\mathrm{T}}C_0\ket{\mathrm{T}}=1$. For a given $\mathrm{T}$, the second term shows 
\begin{equation}\label{eq:app:e3}
    \bra{\mathrm{T}}C_1(\vec{x},p)\ket{\mathrm{T}}=\prod_{e_3}\left(1-p+p\bra{\mathrm{T}}\prod_{i\in e_3}\left(\left(CZ_{e_3\setminus \{i\}}\right)^{\otimes 4}\right)^{x_i}\ket{\mathrm{T}}\right).
\end{equation}
For each edge $e_3$, the corresponding term contributes a multiplication $(1-2p)$ if \begin{equation}
    \bra{\mathrm{T}}\prod_{i\in e_3}\left(\left(CZ_{e_3\setminus \{i\}}\right)^{\otimes 4}\right)^{x_i}\ket{\mathrm{T}}=-1.
\end{equation}
Otherwise, the multiplication is $1$.  The problem is thus transformed to counting the number of $e_3$ such that $\bra{\mathrm{T}}\prod_{i\in e_3}\left(\left(CZ_{e_3\setminus \{i\}}\right)^{\otimes 4}\right)^{x_i}\ket{\mathrm{T}}=-1$, i.e.,
\begin{equation}
    \sum_{v_i\in e_3}\left(x_i\norm{\bigodot_{k\neq i}\vec{t_k}}_1\right)=1
\end{equation}
given $\mathrm{T}$ and $\vec{x}$. 

$\vec{t_i}$ are binary vectors of dimension $4$, with the equivalence consideration that $\vec{t}\Leftrightarrow \vec{1}-\vec{t}$, there are $4$ nonequivalent vectors $\vec{d}_0=(0,0,0,0)$, $\vec{d}_1=(1,1,0,0)$, $\vec{d}_2=(1,0,1,0)$, $\vec{d}_3=(1,0,0,1)$. We adopt $\kappa_k^{\pm}$ to denote the number of $i$ such that $\vec{t_i}=\vec{d}_k$ and $x_i=1/0$. 
Since we should run over all $e_3$ in Eq.~\eqref{eq:app:e3}, the number of such $e_3$ is only related to $\kappa_k^{\pm}$, denoted by $f(\vec{\kappa})$.

With the relation of $\vec{d}_0,\vec{d}_1,\vec{d}_2,\vec{d}_3$ as shown in Eq.~\eqref{eq:app:d}, one can count the number of such $e_3$ and get $f(\vec{\kappa})$ in Eq.~\eqref{eq:fa} as the number of $(1-2p)$ in $\bra{\mathrm{T}}C_1(\vec{x},p)\ket{\mathrm{T}}$.
\begin{equation}
\begin{aligned}
    \langle\mathbf{m}_{2}\rangle_{\mc{E}_3^p}&=\frac{1}{2^{4 n}}\sum_{\vec{x}}\sum_{\mathrm{T}} (1-2p)^{f(\vec{\kappa})}\\
    &=\frac{1}{2^{4n}}\sum_{\vec{\kappa}}{n \choose \vec{\kappa}}\cdot 2^n(1-2p)^{f(\vec{\kappa})}
\end{aligned}
\end{equation}
where the term $2^n$ comes from the $\vec{t}\Leftrightarrow \vec{1}-\vec{t}$ equivalence.

\section{Proofs of the magic of hypergraph states with permutation symmetry}\label{app:sym}
\subsection{Proof of Proposition \ref{th:3-com}}
Our proof is mainly based on the magic formula in Eq.~\eqref{eq:3magic}.
With the permutation symmetry, the structure of the induced hypergraph $G^*_{\vec{x},\vec{z}}$ is only determined by $m=|A_x|$, $m_1=|A_{z,x}|$ and $m_0=|A_{z,-x}|$. There are two cases determined by the parity of $m$, as shown in Fig.~\ref{fig:indGsym} (a).
\begin{itemize}
    \item[a)] $m$ is odd. In this case, if both $v_j,v_k\in A_x$ or $v_j,v_k\in A_{-x}$, the number of $v_i$ such that $x_i=1$ is $m-2$ or $m$, then $b_{j,k}(\vec{x})=\sum_{\{v_i,v_j,v_k\}\in E}x_i=1$. If $v_j\in A_x,\ v_k\in A_{-x}$ or $v_k\in A_x,\ v_j\in A_{-x}$, the number of $v_i$ such that $x_i=1$ is $m-1$, then $b_{j,k}(\vec{x})=\sum_{\{v_i,v_j,v_k\}\in E}x_i=0$. Thus, the edge set $E^{(2)}_{\vec{x}}$ represented by function $b_{j,k}(\vec{x})$ describes two complete graphs of $m$ nodes and $n-m$ nodes, which means that 
    \begin{equation}\label{eq:sepUG}
        \Tr{U(G_{\vec{x},\vec{z}})}=\Tr{U(G^{\text{com}}_{m,m_1})}\Tr{U(G^{\text{com}}_{n-m,m_0})}
    \end{equation}
    where $G^{\text{com}}_{m,m'}$ represents a graph with $m$ vertices, complete $2$-edges and $m'$ $1$-edges. Thus one can calculate the trace with the summation of Boolean functions:
    \begin{equation}
    \begin{aligned}
        \Tr{U(G^{\text{com}}_{m,m'})}=&\sum_{\vec{a}}(-1)^{\sum_{i=1}^{m'}a_i+\sum_{j\neq k}a_j a_k}\\
        =&\sum_{w}\sum_{w'}{{m-m'} \choose w}{m' \choose w'}(-1)^{w'+\frac{1}{2}(w+w')(w+w'-1)}
    \end{aligned}
    \end{equation}
    
    where $w'=|\{i\leq m'|a_i=1\}|$ and $w=|\{i>m'|a_i=1\}|$ represents the number of $a_i=1$ with $z_i=1$ and $z_i=0$ respectively. Notice that we are concerned about not the individual Pauli-Liouville components but the summation of their fourth power and the summation of their absolute value. With the separability in Eq.~\eqref{eq:sepUG}, each term shows
    \begin{equation}
    \begin{aligned}
	\sum_{\vec{z}}\left(\Tr{U(G_{\vec{x},\vec{z}})}\right)^4
        =&\sum_{\vec{z}}\left(\Tr{U(G^{\text{com}}_{m,m_1})}\Tr{U(G^{\text{com}}_{n-m,m_0})}\right)^4\\
	=&\sum_{m_1=0}^m{m \choose m_1}\left(\Tr{U(G^{\text{com}}_{m,m_1})}\right)^4\cdot\sum_{m_0=0}^{n-m}{n-m \choose m_0}\left(\Tr{U(G^{\text{com}}_{n-m,m_0})}\right)^4,
    \end{aligned}
    \end{equation}
    and similar for that of absolute value. After some tedious calculation, one can derive
    \begin{equation}
    \begin{gathered}
	\sum_{m'=0}^m{m \choose m'}\left(\Tr{U(G^{\text{com}}_{m,m'})}\right)^4=\frac{2^{4m}}{2^{m+\frac{(-1)^m-1}{2}}},\\
	\sum_{m'=0}^m{m \choose m'}\left|\Tr{U(G^{\text{com}}_{m,m_1})}\right|=2^{\frac{3m}{2}+\frac{(-1)^m-1}{4}}.
    \end{gathered}
    \end{equation}
    Since $m$ is odd, one can obtain
    \begin{equation}
    \begin{aligned}
	\sum_{\vec{z}}\left(\Tr{P_{\vec{x},\vec{z}}\kb{G}}\right)^4=&\frac{1}{2^{4n}}\cdot\frac{2^{4m}}{2^{m+\frac{(-1)^m-1}{2}}}\cdot\frac{2^{4(n-m)}}{2^{n-m+\frac{(-1)^{n-m}-1}{2}}}=\frac{1}{2^{n-\frac{3+(-1)^n}{2}}},
    \end{aligned}
    \end{equation}
    and 
    \begin{equation}
    \begin{aligned}
	\sum_{\vec{z}}\left|\Tr{P_{\vec{x},\vec{z}}\kb{G}}\right|=&\frac{1}{2^{n}}\cdot2^{\frac{3m}{2}+\frac{(-1)^m-1}{4}}\cdot2^{\frac{3(n-m)}{2}+\frac{(-1)^{n-m}-1}{4}}=2^{\frac{n}{2}-\frac{3+(-1)^n}{4}}.
    \end{aligned}
    \end{equation}

    \item[b)] $m$ is even. Oppositely, in this case, $b_{j,k}(\vec{x})=\sum_{\{v_i,v_j,v_k\}\in E}x_i=1$ only as $v_j\in A_x,\ v_k\in A_{-x}$ and vice versa, so the edge set $E^{(2)}_{\vec{x}}$ represented by the function $b_{j,k}(\vec{x})$ describes a \emph{complete bipartite graph}. Denote the cardinality of the following sets as  $w_{00}=|\{i|a_i=1,x_i=0,z_i=0\}|$, $w_{01}=|\{i|a_i=1,x_i=0,z_i=1\}|$, $w_{10}=|\{i|a_i=1,x_i=1,z_i=0\}|$, $w_{11}=|\{i|a_i=1,x_i=1,z_i=1\}|$. One can utilize the symmetry of the complete bipartite graph to demonstrate 
    \begin{equation}
    \begin{aligned}
        &\Tr{P_{\vec{x},\vec{z}}\kb{G}} \\
        =&2^{-n}\sum_{w_{01},w_{11}}\sum_{w_{00},w_{10}}{m_0 \choose w_{01}}{m_1 \choose w_{11}}{n-m-m_0 \choose w_{00}}{m-m_1 \choose w_{10}}(-1)^{w_{01}+w_{11}+(w_{01}+w_{00})(w_{11}+w_{10})}\\
	=&2^{-n}\sum_{w_{01},w_{11}}\left[{m_0 \choose w_{01}}{m_1 \choose w_{11}}(-1)^{w_{01}+w_{11}} \sum_{w_{00},w_{10}}{n-m-m_0 \choose w_{00}}{m-m_1 \choose w_{10}}(-1)^{(w_{01}+w_{00})(w_{11}+w_{10})}\right]\\
	=&2^{-n}\sum_{w_{01},w_{11}}{m_0 \choose w_{01}}{m_1 \choose w_{11}}(-1)^{w_{01}+w_{11}}\cdot\left\{\begin{array}{l}
				(-1)^{c_{01}c_{11}}(1+(-1)^{c_{01}})^{m-m_1},\quad m_0=n-m\\
				(-1)^{c_{01}c_{11}}(1+(-1)^{c_{10}})^{n-m-m_0},\quad m_1=m\\
				2^{n-m_0-m_1-1}, \quad m_0\neq n-m,\ m_1\neq m
				\end{array}\right.\\
	=&\left\{\begin{array}{l}
	        \frac{1}{2},\quad \text{when }m\neq0\&n,\ (m_0=0|(n-m))\&(m_1=0|m)\\
				1,\quad \text{when }\{m=0,\ m_0=0\}|\{m=n,\ m_1=0\}\\
				0,\quad o.w.\\
				\end{array}\right.
    \end{aligned}
    \end{equation}
When $m=0$ or $m=n$,
    \begin{equation}
    \begin{gathered}
	\sum_{\vec{z}}\left(\Tr{P_{\vec{x},\vec{z}}\kb{G}}\right)^4=1.\\
 	\sum_{\vec{z}}\left|\Tr{P_{\vec{x},\vec{z}}\kb{G}}\right|=1.
    \end{gathered}
    \end{equation}
When $m\neq 0$ and $m\neq n$,
    \begin{equation}
    \begin{gathered}
	\sum_{\vec{z}}\left(\Tr{P_{\vec{x},\vec{z}}\kb{G}}\right)^4=4\cdot\frac{1}{2^4}=\frac{1}{4}.\\
	\sum_{\vec{z}}\left|\Tr{P_{\vec{x},\vec{z}}\kb{G}}\right|=4\cdot\frac{1}{2}=2.
    \end{gathered}
    \end{equation}
\end{itemize}

With the former analysis, one can obtain the Pauli-Liouville moment
\begin{equation}
\begin{aligned}
    \mathbf{m}_{2}(\ket{G_{3\textrm{-com}}})=&2^{-n}\sum_{\vec{x}}\sum_{\vec{z}}\left(\Tr{P_{\vec{x},\vec{z}}\kb{G}}\right)^4\\
    =&2^{-n}\left(\sum_{m\ \text{odd}}\frac{1}{2^{n-\frac{3+(-1)^n}{2}}}+\sum_{m\ \text{even}}\frac{1}{4}+\frac{3}{4}+\frac{3}{4}\cdot\frac{(1+(-1)^{n})}{2}\right)\\
    =&\frac{1}{8}+\frac{7}{2^{n+\frac{3-(-1)^n}{2}}},
\end{aligned}
\end{equation}
and
\begin{equation}
\begin{aligned}
    \mathbf{m}_{1/2}(\ket{G_{3\textrm{-com}}})=&2^{-n}\sum_{\vec{x}}\sum_{\vec{z}}\left|\Tr{P_{\vec{x},\vec{z}}\kb{G}}\right|^4\\
    =&2^{-n}\left(\sum_{m\ \text{odd}}2^{\frac{n}{2}-\frac{3+(-1)^n}{4}}+\sum_{m\ \text{even}}2-1-1\cdot\frac{(1+(-1)^{n})}{2}\right)\\
    =&2^{\frac{2n-7-(-1)^n}{4}}+1-2^{-n+\frac{1+(-1)^n}{2}}.
\end{aligned}
\end{equation}
The corresponding $\mathbf{M}_{2}(\ket{G_{3\textrm{-com}}})$ and $\mathbf{M}_{1/2}(\ket{G_{3\textrm{-com}}})$ can be calculated directly.

\subsection{Proof of Proposition \ref{th:n-com}}
The structure of the induced hypergraphs are shown in Fig.~\ref{fig:indGsym} (b).
In order to calculate $\Tr{U(G_{\vec{x},\vec{z}})}$, consider each term $B(\vec{a})=\bra{\vec{a}}U(G_{\vec{x},\vec{z}})\ket{\vec{a}}$. For convenience, let $\vec{a}=(\vec{a}_1,\vec{a}_0)$ where $\vec{a}_1$ and $\vec{a}_0$ represent the qubit of vertices in $A_x$ and $A_{-x}$ respectively. It is not hard to see that as long as $\vec{a}_0\neq\vec{1}$, the edges in $E^{(2)}_{\vec{x}}$ will give no addition phase. 

When $A_x\neq\emptyset$, there are two different cases:
\begin{itemize}
    \item[a)] $m_0\neq 0$ or $m_1\neq 0$: Define $\vec{a}_{-i}=\vec{a}+\hat{i}$ where $\hat{i}$ is the unit vector corresponding to the vertex $v_i\in A_{z,-x}$ or $v_i\in A_{z,x}$. Then one only needs to consider the case when $\vec{a}_0=\vec{1}$ since otherwise $B(\vec{a})$ adding $B(\vec{a}_{-i})$ will vanish to $0$. With the same consideration, one can find that $B(\vec{a})$ adding $B(\vec{a}_{-i})$ won't vanish iff $\vec{a}=(\vec{0},\vec{1})$ and $\vec{a}=(\vec{1},\vec{1})$. They have the same phase when $m_1$ is even, and the opposite phase when $m_1$ is odd. Thus, 
    \begin{equation}
        \Tr{U(G_{\vec{x},\vec{z}})}=2\left(1+(-1)^{m_1}\right).
    \end{equation}
    \item[b)] $m_0=m_1=0$: In this case, $B(\vec{a})=1$ for nearly all the vectors $\vec{a}$. One can find that $B(\vec{a})=-1$ iff $\vec{a}=(\vec{0},\vec{1})$ and $\vec{a}=(\vec{1},\vec{1})$. Thus, 
    \begin{equation}
        \Tr{U(G_{\vec{x},\vec{z}})}=2^n-4.
    \end{equation}
\end{itemize}
When $A_x=\emptyset$, i.e., $m=0$, it is slightly different. In this case, since there are only $Z$ gates, $\Tr{U(G_{\vec{x},\vec{z}})}=0$ for all nonzero $m_0$, and $\Tr{U(G_{\vec{x},\vec{z}})}=2^n$ for $m_0=0$. As a result, the PL-moment with $\alpha=2$ shows
\begin{equation}\label{eq:ncardisum}
\begin{aligned}
    \mathbf{m}_{2}(\ket{G_{n\textrm{-com}}})=&\frac{1}{2^{5n}}\sum_{\vec{x}}\sum_{\vec{z}}\left(\Tr{U(G_{\vec{x},\vec{z}})}\right)^4\\
    =&\frac{1}{2^{5n}}\left(\left(2^n-1\right)\left(\left(\frac{2^n}{2}-1\right)\cdot 4^4+1\cdot\left(2^n-4\right)^4\right)+1\cdot1\cdot\left(2^n\right)^4\right)\\
    =&1-16\cdot2^{-n}+112\cdot2^{-2n}-224\cdot2^{-3n}+128\cdot2^{-4n},
\end{aligned}
\end{equation}
and the PL-moment with $\alpha=\frac{1}{2}$ shows
\begin{equation}\label{eq:ncardisum2}
\begin{aligned}
    \mathbf{m}_{1/2}(\ket{G_{n\textrm{-com}}})=&\frac{1}{2^{2n}}\sum_{\vec{x}}\sum_{\vec{z}}\left|\Tr{U(G_{\vec{x},\vec{z}})}\right|\\
    =&\frac{1}{2^{2n}}\left(\left(2^n-1\right)\left(\left(\frac{2^n}{2}-1\right)\cdot 4+1\cdot\left(2^n-4\right)\right)+1\cdot1\cdot\left(2^n\right)\right)\\
    =&3-10\cdot2^{-n}+8\cdot2^{-2n}.
\end{aligned}
\end{equation}
In the second line of Eq.~\eqref{eq:ncardisum} and Eq.~\eqref{eq:ncardisum2}, the term $\left(2^n-1\right)$ represents the case $A_x\neq\emptyset$, and the term $\left(\frac{2^n}{2}-1\right)$ represents the number of $\Tr{U(G_{\vec{x},\vec{z}})}=4$ when $m_0\neq 0$ or $m_1\neq 0$. The corresponding $\mathbf{M}_{2}(\ket{G_{3\textrm{-com}}})$ and $\mathbf{M}_{1/2}(\ket{G_{3\textrm{-com}}})$ can be calculated directly.

\end{appendix}

\end{document}